\documentclass[9pt,usenames,dvipsnames]{article} 
\usepackage{amsmath,amssymb} 
\usepackage{caption}
\usepackage{placeins}
\usepackage[usenames]{color}
\usepackage{bm}
\usepackage[draft]{changes}
\usepackage{ying}

\usepackage{comment}
\usepackage{multirow}
\usepackage{rotating}
\usepackage{fullpage}
\usepackage{authblk}
\usepackage{enumerate}
\usepackage{geometry}
\usepackage{float}
\usepackage{subcaption}
\usepackage{tikz,tabularx}
\usetikzlibrary{patterns}
\usetikzlibrary{arrows}
\usetikzlibrary{tikzmark, positioning, fit, shapes.misc}

\usetikzlibrary{decorations.pathreplacing, calc}
\tikzset{brace/.style={decorate, decoration={brace}},
  brace mirrored/.style={decorate, decoration={brace,mirror}},
}

\newcolumntype{g}{>{\columncolor{red}}c}

\usepackage{graphicx,amssymb}
\usepackage{xcolor}

\usepackage[colorlinks,
            linkcolor=red,
            anchorcolor=blue,
            citecolor=blue
            ]{hyperref}
\usepackage{algorithm}
\usepackage{algorithmic}

\newcommand{\cs}{\textnormal{CS}}
\newcommand{\oracle}{*}

\let\hat\widehat
\let\tilde\widetilde

\def \calib {\textrm{calib}}

\def \test {\textrm{test}}

\def \quant {\textrm{Quantile}}

\def \fdp {\textnormal{FDP}}
\def \fcp {\textnormal{FCP}}
\def \ks {\textrm{KS}}
\def \hc {\textrm{HC}}

\def \bj {\textrm{BJ}}

\theoremstyle{plain}

\usepackage{hyperref}
\def\@#1\@{\begin{align}#1\end{align}}
\def\$#1\${\begin{align*}#1\end{align*}}

\usepackage{xcolor}

\usepackage{color-edits}
\addauthor[Ying]{ying}{magenta}

\title{Everywhere Valid Bounds on False Discovery Proportions\\ in Conformal Inference}

\author[1]{Ziang Song}
\author[2]{Ying Jin}
\author[1,3]{Emmanuel J. Cand\`es}

\affil[1]{Department of Statistics, Stanford University}
\affil[2]{Department of Statistics and Data Science, University of Pennsylvania}
\affil[3]{Department of Mathematics, Stanford University}

\date{}

\begin{document}

\maketitle                                                                                                                                                                                                                                                                                                                                                   




\begin{abstract}
Modern applications of conformal inference to multiple testing problems---such as outlier detection and candidate selection---often involve selecting test samples whose conformal p-values fall below a threshold. 
The quality of such methods is often measured by the false discovery proportion (FDP), defined as the fraction of incorrect selections.
Existing approaches typically control the expected value of the FDP, using methods such as the Benjamini–Hochberg procedure. This approach fails to provide high-probability bounds on the realized false discovery proportion and invalidates statistical guarantees if the rejection threshold is selected {\em after} inspecting the data. This paper establishes finite-sample, distribution-free upper bounds on the FDP that hold simultaneously over all possible rejection thresholds, enabling arbitrary post hoc selection of the threshold. Simultaneous validity is achieved by constructing a high-probability envelope for the empirical distribution function of `null' conformal p-values by sampling from their joint distribution. Furthermore, our framework allows practitioners to modulate the envelope's shape, thereby producing tight bounds in rejection regions of primary interest. We use this flexible approach to derive simultaneous FDP upper bounds for both outlier detection and conformal selection. We demonstrate through synthetic and real-data experiments that the resulting bounds are both valid and substantially less conservative than those derived from existing approaches.
\end{abstract}

    





\section{Introduction}
Conformal p-values have  become a fundamental tool for distribution-free hypothesis testing~\citep{vovk2005algorithmic,bates2023testing}. Their appeal is that they can be constructed using arbitrary predictive models while offering finite-sample guarantees. These properties have enabled broad applications ranging from outlier detection \citep{bates2023testing, marandon2024adaptive} to  scientific discovery~\citep{jin2023aselection,jin2023bmodel,jin2026txconformal}, and decision-making in which practitioners wish to select a subset of test samples obeying a user-specified criterion. 

A common strategy in these applications is to select test samples whose p-values (or scores) fall below a threshold. 
Such a selection rule can be viewed as a binary classifier that predicts whether each test sample is of interest. Thresholding decision rules naturally induce a family of binary decisions; as the threshold varies, the resulting decisions trace out curves of binary error metrics across the test samples, analogous to precision-recall curves in classification~\citep{davis2006relationship,powers2020evaluation}.

The reliability of such selections is naturally quantified by the False Discovery Proportion (FDP) \citep{benjamini1995controlling}, defined as the ratio of false discoveries to the total number of selections. In classification where we would like to single out samples with a label of interest, the FDP would be the fraction of predicted positives that are incorrect. This is simply one minus the precision. 
Controlling the FDP/precision is often crucial because each selected unit typically triggers costly follow-up actions, such as experimental validation, manual inspection, or downstream scientific investigation. We illustrate this through two motivating examples. 
\paragraph{Outlier Detection} In outlier detection \citep{bates2023testing, marandon2024adaptive}, one has access to a calibration set $\cD_\calib = \{X_i\}_{i = 1}^n$ where each $X_i$ is independently and identically distributed (i.i.d.) from an unknown distribution $\cP_0$. Given test data $\cD_\test = \{X_{n+j}\}_{j=1}^m$, the goal is to detect which points are likely to be outliers, i.e., drawn from a  distribution different from $\cP_0$. This task arises in fraud detection, anomaly detection in manufacturing, and the detection of unusual patterns in scientific data. The conformal inference framework provides a principled approach by calculating p-values $\{p_j\}_{j=1}^m$ that quantify how well each test sample conforms to the calibration data; smaller values indicate stronger evidence of being an outlier. Common  procedures~\citep{bates2023testing, marandon2024adaptive} use the thresholding rule $\cR(t) = \{j: p_j \le t\}$ in which $t\in [0,1]$
is a data-driven threshold computed via the Benjamini-Hochberg (BH) procedure~\citep{benjamini1995controlling}.  Let $\cH_0 = \{j: X_{n+j} \sim \cP_0\}$ denote the index set of inliers so that a false discovery means a selection from $\cH_0$. The FDP is then 
\@\label{eq:def_fdp_t}
\fdp(t) = \frac{|\cR(t) \cap \cH_0|}{\max\{1, |\cR(t)|\}}.
\@
The numerator above $|\cR(t) \cap \cH_0|$ is the number of inliers incorrectly selected as outliers. 
Maintaining a low FDP is often desired since the identified outliers often call for follow-up investigations, and a low FDP indicates that these costly investigations are rarely wasted~\citep{benjamini1995controlling,efron2012large}.


\paragraph{Drug Discovery} Conformal p-values also arise in drug discovery pipelines~\citep{jin2023aselection, jin2023bmodel}, where early-stage screening aims to identify promising drug candidates from a large library.  
Machine learning predictions are useful in an initial screening stage to shortlist candidates; in later, more costly stages, only those shortlisted are carefully investigated to confirm promising cases. In this problem, the calibration set is $\cD_{\calib} = \{(X_i, Y_i)\}_{i=1}^n$ where $X_i$ is the biochemical feature of the $i$th drug, and $Y_i$ is the experimentally evaluated property, such as the binding affinity to a disease target. For new test drugs $\{X_{n+j}\}_{j=1}^m$  with unknown properties $\{Y_{n+j}\}_{j=1}^m$, the goal is to find those $Y_{n+j}$'s that exceed a pre-specified threshold $c_{n+j}$. 

The conformal selection framework \cite{jin2023aselection} constructs conformal p-values to make such selections. Here, a smaller p-value $p_j$ provides strong evidence for a large outcome. The selection set is given by $\cR(t) = \{j: p_j \le t\}$, where $t\in \RR$ is determined by the BH procedure at a desired nominal level~\citep{benjamini1995controlling}. A false discovery is here a selected  drug  whose response $Y_{n+j}$ is no larger than $c_{n+j}$, and thus the FDP is as in~\eqref{eq:def_fdp_t} with $\cH_0 = \{j: Y_{n+j} \le c_{n+j}\}$. Controlling the FDP ensures a high precision or `hit rate', which is critical to prioritizing experimental validation efforts.

\vspace{-0.5em}
\paragraph{Limitations of existing methods} 
Existing methods mainly focus on controlling the false discovery rate (FDR), which is the expected value of the FDP averaged over the randomness in the data. For a pre-specified error rate $\alpha\in(0,1)$, these methods determine a data-dependent threshold $t(\alpha)$  such that $\EE[\fdp(t(\alpha))]\leq \alpha$. These approaches suffer from two major limitations: 
\begin{itemize}
\item \emph{What's the realized FDP?} The realized FDP on a particular dataset may substantially exceed its expected value, even when the FDR is controlled \citep{kluger2024central}. FDR control only guarantees that the FDP is small \emph{on average} over repeated realizations of the data, and thus provides no guarantee for the dataset at hand. However, practitioners often desire guarantees for the \emph{data at hand}. They might want to know that with 90\% or 95\% chance, the FDP is below 20\%. It may be of little comfort to know that over independent realizations of the data, the FDR may be at most 10\% since this does not say anything about {\em their} study.  

    \item \emph{Fixed-$\alpha$ guarantees only.} These methods require the nominal level $\alpha$ to be fixed in advance, which limits their flexibility. A practitioner may wish to choose $\alpha$ adaptively after inspecting the data. For example, if $\alpha = 0.05$ yields very few discoveries, they may want to increase it to $0.1$ to include more discoveries. Such post hoc adjustments, however, invalidate FDR control, see Figure~\ref{fig:adaptive_example} in the Appendix for a concrete illustration.
\end{itemize} 
 
In this work, we address these limitations by proposing a framework to construct simultaneously valid, high-probability upper bounds on the FDP that remain valid for all thresholds $t\in [0,1]$. This enables flexible, data-driven exploration while maintaining rigorous statistical guarantees. Given a confidence level $\delta\in (0,1)$, we provide a data-driven function $\hat{\text{FDP}}(t)$ obeying 
$$
\PP\Big( \fdp(t) \leq \hat{\fdp}(t),~\forall t\in [0,1]\Big) \geq 1- \delta.
$$
That is, with probability 90\%, say, the observed curve $\hat{\fdp}(t)$ everywhere dominates the unknown FDP curve $\fdp(t)$.

Such simultaneous FDP bounds have been extensively studied in multiple testing, but existing approaches are not tailored to the finite-sample dependence structure of conformal p-values. Our approach instead leverages the exact joint distribution of null conformal p-values induced by exchangeability.
Figure \ref{fig:intro} previews our results on a real drug-target interaction task, with the full experimental setup deferred to Section \ref{sec:cs-experiment}. Briefly, the goal is to select drug-target pairs whose binding affinity exceeds a given threshold. The figure demonstrates that our upper bounds consistently lie above the true FDP across all selection sets, ensuring valid error control regardless of the number of selections. This not only offers practitioners reliable, instance-wise control of FDP but also allows them to adaptively vary the selection threshold while maintaining rigorous guarantees.

\begin{figure}
    \centering
    \begin{subfigure}{0.5\linewidth}
        \includegraphics[width=\linewidth]{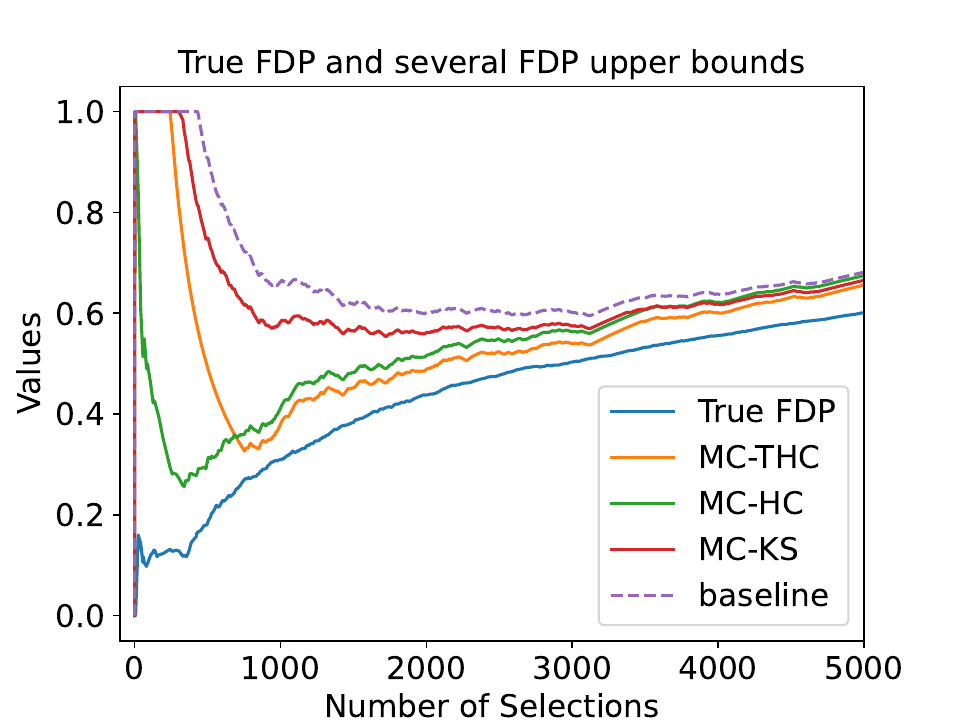}
        \caption{Single realization}
        \label{fig:intro_single}
    \end{subfigure}
    \hfill
    \begin{subfigure}{0.45\linewidth}
        \includegraphics[width=\linewidth]{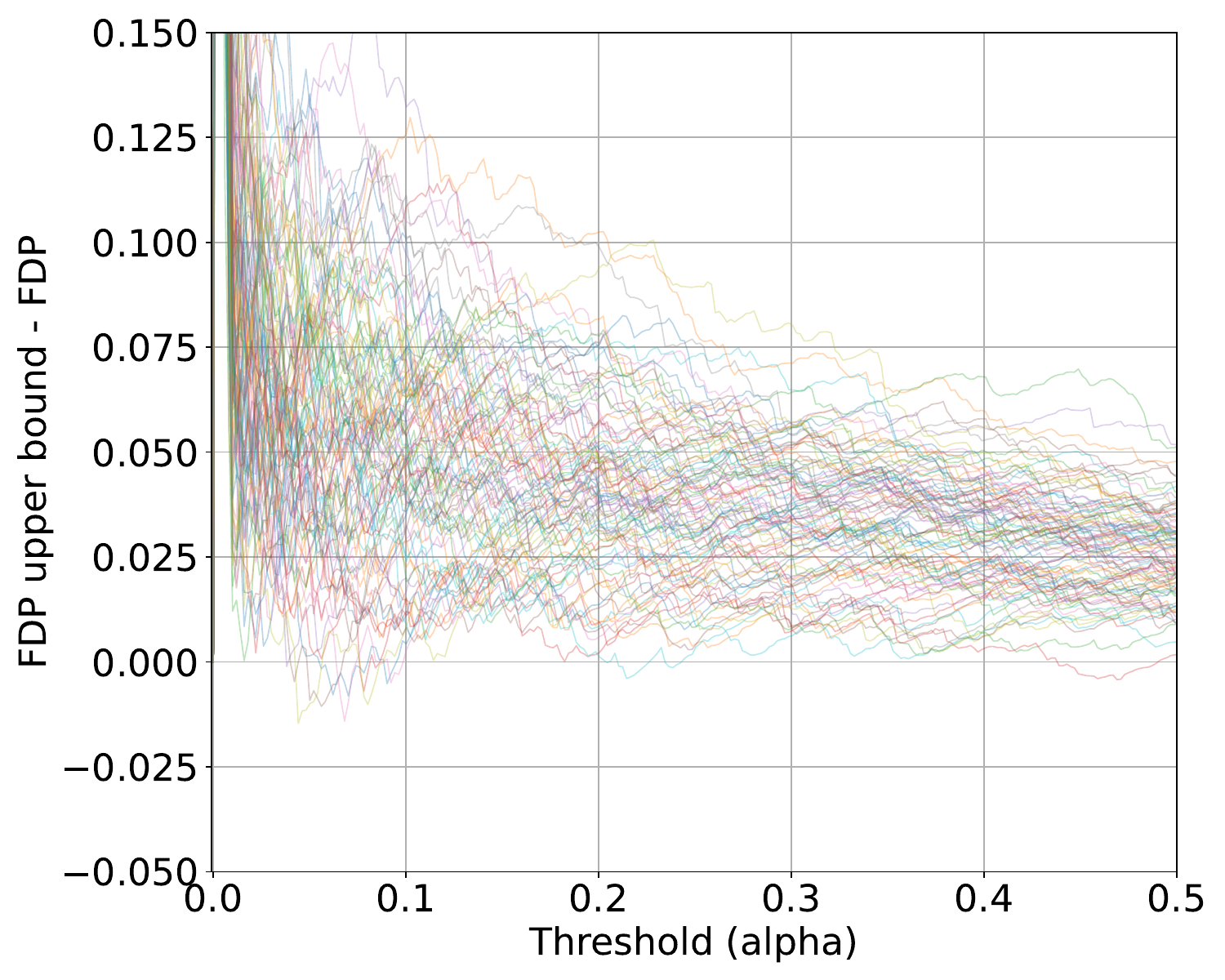}
        \caption{Residuals over 100 experiments}
        \label{fig:intro_multiple}
    \end{subfigure}
    \caption{\textbf{Simultaneous FDP bounds in a drug-target interaction task.}
\textbf{(a)} One realization of the true FDP (blue) and simultaneous upper bounds constructed by our method  with the following statistics: Truncated Higher Criticism (MC-THC), Higher Criticism (MC-HC), and Kolmogorov-Smirnov (MC-KS). The dashed line is the upper bound adapted from \citep{gazin2024transductive}. 
\textbf{(b)} Residuals (upper bound minus true FDP) across 100 independent experiments. Positive values indicate valid coverage. The $x$-axis is the selection threshold $t \in [0,1]$. 
   }
    \label{fig:intro}
\end{figure}

Our main contributions are threefold:
\begin{itemize}
    \item We establish a novel high-probability upper bound on the empirical cumulative distribution function (CDF) of conformal uniform variables; see Section \ref{sec:upper-bound-oracle}. Empirically, our bound is tighter than existing bounds based on DKW-type concentration inequalities \cite{gazin2024transductive}. As a direct consequence, we obtain simultaneous upper bounds on the false coverage proportion in conformal prediction that are substantially less conservative than existing approaches; see Section \ref{sec:fcp_bounds}. 

    \item  
    For outlier detection, we develop an upper bound for the FDP of selection sets holding at all rejection thresholds simultaneously; see Section \ref{sec:outlier-detection}. Our approach combines our improved bound on the empirical CDF with two different tightening techniques from \citep{hemerik2019permutation} and \citep{gazin2024transductive}, respectively. Through extensive simulation studies, we demonstrate the practical effectiveness of our quantitative results. 
    
    \item For conformal selection, we establish, to the best of our knowledge, the first finite-sample FDP bounds that hold uniformly over all rejection thresholds; see Section \ref{sec:conformal-selection}. We validate our theoretical results on the DAVIS dataset \citep{davis2011comprehensive}, and show that our framework allows flexible exploration while providing rigorous error control guarantees. 
\end{itemize}

Underlying all this is a general framework for deriving high-probability upper bounds on random functions via sampling from the joint distribution of null conformal p-values. Compared with approaches based on concentration inequalities, this approach yields substantially sharper bounds while preserving finite-sample validity. 

\paragraph{Code availability.}
Code for reproducing the numerical experiments and figures is available at
\url{https://github.com/sza919/everywhere-valid-fdp-bounds-in-conformal-inference}.




\section{Related work}
Our work lies at the intersection of four lines of literature: conformal inference for multiple testing,  false coverage proportion control in conformal prediction, post-hoc simultaneous FDP inference, and Monte Carlo/global-envelope calibration.

\paragraph{Conformal inference for multiple testing.}
A growing line of work uses conformal inference to construct finite-sample valid $p$-values for multiple testing problems. \cite{bates2023testing} introduced conformal $p$-values for outlier detection and showed how they can be combined with multiple-testing procedures to control the false discovery rate. This line has since been extended to more powerful and adaptive outlier or novelty detection methods, including integrative conformal $p$-values using labeled outliers or side information \citep{liang2022integrative}, adaptive novelty detection with FDR guarantees \citep{marandon2024adaptive}, and full-conformal novelty detection \citep{lee2025full}. In a related direction, \cite{jin2023aselection} introduced conformal selection, where conformal $p$-values are used to select units whose unobserved outcomes exceed user-specified thresholds while controlling the false discovery rate. Subsequent work has extended conformal selection to covariate shift via weighted conformal $p$-values, optimized and adaptive selection procedures, multivariate responses, and post-hoc selection based on $e$-variables \citep{jin2023bmodel, bai2024optimized,gui2025acs,bai2025multivariate,zhu2026beyond}. These works provide finite-sample FDR-type guarantees for conformal outlier detection and conformal selection, typically for a pre-specified testing or selection rule. Our work is complementary: we provide high-probability upper bounds on the realized false discovery proportion, uniformly over all rejection thresholds, thereby allowing post-hoc threshold selection.


\paragraph{False coverage proportion control and exact conformal p-value laws.}
Our work is also closely related to recent work on simultaneous coverage errors in conformal prediction. Rather than controlling only the marginal coverage probability of each prediction set, this literature studies the realized fraction of test points whose prediction sets fail to cover, often called the false coverage proportion (FCP). \citep{marques2025universal} derived the exact finite-sample distribution of the empirical coverage of split conformal prediction sets for a batch of future observations. In the transductive setting, \citep{gazin2024transductive} characterized the joint distribution of conformal $p$-values through a P\'olya-urn representation and derived concentration inequalities for their empirical distribution function. Related asymptotic results further characterize the fluctuations of coverage errors in conformal prediction \citep{gazin2024asymptotics}. Building on this distributional viewpoint, \citep{blain2025false} proposed CoJER, a procedure that controls the FCP in probability using the transductive conformal $p$-value distribution. These papers and ours share the same starting point: the joint distribution of conformal $p$-values contains substantially more information than marginal validity alone. Existing simultaneous FCP guarantees based on this viewpoint are asymptotic or rely on concentration bounds, whereas our procedure samples directly from the finite-sample conformal $p$-value distribution to obtain simultaneous bounds that remain valid for any finite number of simulations. In addition, our inferential target is the false discovery proportion in conformal
multiple testing, rather than the false coverage proportion of conformal prediction sets.

\paragraph{Post-hoc simultaneous FDP bounds.}
Our objective is rooted in the literature on simultaneous false discovery proportion bounds. Beyond classical FWER and FDR control, earlier work studied tail-probability guarantees such as $k$-FWER control and exceedance control of the FDP \citep{genovese2006exceedance,romano2007control}. These guarantees can be viewed as precursors to modern simultaneous FDP inference, where the goal is to provide a high-probability bound for the realized FDP uniformly over a family of rejection sets. A prominent formulation is the joint error rate (JER) framework \cite{blanchard2020post}, which provides post-hoc confidence bounds on the number of false positives. In several settings, such guarantees are implemented through a critical vector or envelope: one constructs simultaneous upper bounds on the number of false discoveries along an ordered rejection path and then extends these bounds to obtain FDP bounds for data-dependent rejection sets. This perspective appears in permutation-based simultaneous FDP bounds \citep{meinshausen2006false,hemerik2019permutation} and is closely related to closed-testing and Simes-type approaches \citep{goeman2011multiple,goeman2019simultaneous,goeman2021only,vesely2023permutation}, reference-family methods \citep{blanchard2020post}, and bounds for structured, regression, and online settings \citep{katsevich2020simultaneous}. Closest to our conformal setting, \citep{gazin2024transductive} also use the exact null conformal $p$-value distribution to derive uniform in-probability guarantees, including FDP-type bounds for novelty detection. Our work differs in the calibration step: rather than relying on concentration inequalities, we sample directly from its exact finite-sample law to construct high-probability envelopes. This yields sharper finite-sample, distribution-free simultaneous FDP bounds tailored to conformal outlier detection and conformal selection.

\paragraph{Monte Carlo calibration and global envelopes.}
Methodologically, our construction is connected to exact Monte Carlo tests and global envelope methods. Classical Monte Carlo tests calibrate an observed statistic by simulating its distribution under the null and comparing the observed value with its simulated counterparts \citep{dwass1957modified,hope1968simplified,besag1991sequential,dufour2006monte}. Global envelope tests extend this idea from scalar statistics to functional statistics by constructing simultaneous envelopes over an index set \citep{myllymaki2017global}. Our method can be viewed as a conformal analogue of such a global envelope construction. The indexed object is the empirical distribution function of null conformal $p$-values, the index is the rejection threshold, and the reference distribution is the exact finite-sample joint law of conformal $p$-values. Sampling from this law allows us to build an envelope for the null empirical process, which then translates into a simultaneous FDP upper bound. Thus, the novelty is not Monte Carlo calibration alone, but its combination with the exact conformal $p$-value distribution to obtain finite-sample, distribution-free FDP certificates.

\section{Basic properties of conformal uniform variables}

\subsection{Conformal uniform variables}
Following the split conformal prediction framework~\citep{vovk2005algorithmic}, we assume access to a fully observed calibration set $\cD_\calib = \{Z_1, Z_2, \ldots, Z_n\}$ and a set of partially observed test units $\cD_\test = \{Z_{n+1}, Z_{n+2}, \ldots, Z_{n+m}\}$. Each $Z_{i}\in \mathcal{Z}$ may represent either a feature-response pair or a feature vector alone depending on the specific context. The conformity score, denoted as $s(\cdot)$, is a function that assigns a real number to each data point that measures its ``conformity'' with the rest of the dataset. We assume this score function is obtained by a training process independent of both the calibration and test data. For example, in outlier detection, $s(Z_i)$ may be a one-class SVM classifier~\citep{scholkopf1999support} assigning a high score to $Z_i$ if $Z_i$ is not likely to be an outlier and low score otherwise.
To simplify notation and discussion, throughout, we assume no ties among scores:

\begin{assumption}[No Tie]\label{asmp:no-tie}
    $\mathbb{P}(s(Z_i) = s(Z_j)) = 0$ for all $i \neq j$.
\end{assumption}
We rely on the standard exchangeability assumption to characterize the null distribution:
\begin{assumption}[Exchangeability]\label{asmp:exchangeability}
The combined set of calibration and test data points, $Z_1, \dots, Z_{n+m}$, is exchangeable. Formally, for any permutation $\sigma$ of the indices $\{1, \dots, n+m\}$, the joint distribution of the permuted sequence $(Z_{\sigma(1)}, \dots, Z_{\sigma(n+m)})$ is identical to that of the original sequence $(Z_1, \dots, Z_{n+m})$.
\end{assumption}
Under exchangeability, the test scores are exchangeable with the calibration scores. This allows us to define the conformal uniform variables, which are effectively the normalized ranks of the test samples:
\begin{definition}[Conformal uniform variables]
    For a test sample $Z_{n+j}$, the conformal uniform variable is defined as:
    \begin{equation}
    u_j = \frac{\sum_{i=1}^n \ind\set{s(Z_i) < s(Z_{n+j})} +   U_j}{n+1},
    \label{eqn:intro-conformal-p-values}
    \end{equation}
where $U_1,\ldots,U_m \stackrel{\text{i.i.d.}}{\sim} \text{\em Unif}[0,1]$ are independent of the data. 
\end{definition}
Under Assumption \ref{asmp:no-tie} and \ref{asmp:exchangeability}, each $u_{j}$ is marginally uniformly distributed over $[0,1]$ and this is the same quantity as the null conformal p-value from \cite{gazin2024transductive}. More importantly, we shall see that the vector $\mathbf{u}=(u_{1},...,u_{m})$ admits a tractable joint distribution that does not depend on the underlying data-generating mechanism (Proposition \ref{prop:distribution-oracle-p-values}). We denote this joint distribution by $\mathcal{P}_{n,m}$. This object plays a central role below as it shall serve as the reference distribution for constructing uniform upper bounds on empirical distribution functions.

\subsection[Joint distribution of conformal uniform variables]{Joint distribution of conformal uniform variables}

We revisit a key property of the conformal uniform variables~\eqref{eqn:intro-conformal-p-values}, assuming no ties among the scores. 
Proposition~\ref{prop:distribution-oracle-p-values} below is a well-known result~\citep{gazin2024transductive}, whose proof is omitted for brevity. 
\begin{prop}\label{prop:distribution-oracle-p-values}
    Let \(\{T_i\}_{i=1}^{n+m}\) be independent and identically distributed as
    \(\mathrm{Unif}[0,1]\), and let \(\{U_j\}_{j=1}^{m}\) be independent and
    identically distributed as \(\mathrm{Unif}[0,1]\), independent of
    \(\{T_i\}_{i=1}^{n+m}\). Set 
 \begin{equation}\label{eq:def_qj}
    q_j = \frac{\sum_{i=1}^n \ind\{T_i < T_{n+j}\} + U_j}{n+1}, \quad j = 1, \ldots, m.
    \end{equation}
Then the distribution of $(q_1, \ldots, q_m)$ is identical to the distribution $\cP_{n,m}$ of $(u_1, \ldots, u_m)$ \eqref{eqn:intro-conformal-p-values}.
\end{prop}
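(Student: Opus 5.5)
The plan is to show that the joint law of $(u_1,\ldots,u_m)$ depends on the data only through the relative order of the $n+m$ scores $S_i = s(Z_i)$, and that this order is a uniformly random permutation. The same then holds for the $T_i$.

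First, $u_j$ is a deterministic function of the rank vector $R=(R_1,\ldots,R_{n+m})$ of $(S_1,\ldots,S_{n+m})$ together with the independent $U_j$'s. Indeed, $\sum_{i\le n}\ind\{S_i<S_{n+j}\}$ is the number of calibration indices whose rank is below $R_{n+j}$. So we can write $\mathbf u = F(R, U_1,\ldots,U_m)$ for a fixed measurable map $F$. By exactly the same formula, $\mathbf q = F(R^T, U_1,\ldots,U_m)$, where $R^T$ is the rank vector of $(T_1,\ldots,T_{n+m})$. Since $(U_j)$ is independent of both the data and the $T_i$'s, it suffices to show that $R \overset{d}{=} R^T$.

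Second, under Assumption~\ref{asmp:no-tie}, $R$ is almost surely a well-defined permutation of $\{1,\ldots,n+m\}$. Because the score function $s$ is trained independently of calibration and test data, Assumption~\ref{asmp:exchangeability} implies that $(S_1,\ldots,S_{n+m})$ is exchangeable. If the training data are random, we condition on them, and $s$ is then a fixed function applied coordinatewise. For any permutation $\sigma$, the rank vector of the permuted scores is $R\circ\sigma$, so $R\circ\sigma \overset{d}{=} R$ for every $\sigma$. A random permutation whose law is invariant under right composition by every fixed permutation must be uniform on the symmetric group. For $T_i$ i.i.d.\ uniform, ties have probability zero and the vector is exchangeable, so $R^T$ is uniform as well.

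Third, combining these facts: $R\overset{d}{=}R^T$, $(U_j)$ is independent of each of them with the same law, and $F$ is fixed. Hence $\mathbf u\overset{d}{=}\mathbf q$. If we conditioned on the training data, the conditional law does not depend on the conditioning value, so the same equality holds unconditionally.

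The only delicate point is the second step. We must make precise that exchangeability of the $Z_i$ transfers to the scores, which needs $s$ to be independent of $\cD_\calib\cup\cD_\test$. We must also handle the null event of ties so that ranks are a genuine permutation. Once that is settled, the rest is bookkeeping.
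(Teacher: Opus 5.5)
Your argument is correct and is essentially the paper's. The paper omits a formal proof (deferring to Gazin et al.), but its stated justification is exactly yours: the joint law of the $u_j$'s depends only on the relative ranks of the scores, and under exchangeability and no ties these ranks are distributed as those of i.i.d.\ uniforms. Your proposal makes this precise through three steps: the invariance $R\circ\sigma \overset{d}{=} R$ forcing a uniform rank permutation, the independence of the $U_j$'s, and conditioning on the independently trained score function.
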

This result follows from the exchangeability of the scores: the joint distribution of the conformal uniform variables depends only on the relative ranks of the scores and therefore coincides with that obtained when the scores are replaced by independent $\mathrm{Unif}[0,1]$ random variables. 

Although it is straightforward, Proposition~\ref{prop:distribution-oracle-p-values} enables convenient sampling from the joint distribution of the conformal uniform variables, regardless of the underlying data distribution and score function.



\section[Cornerstone: Upper bound on ecdf of conformal uniform variables]{Main results}\label{sec:upper-bound-oracle}

We introduce a unified strategy for constructing simultaneous upper bounds on the  empirical CDF of conformal uniform variables under our exchangeability assumption. 

Recall the definition of the conformal uniform variables $\{u_j\}_{j=1}^m$ in~\eqref{eqn:intro-conformal-p-values}. The empirical CDF is defined as  
\begin{equation}\label{eqn:ecdf}
    \hat{F}_{n,m}(t) = \frac{1}{m}\sum_{j=1}^m \ind\{u_j \le t\},\quad t\in [0,1].
\end{equation}
The e.c.d.f.~satisfies $\EE[\hat{F}_{n,m}(t)] = t$, while its variance has the form \(\operatorname{Var}\{\widehat F_{n,m}(t)\}=c_{n,m}(t)t(1-t)\); see Proposition~\ref{prop:var-ecdf-conformal-uniform}. 
Given a confidence level $\delta\in (0,1)$, the goal of this section is to find a (possibly random) function $G\colon [0,1]\to [0,1]$ obeying 
\begin{equation}\label{eqn:upper_bound_oracle}
    \PP\left(\hat{F}_{n,m}(t) \le G(t), ~~\forall t \in [0,1]\right) \ge 1-\delta.
\end{equation}
We call such a $G(t)$ an envelope function. 

The problem~\eqref{eqn:upper_bound_oracle} has been investigated in \cite{gazin2024transductive}, who applied the Dvoretzky--Kiefer--Wolfowitz (DKW) inequality~\citep{dvoretzky1956asymptotic,massart1990tight} to obtain a deterministic linear function $G(t)$ of the form
\begin{equation}\label{eqn:baseline}
    G_{\mathrm{DKW}}(t) = t + \lambda_{\mathrm{DKW}},
\end{equation}
where $\lambda_{\mathrm{DKW}}\in \RR$ is a constant depending on $(n,m,\delta)$. We will use their method as a baseline for comparison in our experiments.
An alternative analytic form of $G(t)$ is established in 
\cite{bates2023testing} when the data $\{Z_{i}\}_{i=1}^{n+m}$ are i.i.d.~(stronger than the exchangeability condition here) and  $n\to \infty$. 
In contrast, our idea is to sample from the distribution of conformal uniform variables, which  yields tight and finite-sample valid bounds in various applications.




\subsection{Uniform upper bounds for increasing random functions}


We begin with any one-dimensional summary statistic $T(F)$ such that for any function $F\colon [0,1]\to [0,1]$ and any cutoff $g\in \RR$, there exists an envelope function $\cG(\cdot;g)\colon [0,1]\to[0,1]$ obeying 
\@\label{eq:T_to_G}
T(F)\leq g\quad \Leftrightarrow \quad F(t)\leq \cG (t;g),~\forall t\in [0,1].
\@
The task then reduces to finding a high-probability upper bound $\hat{T}\in \RR$ for $T(\hat{F}_{n,m})$, since 
$$
\PP\big( T(\hat{F}_{n,m})\leq \hat{T}\big)\geq 1-\delta \quad \Leftrightarrow \quad 
\PP\big( \hat{F}_{n,m}(t)\leq \cG(t;\hat{T}) \big)\geq 1-\delta. 
$$
We construct such a cutoff $\hat{T}\in \RR$ by using the $(1-\delta)$-th quantile of multiple independent samples from the joint distribution of the conformal uniform variables.\footnote{For a probability distribution $P$ on $\mathbb{R}$ and $\beta \in (0,1)$, the quantile function is defined as 
$\operatorname{Quantile}(\beta;P)
:=
\inf\{x \in \mathbb{R}: P((-\infty,x]) \ge \beta\}.$}
The procedure is summarized in Algorithm~\ref{alg:monte-carlo}.

\begin{algorithm}
\caption{Sampling for Envelope Function Estimation}
\label{alg:monte-carlo}
\begin{algorithmic}[1] 
\REQUIRE Distribution $\mathcal{D}$ of a random function $F(t), t\in [0,1]$; summary statistic  $T(\cdot): C([0,1]) \rightarrow \mathbb{R}$, envelope function mapping $\cG(\cdot;\cdot)$; confidence parameter $\delta$; number $B$ of Monte Carlo samples.
\STATE Sample $B$ i.i.d.~replications $\{F^{(b)}\}_{b=1}^B$ of $F$ from the distribution $\mathcal{D}$.
\STATE Compute $T^{(b)} = T(F^{(b)})$ for $b = 1, \dots, B$.
\STATE Compute $\hat{T} = \quant (1-\delta ;\sum_{b=1}^B \frac{1}{B+1} \delta_{T^{(b)}} + \frac{1}{B+1}\delta_\infty)$. 
\ENSURE Envelope function $G(\cdot)$ where $G(t) = \cG(t;\hat{T})$ for $t\in [0,1]$.
\end{algorithmic}
\end{algorithm}



Naturally, the shape of our bound  relies on  the summary statistic $T$. 
Motivated by classical empirical process theory, throughout this paper, we mainly consider summary statistics of the general  form 
\begin{equation}\label{eqn:summary-statistics}
    T(F) = \sup_{t \in [\ell, r]} \frac{F(t) - \mu(t)}{\sigma(t)},
\end{equation}
where 
the mean function $\mu(t)$ is typically chosen as the expectation of $F(t)$ (known in the case of $F=\hat{F}_{n,m}$), and $\sigma(t)$ is a template function that controls the shape of the bound. Here, we use two additional parameters $\ell,r\in [0,1]$ to add  flexibility: while it is a natural idea to take the supremum over the entire range $t\in [0,1]$, the fluctuation of $(F(t) - \mu(t))/\sigma(t)$ may be large around $t=0,1$, which leads to conservative bounds on $T(F)$. 
When $[\ell,r]\neq [0,1]$, the region outside  $[\ell,r]$ can be extrapolated using the boundedness and monotonicity of the empirical c.d.f.~$\hat{F}_{n,m}(\cdot)$.

The following theorem formalizes the finite-sample validity of Algorithm \ref{alg:monte-carlo} for a general choice of the summary statistic, and establishes the correspondence in~\eqref{eq:T_to_G}, hence  the form of $G(t)$, based on~\eqref{eqn:summary-statistics}. 

\begin{theorem}\label{thm:monte-carlo-bounds}
Let $F(t), t\in[0,1]$ be a random function drawn from an arbitrary distribution $\cD$. Let $\hat{T}$ be the $(1-\delta)$-quantile of the statistics $T(F)$ estimated via $B\in\NN^+$ Monte Carlo samples as in Algorithm \ref{alg:monte-carlo}. Then
\begin{equation}\label{eqn:bound-monte-carlo}
    1-\delta \le \mathbb{P}(T(F) \leq \hat{T}) \leq 1 - \delta + \frac{1}{B+1},
\end{equation}
where the randomness is over both $F$ and the sampling process. 
Furthermore, if $F(t)$ is increasing in $t$ and bounded above by a constant $M>0$ almost surely, then using the summary statistic defined in \eqref{eqn:summary-statistics}, Algorithm \ref{alg:monte-carlo} produces the piecewise function
    $$
    \cG(t, \hat{T}) = \begin{cases}
        \min\{\mu(\ell) + \hat{T}\sigma(\ell), M\},& \mbox{ if } t <\ell, \\
        \min\{\mu(t) + \hat{T}\sigma(t), M\},& \mbox{ if } \ell \le t \le r, \\
        M, & \mbox{ if } t > r,
    \end{cases}
    $$
which is a valid simultaneous upper bound:
    $$
     1-\delta \le \PP\Big(F(t) \le \cG(t;\hat{T}), ~\forall t\in[0,1]\Big)\le 1-\delta + \frac{1}{B+1}.
    $$

\end{theorem}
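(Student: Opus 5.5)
The first claim is the standard Monte Carlo rank argument. Write $T^{(0)}=T(F)$ for the statistic of the observed function, and $T^{(1)},\dots,T^{(B)}$ for the Monte Carlo replicates. The $B+1$ variables $T^{(0)},T^{(1)},\dots,T^{(B)}$ are i.i.d., hence exchangeable. Set $k=\lceil (1-\delta)(B+1)\rceil$.

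By definition of $\hat T$ as the $(1-\delta)$-quantile of $\frac{1}{B+1}\sum_b\delta_{T^{(b)}}+\frac{1}{B+1}\delta_\infty$, we have
\begin{itemize}
\item $\hat T=T^{(k)}_{\mathrm{ord}}$, the $k$-th smallest of $T^{(1)},\dots,T^{(B)}$, when $k\le B$;
\item $\hat T=\infty$ when $k=B+1$.
\end{itemize}
The event $\{T^{(0)}\le \hat T\}$ is then equivalent to "the rank of $T^{(0)}$ among all $B+1$ values is at most $k$", counting ties favorably.

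For the lower bound, exchangeability gives that this rank is stochastically dominated by a uniform draw on $\{1,\dots,B+1\}$. Hence
$$\mathbb P(T^{(0)}\le\hat T)\ge \frac{k}{B+1}\ge 1-\delta.$$

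The upper bound needs care with ties. With no ties, the rank is exactly uniform and the probability equals $\frac{k}{B+1}<1-\delta+\frac{1}{B+1}$. Ties can only inflate the probability, since $T^{(0)}$ equal to the $k$-th order statistic still counts as a success. To handle this, I would break ties with independent uniform auxiliary variables, or appeal to continuity of $T$ under the intended distribution. I expect this tie issue to be the main obstacle. I will first argue the upper bound under a no-ties condition on $T(F)$: it holds for $\hat F_{n,m}$ almost surely when $\sigma$ is well behaved, since the $u_j$ are continuous. I will state the general upper bound only after tie-breaking randomization.

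For the second part, I would verify the equivalence \eqref{eq:T_to_G} in the weaker direction actually needed, which is: if $T(F)\le\hat T$ then $F(t)\le\cG(t;\hat T)$ for all $t$. That implication alone transfers the lower bound. For the upper bound I also need the converse inclusion of events, and I handle it by the three cases below. Take $\sigma>0$ on $[\ell,r]$.
\begin{itemize}
\item For $t\in[\ell,r]$: $T(F)\le\hat T$ is exactly $F(t)\le\mu(t)+\hat T\sigma(t)$ on $[\ell,r]$. Combined with $F\le M$ almost surely, this gives $F(t)\le\min\{\mu(t)+\hat T\sigma(t),M\}$.
\item For $t<\ell$: monotonicity gives $F(t)\le F(\ell)\le \min\{\mu(\ell)+\hat T\sigma(\ell),M\}$.
\item For $t>r$: the bound $M$ holds trivially.
\end{itemize}
Conversely, if $F(t)\le\cG(t;\hat T)$ for all $t$, then on $[\ell,r]$ we have $F(t)\le\mu(t)+\hat T\sigma(t)$, so $T(F)\le\hat T$. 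Thus the two events coincide almost surely, and the two-sided probability bounds from the first part transfer verbatim to the simultaneous envelope statement.
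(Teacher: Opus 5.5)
Your argument is correct, and for the lower bound it is the paper's exchangeability argument made explicit. The paper notes that $T(F)$ lies below the $(1-\delta)$-quantile of the empirical law of $T(F),T(F^{(1)}),\dots,T(F^{(B)})$ with probability at least $1-\delta$, and then swaps $\delta_{T(F)}$ for $\delta_\infty$, which can only raise the quantile. You instead identify $\hat T$ with the $k$-th order statistic of the replicates and use the favorable-ties rank bound; both are fine.

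The paper's proof stops there. It asserts the envelope part from the definition of $T$ and never argues the upper bound. So your converse inclusion $\{F(t)\le\cG(t;\hat T)\ \forall t\}\subseteq\{T(F)\le\hat T\}$ and your no-ties upper bound are genuine additions. Your concern about ties is essential, not technical: for arbitrary $\cD$ the upper bound is false. If $\cD$ is a point mass at a fixed nondecreasing bounded function, all $B+1$ statistics coincide and $\mathbb{P}(T(F)\le\hat T)=1$. This exceeds $1-\delta+\frac{1}{B+1}$ whenever $\delta>\frac{1}{B+1}$. So the upper bound needs an extra hypothesis, such as an atomless law of $T(F)$, and under that hypothesis your three-case event equivalence transfers both bounds correctly.

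Two side remarks need correcting.

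First, continuity of the $u_j$ does not make $T(\hat F_{n,m})$ atomless for the truncated statistic with $\ell>0$. With positive probability no $u_j$ lies in $[\ell,r]$, and then $\hat F_{n,m}$ is constant on $[\ell,r]$. By Fact~\ref{fact:monotone} the supremum is then attained at $t=\ell$ and equals $(\hat F_{n,m}(\ell)-\ell)/\sigma(\ell)$, which takes only finitely many values. Hence MC-THC with $[\ell,r]=[0.01,0.99]$, the statistic used in the experiments, has atoms; your claim is fine for, e.g., the one-sided KS statistic.

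Second, breaking ties with auxiliary uniforms makes the rank exactly uniform only for a randomized acceptance event. That event is a subset of $\{T(F)\le\hat T\}$ for the deterministic $\hat T$ of Algorithm~\ref{alg:monte-carlo}, so it proves the upper bound for a modified procedure, not for the algorithm as stated. For the algorithm itself, $\mathbb{P}(T(F)\le\hat T)=(k+\mathbb{E}N)/(B+1)$, where $N$ is the number of the $B+1$ values beyond sorted position $k$ that tie with the $k$-th smallest. The stated upper bound therefore holds exactly when $\mathbb{E}N\le(1-\delta)(B+1)+1-k$, and this is what would have to be checked when atoms are present.
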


Flipping the sign of the summary statistic leads to simultaneous lower bounds, which enables two-sided simultaneous bounds on the e.c.d.f.~when needed. 
We make several remarks on Algorithm \ref{alg:monte-carlo}. 

\vspace{-0.5em}
\paragraph{Sharpness}
As $B\to \infty$, the probability $\PP(T(F)\le G(t,\hat{T}), ~\forall t\in[0,1])$ converges to $1-\delta$. This indicates that our upper bound is asymptotically sharp. In other words, when $B\to \infty$, it is impossible to find another envelope function that is uniformly smaller than $G(\cdot,\hat{T})$ while still maintaining the same level of confidence.

\vspace{-0.5em}
\paragraph{Beyond conformal uniform variables}
While we primarily focus on conformal uniform variables in this work, Algorithm \ref{alg:monte-carlo} can be applied to other multiple testing problems as long as we can sample from the joint distribution of the p-values. We include in  Appendix \ref{app:comparison-iid-conformal} a discussion on extensions to i.i.d.~p-values.

\vspace{-0.5em}
\paragraph{Point-wise bound}
Our strategy  can be easily modified to obtain other types of bounds. For example, if one is interested in an upper bound on $F(t_0)$  at a single point $t_0$, we can set $T(F) = (F(t_0) - \mathbb{E}[F(t_0)])/\sigma(t_0)$ and still follow Algorithm~\ref{alg:monte-carlo} to obtain the cutoff $\hat{T}$. Inverting $T(F)\leq \hat{T}$ then yields the point-wise upper bound $\PP(F(t_0)\leq \sigma(t_0)\cdot \hat{T} + \EE[F(t_0)])\geq 1-\delta$. 

\subsection{Summary statistics}

The general form~\eqref{eqn:summary-statistics} encompasses many well-studied statistics in the literature. Below, we list some representative ones which will be used in our experiments, followed by a numerical comparison. 

\paragraph{One-sided Kolmogorov-Smirnov Statistic} This is the classical statistic defined as
$$
T_\ks(\hat F_{n,m}) = \sqrt{m} \sup_{t\in [0,1]} (\hat F_{n,m}(t) - t). 
$$
When used in Algorithm \ref{alg:monte-carlo}, this statistic yields a linear upper bound function, making it computationally efficient but potentially conservative since it is not adapted to the variance of $\hat{F}_{n,m}(t)$. 

\paragraph{(Truncated) Higher-Criticism Statistics} The Higher-Criticism framework \cite{donoho2004higher} suggests the following statistic: 
\begin{equation}\label{eq:stat_thc}
    T_{\hc, \ell, r, \beta} ( \hat{F}_{n,m}) = \sup_{t\in [\ell,r]} \frac{\hat{F}_{n,m}(t) - t}{(t(1-t))^\beta}. 
\end{equation}


 When $\beta=1/2$, the statistic in \eqref{eq:stat_thc} reduces to
the Higher-Criticism statistic \citep{donoho2004higher}. 
This choice is also motivated by the variance calculation in
Appendix~\ref{app:variance-ecdf}. Proposition~\ref{prop:var-ecdf-conformal-uniform}
shows that
\[
\operatorname{Var}\{\widehat F_{n,m}(t)\}=c_{n,m}(t)t(1-t),
\]
where $c_{n,m}(t)$ accounts for the dependence induced by the shared calibration sample. Hence $t(1-t)$ captures an important component of the
heteroskedasticity of the empirical CDF of conformal uniform variables.
The denominator $(t(1-t))^\beta$ can therefore be viewed as a
variance-adaptive normalization, with $\beta=1/2$ corresponding to
standardization up to the factor $c_{n,m}(t)$. When $n$ is moderately large, the remaining factor
$c_{n,m}(t)$ is comparatively flat over most of the interval; see
Appendix~\ref{app:variance-ecdf}.

As we already briefly mentioned, the truncation parameters $\ell$ and $r$ are crucial. When $\beta  = 1/2$, 
the
supremum over $t\in[0,1]$ is typically attained at very small values of
$t(1-t)$, that is, near the boundaries of the interval \citep{donoho2004higher, li2015higher}, which may lead to overly conservative bounds for mid-range values of $t$. By restricting the range to $[\ell,r]$, we obtain a smaller cutoff $\hat{T}$ and better balanced and practically useful bounds.

The denominator $(t(1-t))^\beta$ controls the shape of the envelope function. 
Additionally, the  parameter $\beta$   determines the extent to which the upper bound is sensitive to the variance of $\hat F_{n,m}(t)$. A smaller $\beta$ results in a bound which exhibits a linear behavior. We include an empirical investigation of the choice of $\beta$ in Appendix \ref{app:beta}.

\paragraph{Berk-Jones Statistics} This statistic is based on a likelihood deviance:
$$
T_\bj = m \cdot \max_{1\le i \le m/2} D(p_{(i)}, i/m),
$$
where $D(p_0, p_1) = p_0 \log(p_0/p_1) + (1-p_0)\log((1-p_0)/(1-p_1))$ is the Kullback-Leibler divergence between Bernoulli distributions with parameters $p_0$ and $p_1$. To understand this, consider the order statistic of the p-values $p_{(1)} \le p_{(2)} \le \ldots \le p_{(m)}$. If we think of the p-values as being i.i.d.~uniform p-values (which is not exactly the case here), we would have $p_{(i)} \approx i/m$, so that $D(p_{(i)}, i/m)$ captures the expected deviation. In our experiments, we find that this statistic leads to bounds that are particularly sensitive to deviations in the middle range of p-values.

\begin{figure}[htbp]
\centering
\begin{subfigure}{0.45\linewidth}
    \centering
    \includegraphics[width=\linewidth]{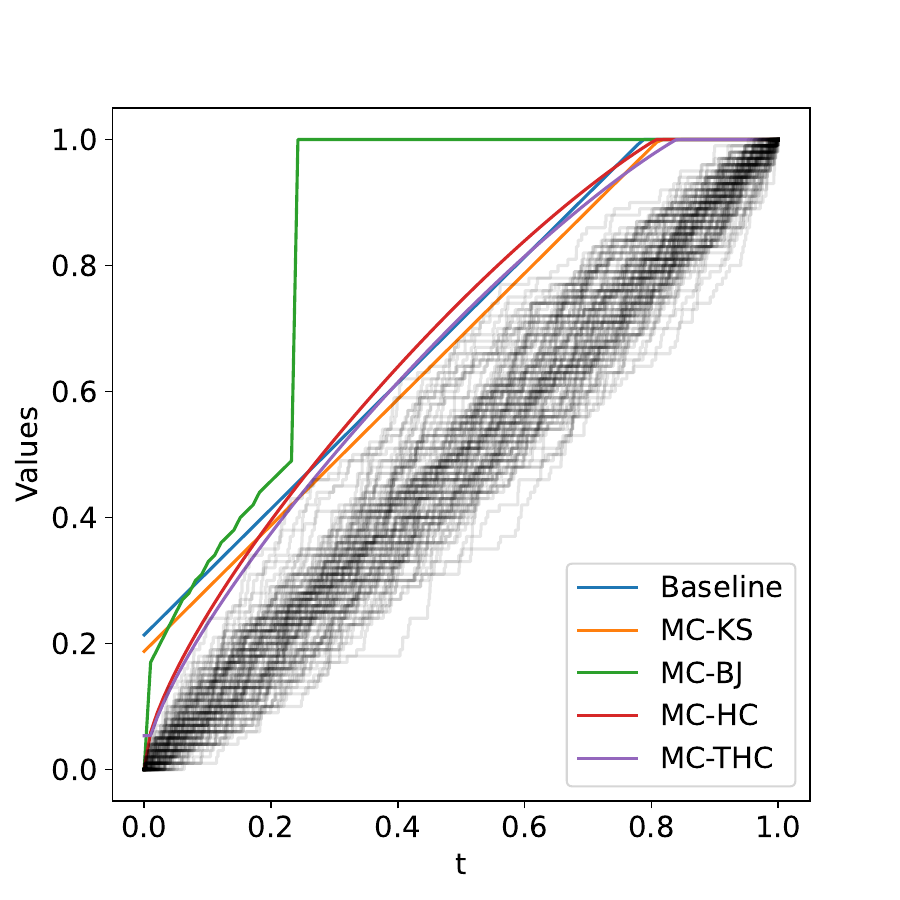}
    \caption{Full range}
\end{subfigure}
\begin{subfigure}{0.45\linewidth}
    \centering
    \includegraphics[width=\linewidth]{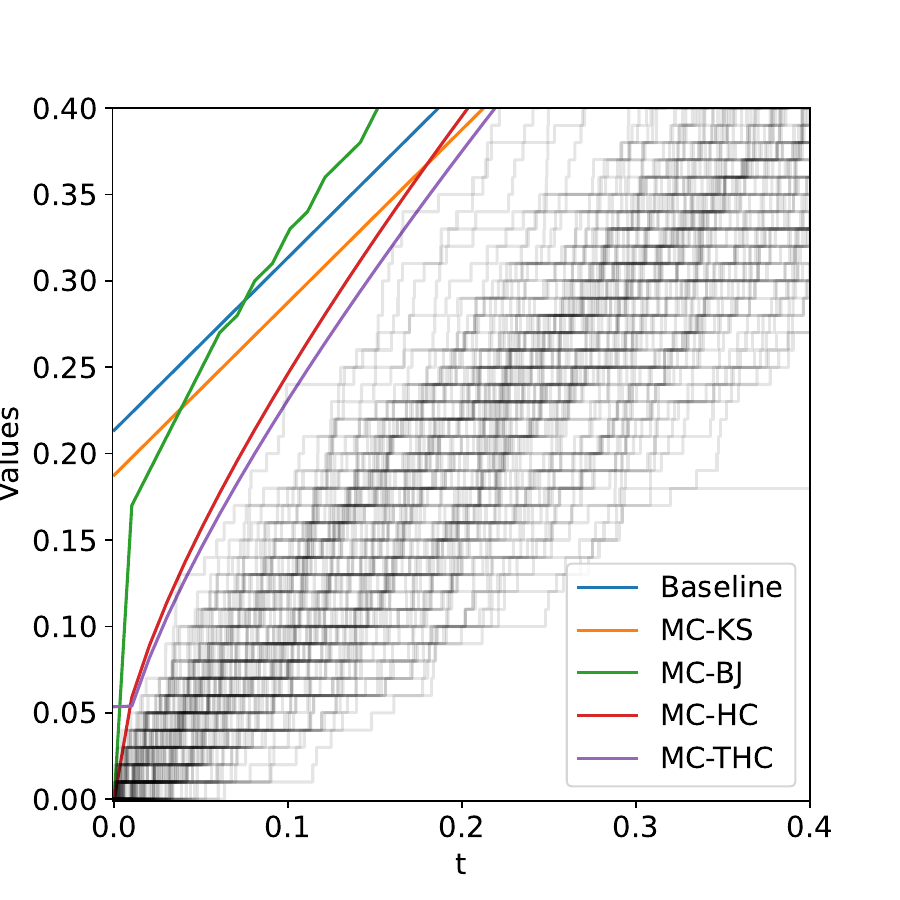}
    \caption{Zoom near $t=0$}
\end{subfigure}
    \caption{Upper bounds on $\hat{F}_{n,m}(t)$ ($n=m=100,$ $\delta=0.1$) constructed via Algorithm~\ref{alg:monte-carlo} with $B=100$. The gray curves represent 100 independent realizations of $\hat{F}_{n,m}$. The colored curves represent different envelope
constructions: MC-KS (Kolmogorov–Smirnov statistic), MC-BJ (Berk–Jones statistic),
MC-HC (Higher-Criticism statistic), and MC-THC (truncated Higher-Criticism statistic).
The Baseline curve corresponds to the linear bound \eqref{eqn:baseline}. 
The MC-THC (purple) and MC-HC bound (red) adapt to the heteroskedasticity $\hat{F}_{n,m}(t)$ (pinching near $t=0$), whereas the Baseline~\eqref{eqn:baseline} and MC-KS bounds are linear and conservative for small values of $t$.
}
    \label{fig:pvals_sim}
\end{figure}


Figure~\ref{fig:pvals_sim} compares our method (based on various summary statistics)   with the approach in \cite{gazin2024transductive} (\texttt{Baseline}). We randomly sample $N=100$ replicates of the p-values for sample sizes $(n,m)=(100,100)$ and plot the corresponding empirical CDFs $\hat F_{n,m}(t)$ as gray curves. 
The gray curves show that the fluctuation of $\hat{F}_{n,m}(t)$ is smaller when $t$ is near the boundaries. The \texttt{Baseline} which uses a linear envelope function fails to capture such  heteroskedasticity and leads to conservative bounds for  $t\in[0,0.2]$, which could be the region of most interest in many applications. Our method with the KS statistic also yields a linear envelope function; the data-driven cutoff is visibly tighter than the DKW-type cutoff, though it also fails to adapt to the heteroskedasticity. 
By contrast, our method with the Higher-Criticism statistic with $(\ell, r,\beta)=(0,1,1/2)$ (\texttt{MC-HC}) leads to substantially less conservative bounds, especially for small values of $t$.
As we briefly mentioned before, the choice of $\sigma(t)$ in the Higher Criticism statistic can yield large fluctuations of $(F(t)-\mu(t))/\sigma(t)$ around $t\approx 0$, which in turn translates into increased conservativeness for $t\approx 0.5$. To address this, the truncated HC statistic (\texttt{MC-THC}) only takes the supremum over $t\in [\ell,r]=[0.01,0.99]$ and extrapolates the bounds in regions outside  $[\ell,r]$. The truncated HC statistic yields envelopes that remain close to the empirical CDF across most of the domain $t\in[0,1]$, with the exception of a narrow region around $t\approx 0.5$, where the KS statistic yields a slightly closer envelope.

The Berk-Jones statistic (\texttt{MC-BJ}), despite its theoretical appeal, tends to be overly conservative across most of the domain except for very small values of $t$ (below 0.1), where it occasionally provides competitive bounds.



\subsection{Computational efficiency}

Computing the Kolmogorov-Smirnov and truncated Higher-Criticism statistics is particularly efficient because of monotonicity: Fact~\ref{fact:monotone} is proved in Appendix~\ref{app:upper-bound-oracle}. 

\begin{fact}\label{fact:monotone}
For any $x\in[0,1]$ and $\beta \in [0,1]$,   $f(t) = \frac{x-t}{(t(1-t))^\beta}$ is monotonically decreasing in $t\in [0,1]$.
\end{fact}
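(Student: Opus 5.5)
The plan is to prove the claim on the open interval $(0,1)$, where $f$ is well defined and differentiable; at the endpoints the denominator vanishes, so the statement should be read there as monotonicity on $(0,1)$, extended by continuity or by the convention $\pm\infty$. The approach is a direct sign computation of $f'(t)$. The case $\beta=0$ is immediate, since then $f(t)=x-t$, so the work is for $\beta\in(0,1]$.

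\textbf{Step 1 (reduce to a sign condition).} Write $v(t)=t(1-t)$, so $v'(t)=1-2t$. By the quotient rule,
\[
f'(t)=\frac{-v(t)^{\beta}-\beta(x-t)v(t)^{\beta-1}(1-2t)}{v(t)^{2\beta}} .
\]
On $(0,1)$ we have $v(t)>0$. Multiplying the numerator by the positive factor $v(t)^{1-\beta}$ shows that $f'(t)\le 0$ is equivalent to
\[
h_x(t):=t(1-t)+\beta(x-t)(1-2t)\ \ge\ 0 .
\]

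\textbf{Step 2 (linearity in $x$).} For fixed $t$, $h_x(t)$ is affine in $x$. Hence it is enough to check nonnegativity at $x=0$ and $x=1$, since any $x\in[0,1]$ is a convex combination of these two values.

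\textbf{Step 3 (the two endpoint cases).}
\begin{itemize}
\item For $x=0$:
\[
h_0(t)=t\bigl[(1-t)-\beta(1-2t)\bigr]=t\bigl[(1-\beta)(1-t)+\beta t\bigr]\ge 0 .
\]
\item For $x=1$:
\[
h_1(t)=(1-t)\bigl[t+\beta(1-2t)\bigr]=(1-t)\bigl[(1-\beta)t+\beta(1-t)\bigr]\ge 0 .
\]
\end{itemize}
In both cases this holds because $\beta\in[0,1]$ and $t\in(0,1)$ make every bracketed term nonnegative.

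\textbf{Conclusion.} Steps 2 and 3 give $f'(t)\le 0$ on $(0,1)$, so $f$ is decreasing there.

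The only real obstacle is the algebra in Step 3, namely rewriting each bracket as a convex combination so that its sign is evident. The hypothesis $\beta\le 1$ is used exactly there: without it, $1-\beta$ would be negative and the brackets could change sign.
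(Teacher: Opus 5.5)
Your proof is correct and follows the paper's argument: both compute $f'$, reduce to nonnegativity of $t(1-t)+\beta(x-t)(1-2t)$ on $(0,1)$, and use affineness to reduce to extreme cases. The only difference is cosmetic. The paper uses affineness in both $\beta$ and $x$ and checks the four vertices $(\beta,x)\in\{0,1\}^2$, whereas you use affineness in $x$ only and cover all $\beta\in[0,1]$ at once with your convex-combination factorizations; you also address the endpoints $t\in\{0,1\}$, which the paper leaves implicit.
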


This fact allows us to characterize the computation efficiency of these summary statistics and ensures that our bounds remain scalable even when $m$ is very large.

\begin{prop}[Computational cost of summary statistics]\label{prop:comp-cost}
\quad
\begin{enumerate}
    \item \textbf{Truncated Higher-Criticism (THC).} 
    Computing the truncated higher-criticism statistic~\eqref{eq:stat_thc} for any value of $(\ell,r,\beta)$ 
    reduces to checking the $m$ jump points of $\hat F_{n,m}$ within $[\ell,r]$. Hence the statistic can be computed in $\mathcal{O}(m)$ time.

    \item \textbf{Evaluating the envelope at a fixed $t$.} 
    Given the cutoff $\hat T$ from Algorithm~\ref{alg:monte-carlo}, the upper bound function
    $ 
    G(t) = \min\{\mu(t) + \hat T \cdot \sigma(t),\, 1\}, \quad t \in [\ell,r],
    $ 
    can be evaluated at any fixed value of $t$ in constant time $\mathcal{O}(1)$. For $t\notin [\ell,r]$, the extension rule in Theorem~\ref{thm:monte-carlo-bounds} also takes $\mathcal{O}(1)$ time.
\end{enumerate}
\end{prop}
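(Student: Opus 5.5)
Part 1 follows from Fact~\ref{fact:monotone}. Let $u_{(1)}\le\cdots\le u_{(m)}$ be the order statistics of the conformal uniform variables. The empirical CDF $\hat F_{n,m}$ is a right-continuous step function, so it is constant on each interval $[u_{(k)},u_{(k+1)})$ and equals $k/m$ there. On the piece $I_k=[u_{(k)},u_{(k+1)})\cap[\ell,r]$, the objective in \eqref{eq:stat_thc} is
\[
t\mapsto \frac{k/m-t}{(t(1-t))^\beta}.
\]
By Fact~\ref{fact:monotone} with $x=k/m$, this map is decreasing in $t$, so its supremum over $I_k$ is attained at the left endpoint $\max\{u_{(k)},\ell\}$. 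The supremum over $[\ell,r]$ is therefore a maximum over these candidate left endpoints. The candidates are $\ell$ itself, with value $\hat F_{n,m}(\ell)$, together with the jump points $u_{(k)}\in[\ell,r]$. That is at most $m+1$ candidates, each evaluated in $\mathcal O(1)$ time.

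Three details need checking. First, the monotonicity at the boundary $t\in\{0,1\}$ when $\ell=0$ or $r=1$. If the denominator vanishes, the value is $+\infty$ or undefined in a way that is the same for every input, and I would handle this by convention, e.g.\ by noting that with $\beta>0$ and $\ell=0$ the statistic is understood as the supremum over the open interval. Second, $\hat F_{n,m}(\ell)$ and the indices $k$ must be available in $\mathcal O(m)$ time. If sorting costs $\mathcal O(m\log m)$, the stated $\mathcal O(m)$ bound needs refinement. I would argue that the $u_j$ can be sorted in linear time: they come from integer ranks $\sum_i \ind\{T_i<T_{n+j}\}\in\{0,\dots,n\}$ plus a fractional part in $[0,1)$, so a bucket sort into $n+1$ bins applies, with the bins holding $\mathcal O(1)$ elements on average. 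Alternatively, I would state the claim as $\mathcal O(m)$ given sorted values, which is natural in the Monte Carlo setting where the samples are generated already in order. I expect this sorting point to be the only real subtlety, and I would make the assumption explicit. Third, the pieces must partition $[\ell,r]$ including the endpoint $r$, which holds because the last piece $[u_{(k)},r]$ is closed on the right and the decreasing property still applies.

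Part 2 is immediate from the closed form in Theorem~\ref{thm:monte-carlo-bounds}. For $t\in[\ell,r]$, computing $G(t)=\min\{\mu(t)+\hat T\sigma(t),1\}$ with $\mu(t)=t$ and $\sigma(t)=(t(1-t))^\beta$ takes a constant number of arithmetic operations. For $t<\ell$ the value is the same expression evaluated at $\ell$, and for $t>r$ it is the constant $M=1$. Both are $\mathcal O(1)$.
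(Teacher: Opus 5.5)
Your proof is correct and follows the paper's route. The paper gives no separate proof and simply appeals to Fact~\ref{fact:monotone}: the objective is decreasing on each constant piece of $\hat F_{n,m}$, so the supremum sits at left endpoints, namely the jump points in $[\ell,r]$ together with $t=\ell$, which you rightly add and the paper's wording omits; Part~2 is then read off the closed form in Theorem~\ref{thm:monte-carlo-bounds}. Your sorting caveat is more careful than the paper, which tacitly assumes sorted values, but bucketing by the integer rank into $n+1$ bins gives $\mathcal{O}(1)$ expected occupancy only when $m=\mathcal{O}(n)$, so your sorted-input fallback (or sub-bucketing by $U_j$) is the cleaner statement.
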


\subsection{Simultaneous FCP upper bounds}\label{sec:fcp_bounds}
An immediate  application of our upper bound on $\hat{F}_{n,m}$ is bounding the false coverage proportion (FCP) for conformal prediction sets, following ideas similar as in~\citep{gazin2024transductive}. 

In this context, each data point is $Z_i = (X_{i},Y_i)$ for $i \in [n+m]$. For a given test instance with only $X_{n+j}$ observed and a confidence level $\alpha\in (0,1)$, we construct a prediction set  $\cC(X_{n+j};\alpha) = \set{y: u_j(X_{n+j},y)> \alpha}$, where $u_j(X_{n+j},y)$ is obtained by replacing $Z_{n+j}$ with $(X_{n+j},y)$ in~\eqref{eqn:intro-conformal-p-values}. 
With exchangeable data, the conformal prediction set is guaranteed to cover the unknown outcome with  $\PP(Y_{n+j} \in \cC(X_{n+j};\alpha) )$ for each $j\in[m]$; see~\cite{vovk2005algorithmic}.
The FCP is the proportion of prediction sets that fail to cover the true labels among all $m$ prediction sets, and is equal to (noticing that $u_j = u_j(X_{n+j},Y_{n+j})$)
\begin{equation}
    \fcp(\alpha) = \frac{1}{m}\sum_{j=1}^m \ind\{Y_{n+j} \not\in \cC(X_{n+j};\alpha)\} = \frac{1}{m} \sum_{j=1}^m \ind\{u_j \le \alpha\} = \hat{F}_{n,m}(\alpha).
\end{equation}
Thus, the FCP at a confidence level $\alpha\in (0,1)$ is  directly related to the empirical CDF of the conformal uniform variables at the value $\alpha$. 
Therefore, Figure \ref{fig:pvals_sim} can also be viewed as illustrating the upper bound on the FCP.  

\section{Simultaneous FDP bounds in outlier detection}
\label{sec:outlier-detection}

In this section, we establish simultaneous FDP upper bounds for the outlier detection problem introduced in \citep{bates2023testing}. 
Recall that we observe a calibration dataset $\cD_\calib = \set{X_i}_{i\in[n]}$ where the data points are i.i.d. samples from an unknown distribution $\cP_0$, and a test dataset $\cD_\test = \set{X_{n+j}}_{j\in[m]}$ where each test sample is independently sampled from an unknown distribution. (Outliers may be sampled from unequal and arbitrary distributions.) 
The goal of outlier detection is to identify test samples whose distributions deviate from $\cP_0$.

\subsection{Setup}
The outlier detection problem can be formulated as a multiple testing problem. 
For each observation $X_{n+j}$, we test the null hypothesis $H_{n+j}: X_{n+j} \sim \cP_0$. In other words, under the null hypothesis, the test sample  follows the same distribution as the calibration data.
In many standard solutions, a one-class classifier is trained on a separated dataset to obtain a score function $s\colon \cX\to \RR$ \citep{pimentel2014review}, where higher values of $s(X)$ indicate stronger evidence that $X$ follows the same distribution as the calibration data. As before, we assume $s(X_1)$ is a continuous random variable.

\cite{bates2023testing} shows that conformal p-values wrapped around any score function can be leveraged to test for outliers with FDR control at a specified level. 
The conformal p-value takes exactly the form of~\eqref{eqn:intro-conformal-p-values}:
\begin{align*}
    p_j = \frac{\sum_{i=1}^n \ind\{s(X_i)< s(X_{n+j})\} +  U_j}{n+1}, \quad j \in [m].
\end{align*}
A  small value of $p_j$ means that $s(X_{n+j})$ is comparably small and thus suggests  that $X_{n+j}$ is likely to be an outlier. 
The joint distribution of these p-values can be characterized under mild assumptions. 

\begin{assumption}\label{assumption:exchangeable-outlier-detection}
    Let $\mathcal{H}_0 \subseteq [m]$ denote the index set of true inliers, where $j \in \mathcal{H}_0$ if and only if $X_{n+j}$ follows the same distribution $\cP_0$ as the calibration data. We assume that the scores of the calibration samples and inlier test samples $\{s(X_i)\}_{i \in [n] \cup \{n+j: j\in\mathcal{H}_0\}}$ are exchangeable random variables. 
\end{assumption}

Beyond pre-trained scores, Assumption~\ref{assumption:exchangeable-outlier-detection} permits the use of so-called \emph{adaptive} score functions learned from data~\cite{gazin2024transductive,marandon2024adaptive}. Such adaptive scores are no longer i.i.d., yet they are still exchangeable and obey this assumption.
To avoid unnecessary complication, we again assume the scores have no ties. Proposition~\ref{prop:distribution_pval_outlier} provides a unified description of the joint distribution of the inlier conformal p-values.

\begin{prop}\label{prop:distribution_pval_outlier}
Under Assumptions \ref{asmp:no-tie} and \ref{assumption:exchangeable-outlier-detection}, letting $m_0 = |\cH_0|$ denote the number of true inliers in test samples, the conformal p-values $\{p_j\}_{j \in \cH_0}$ follow the  joint distribution $\cP_{n,m_0}$ given in Proposition \ref{prop:distribution-oracle-p-values}.
\end{prop}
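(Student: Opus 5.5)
The plan is to reduce the claim to Proposition~\ref{prop:distribution-oracle-p-values}, applied to the $n+m_0$ exchangeable scores of the calibration points and the inlier test points. The key observation is that, for $j\in\cH_0$, the p-value
\[
p_j=\frac{\sum_{i=1}^n \ind\{s(X_i)<s(X_{n+j})\}+U_j}{n+1}
\]
depends only on the calibration scores, the score of $X_{n+j}$, and the independent variable $U_j$. It does not involve the outlier scores at all. Hence the subvector $(p_j)_{j\in\cH_0}$ is a measurable function of $S:=(s(X_1),\dots,s(X_n),(s(X_{n+j}))_{j\in\cH_0})$ and $(U_j)_{j\in\cH_0}$. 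The outliers can therefore be discarded entirely, which is why only exchangeability of this subcollection, as in Assumption~\ref{assumption:exchangeable-outlier-detection}, is needed.

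We first relabel the inliers as $1,\dots,m_0$ and treat $\cH_0$ as deterministic. Conditioning on it if it is random is harmless, since the assumption is stated for the given index set. Next, by Assumption~\ref{asmp:no-tie}, $S$ has almost surely distinct entries. Its rank vector $R=(R_1,\dots,R_{n+m_0})$ is then a permutation of $[n+m_0]$, and $\ind\{s(X_i)<s(X_{n+j})\}=\ind\{R_i<R_{n+j}\}$. So $(p_j)_{j\in\cH_0}$ is a fixed function $\Phi(R,U)$.

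The central step is to show that $R$ is uniformly distributed over permutations. Exchangeability gives $(S_{\sigma(k)})_k\overset{d}{=}S$ for every permutation $\sigma$. The rank vector of the permuted sequence is $R\circ\sigma$, so $R\circ\sigma\overset{d}{=}R$ for all $\sigma$. Together with the absence of ties, this forces $\PP(R=\pi)$ to be constant in $\pi$, so $R$ is uniform. Moreover, the $U_j$ are independent of the data, hence of $R$.

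Finally, let $T_1,\dots,T_{n+m_0}$ be i.i.d.\ $\mathrm{Unif}[0,1]$, independent of the $U_j$. Their rank vector is also uniform, and ties occur with probability zero. Thus $(q_1,\dots,q_{m_0})$ defined in \eqref{eq:def_qj} with $m=m_0$ has the same law $\Phi(R',U)$ with $R'\overset{d}{=}R$ and $R'$ independent of $U$. This gives $(p_j)_{j\in\cH_0}\overset{d}{=}(q_j)_{j\le m_0}\sim\cP_{n,m_0}$.

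I expect the main obstacle to be bookkeeping rather than substance. One must verify carefully that no outlier information enters $p_j$ for inliers, and one must handle the possibility that $\cH_0$ is random, which I would resolve by conditioning. Some care is also needed in the rank-invariance step to show that almost-sure distinctness is exactly what makes the rank map well defined.
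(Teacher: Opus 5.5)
Your argument is correct and follows the paper's reasoning. The paper gives no separate proof of this proposition: it justifies Proposition~\ref{prop:distribution-oracle-p-values} only by noting that conformal uniform variables depend on the scores solely through their relative ranks, whose law under exchangeability and no ties matches that of i.i.d.\ $\mathrm{Unif}[0,1]$ scores, and it implicitly applies this to the $n+m_0$ calibration and inlier scores. Your proposal makes that sketch rigorous by discarding the outlier scores, deriving uniformity of the rank vector from $R\circ\sigma \overset{d}{=} R$, and using the independence of the $U_j$.
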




For a rejection threshold $t\in[0,1]$, the FDP is given by
\@
\label{eq:def_FDP_outlier}
\mbox{FDP}(t) = \frac{\sum_{j\in \mathcal{H}_0} \ind\{j \in \cR(t)\}}{\max\{1,|\cR(t)|\} } = \frac{\sum_{j\in \mathcal{H}_0} \ind\{p_j \le t\}}{\max\{1, \sum_{j=1}^m \ind\{p_j \le t\}\} }.
\@
Since we observe the denominator in~\eqref{eq:def_FDP_outlier}, the key to establishing a simultaneously valid upper bound for $\mbox{FDP}(t)$ is to bound the unobserved number of false discoveries $\sum_{j\in \mathcal{H}_0} \ind\{p_j \le t\}$.


\subsection{Bounding the number of false selections}

The following lemma connects the distribution of $\{p_j\}_{j\in[m]}$ to the distribution $\cP_{n,m}$. Without loss of generality, we assume $\cH_0 = \{1,\dots,m_0\}$. 

\begin{lemma}[Coupling with completed conformal random variables]\label{lem:coupling}
Under Assumptions~\ref{asmp:no-tie} and~\ref{assumption:exchangeable-outlier-detection}, there exists a collection of random variables
$ 
(p_1^\ast,\dots,p_m^\ast)
$ 
such that:
\begin{enumerate}
    \item \emph{Null preservation}: 
    $ 
    (p_1^\ast,\dots,p_{m_0}^\ast) = (p_1,\dots,p_{m_0})$ almost surely.
    
    \item \emph{Joint distribution}: The full vector $(p_1^\ast,\dots,p_m^\ast)$ has joint distribution
    $\cP_{n,m}$ as defined in Proposition~\ref{prop:distribution-oracle-p-values}.
\end{enumerate}
\end{lemma}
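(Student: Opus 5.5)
We construct the coupling explicitly, on an enlarged probability space, by replacing the scores of the outlier test points with fresh artificial scores that are exchangeable with the inlier scores. Concretely, we keep the calibration scores $S_i=s(X_i)$, $i\in[n]$, the inlier test scores $S_{n+j}=s(X_{n+j})$, $j\le m_0$, and the auxiliary uniforms $U_1,\dots,U_m$. We then introduce, independently of everything else, a uniform random permutation $\pi$ of $[n+m]$ and i.i.d.\ $\mathrm{Unif}[0,1]$ variables $T_1,\dots,T_{n+m}$. These are used to manufacture ``completed'' scores $S^\ast_{n+j}$ for $j>m_0$.

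The cleanest construction works on the level of ranks rather than scores. By Assumptions~\ref{asmp:no-tie} and~\ref{assumption:exchangeable-outlier-detection}, the rank vector of $(S_1,\dots,S_n,S_{n+1},\dots,S_{n+m_0})$ among themselves is uniform over permutations of $[n+m_0]$. Assumption~\ref{asmp:no-tie} gives this almost surely, and exchangeability gives uniformity. The plan has four steps.

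\begin{enumerate}
\item Draw i.i.d.\ uniforms $T_1,\dots,T_{n+m}$ and let $R$ be the rank vector of $(T_1,\dots,T_{n+m_0})$. Conditionally on $R$, the remaining $T_{n+m_0+1},\dots,T_{n+m}$ determine where the extra points are inserted among the first $n+m_0$.

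\item Since the rank vector of the true first $n+m_0$ scores has the same uniform law as $R$, we couple them to be equal. Formally, we realize $(T_1,\dots,T_{n+m_0})$ conditionally on the observed rank vector, for instance by sorting fresh uniforms and assigning them by the observed ranks.

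\item Let the extra points $T_{n+m_0+1},\dots,T_{n+m}$ be drawn independently, and define
\[
p^\ast_j=\frac{\sum_{i=1}^n \ind\{T_i<T_{n+j}\}+U_j}{n+1},\qquad j\in[m].
\]

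\item Check null preservation. For $j\le m_0$, the count $\sum_i\ind\{T_i<T_{n+j}\}$ depends only on the relative ranks among the first $n+m_0$ points, which match those of the true scores. Hence $p^\ast_j=p_j$ almost surely.
\end{enumerate}

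For the joint law, the $T_i$ ($i\le n+m$) are i.i.d.\ uniform and independent of $(U_j)$. The one subtlety is that $U_1,\dots,U_{m_0}$ must remain independent of the constructed $T$'s. This holds because the $U_j$ are independent of the data, and the constructed $T$'s depend only on the data ranks and external randomness. Proposition~\ref{prop:distribution-oracle-p-values} then gives $(p^\ast_1,\dots,p^\ast_m)\sim\cP_{n,m}$.

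The main obstacle is Step 2: making rigorous that the constructed $(T_1,\dots,T_{n+m_0})$ are truly i.i.d.\ uniform and independent of the $U$'s, given that they are built from the data ranks. I would handle it as follows. Let $V_{(1)}<\dots<V_{(n+m_0)}$ be the order statistics of fresh i.i.d.\ uniforms, independent of the data and of the $U$'s, and set $T_i=V_{(\mathrm{rank}_i)}$, where $\mathrm{rank}_i$ is the rank of the $i$th score. Because the observed rank vector is a uniform random permutation independent of $(V_{(k)})$, the resulting vector has the law of i.i.d.\ uniforms. Conditioning on the order statistics and using the uniformity of the random labelling is a standard argument. Ties are excluded almost surely by Assumption~\ref{asmp:no-tie}, so the ranks are well defined.
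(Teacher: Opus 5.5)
Your construction is correct, but it takes a different route from the paper.

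\textbf{The paper's route.} The paper argues abstractly. It uses Proposition~\ref{prop:distribution_pval_outlier} to get $(p_1,\dots,p_{m_0})\sim\mathcal{P}_{n,m_0}$. It then observes that the family $\{\mathcal{P}_{n,m}\}_{m\ge 1}$ is consistent: each coordinate depends only on $(T_1,\dots,T_n,T_{n+j},U_j)$, so the law of the first $m_0$ coordinates under $\mathcal{P}_{n,m}$ is $\mathcal{P}_{n,m_0}$. Finally it draws the missing coordinates from the conditional kernel of the last $m-m_0$ coordinates given the first $m_0$, independently of everything else, which is a standard gluing step.

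\textbf{Your route.} You couple one level down, at the level of latent scores. You transport the observed ranks of the $n+m_0$ calibration and inlier scores onto the order statistics of fresh uniforms, append fresh uniforms for the outlier slots, and reuse the original $U_j$'s. Each ingredient you need holds:
\begin{enumerate}
\item Exchangeability plus no ties makes the rank vector uniform.
\item A uniform rank map independent of the order statistics reproduces an i.i.d.\ uniform vector.
\item The map is order preserving, so every calibration-versus-inlier comparison, and hence every $p_j$ with $j\le m_0$, is unchanged.
\item $(U_j)$ is independent of the data and of the auxiliary draws.
\end{enumerate}
So $(p_1^\ast,\dots,p_m^\ast)$ is literally the $q$-construction of Proposition~\ref{prop:distribution-oracle-p-values}.

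\textbf{What each buys.} The paper's version is shorter and more general. It needs only the marginal law plus consistency, and would work for any consistent family. However, it leans on a proposition whose proof is omitted and on a non-explicit conditional kernel. Yours is explicit and self-contained. It effectively proves Proposition~\ref{prop:distribution_pval_outlier} along the way, avoids regular conditional distributions, and exhibits the completed p-values as generated by a single i.i.d.\ uniform score vector consistent with the observed ranks.

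\textbf{Write-up.} Drop the unused permutation $\pi$ and the preliminary draw of $T_1,\dots,T_{n+m}$ in Step~1. Define the $T_i$ once: $T_i=V_{(\mathrm{rank}_i)}$ for $i\le n+m_0$, and fresh i.i.d.\ uniforms for $i>n+m_0$.
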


Lemma~\ref{lem:coupling} states that we can always couple the null p-values with certain $p_j^*$'s  for those $j\notin \cH_0$, such that the entire sequence of p-values follows the known, tractable distribution $\cP_{n,m}$. The key to establishing simultaneous upper bounds for the FDP is that a valid upper bound for the new p-value sequence leads to a valid upper bound for the original p-value sequence with unknown non-null p-value distributions. Namely, 
\begin{equation}\label{eqn:inequality-p-p*}
    \sum_{j\in \mathcal{H}_0} \ind\{p_j \le t\} = \sum_{j\in \mathcal{H}_0} \ind\set{p_j^*\le t} \le \sum_{j=1}^m \ind\set{p_j^* \le t}.
\end{equation}
The next theorem shows that any simultaneous upper bound for $\sum_{j=1}^m \ind\set{p_j^* \le t}$
immediately yields a simultaneous FDP bound for the observed p-values.

\begin{theorem}[Simultaneous FDP bound via completed sequences]\label{thm:fdp-outlier-detection-naive}
Under Assumptions~\ref{asmp:no-tie} and~\ref{assumption:exchangeable-outlier-detection}, let $(p_1^\ast,\dots,p_m^\ast)$ be constructed as in
Lemma~\ref{lem:coupling}.
Suppose $G:[0,1]\to[0,1]$ is a  function independent of $\{p_j^*\}_{j=1}^m$ and $\{p_j\}_{j=1}^m$ which satisfies
\begin{equation}\label{eq:completed-envelope}
\mathbb{P}\Bigg(
\frac{1}{m}\sum_{j=1}^m \mathbf{1}\{p_j^\ast \le t\} \le G(t),
\ \forall t\in[0,1]
\Bigg) \ge 1-\delta.
\end{equation}
Then if we work with the rejection sets $R(t)=\{j:p_j\le t\}$, we have  
\[
\mathbb{P}\Bigg(
\mathrm{FDP}(t)
\le
\frac{m\,G(t)}{1 \vee \sum_{j=1}^m \mathbf{1}\{p_j \le t\}},
\ \forall t\in[0,1]
\Bigg) \ge 1-\delta.
\]
\end{theorem}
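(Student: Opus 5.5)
The argument is a pathwise inclusion of events followed by a direct appeal to \eqref{eq:completed-envelope}. Let $E$ denote the event inside the probability in \eqref{eq:completed-envelope}, namely
\[
E=\Big\{\tfrac{1}{m}\textstyle\sum_{j=1}^m \ind\{p_j^\ast\le t\}\le G(t)\ \forall t\in[0,1]\Big\},
\]
so that $\PP(E)\ge 1-\delta$ by hypothesis. I will show that on $E$ the FDP inequality holds simultaneously for every $t$. The target event therefore contains $E$, and its probability is at least $1-\delta$.

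For the inclusion, fix an outcome in $E$ and an arbitrary $t\in[0,1]$. Write $V(t)=\sum_{j\in\cH_0}\ind\{p_j\le t\}$ for the number of false discoveries, and $R(t)=\sum_{j=1}^m \ind\{p_j\le t\}$ for the number of rejections. The null preservation property of Lemma~\ref{lem:coupling} gives $p_j=p_j^\ast$ almost surely for $j\in\cH_0$. This is exactly chain \eqref{eqn:inequality-p-p*}, and it yields
\[
V(t)=\sum_{j\in\cH_0}\ind\{p_j^\ast\le t\}\le \sum_{j=1}^m \ind\{p_j^\ast\le t\}\le m\,G(t),
\]
where the final inequality uses that we are on $E$. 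Dividing by the positive, observed quantity $1\vee R(t)$ and using the definition \eqref{eq:def_FDP_outlier} gives $\fdp(t)\le mG(t)/(1\vee R(t))$. Because $t$ was arbitrary, this holds for all $t$ at once.

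The only delicate point is that the almost-sure equality from Lemma~\ref{lem:coupling} must hold simultaneously in $j$, and it must not depend on $t$. This is fine because it is a finite collection of null events, and once $p_j=p_j^\ast$ for all $j\in\cH_0$, it holds for all $t$. Independence of $G$ from the p-values is what makes \eqref{eq:completed-envelope} meaningful, for example when $G$ comes from Algorithm~\ref{alg:monte-carlo} with Theorem~\ref{thm:monte-carlo-bounds}. Beyond that, the argument uses only the hypothesis as stated, so I see no real obstacle. The main care needed is noting that the bound is derived deterministically on $E$, which is what makes it valid uniformly over $t$ and hence valid for data-dependent thresholds.
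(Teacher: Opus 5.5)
Your proof is correct and follows the paper's argument: on the envelope event you use null preservation from Lemma~\ref{lem:coupling} to bound the false-discovery count by $\sum_{j=1}^m \mathbf{1}\{p_j^\ast\le t\}\le mG(t)$ for all $t$ at once, then divide by $1\vee|R(t)|$. Your remark that the envelope event must be intersected with the almost-sure coupling event is a bit of extra care that the paper leaves implicit.
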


The envelope function $G(t)$ in Theorem~\ref{thm:fdp-outlier-detection-naive} can be constructed using
the procedure from Section~\ref{sec:upper-bound-oracle}, by sampling from the known joint distribution
$\cP_{n,m}$.
Importantly, $G(t)$ depends only on $(n,m,\delta)$ and on the choice of summary statistic, and not on
the particular realization of the p-values.

The above FDP upper bound  can be conservative because the upper bound $\sum_{j=1}^m \ind\set{p_j^* \le t}$ on the number of false discoveries is rather crude. 
Next, we explore two additional techniques to refine this initial impulse.

\subsection{Enhancing the FDP upper bound}
The main new ingredient in our outlier-detection bound is the envelope construction from Section~3. 
The refinements below are modular: the self-refinement step is adapted from \citep{hemerik2019permutation}, while the null-proportion tightening is related to the strategy of \citep{gazin2024transductive}. These refinements can be applied on top of any valid simultaneous upper bound on the null empirical process. 
\subsubsection{A tighter bound on false discoveries}

Our first strategy is an improved upper bound on the number of false discoveries, following~\citep{hemerik2019permutation}. 
Given $\delta \in (0,1)$, suppose one has access to a function $B(t)$ obeying
\begin{equation}
    \label{eqn:goal}
    \PP\big(|\cR(t) \cap \cH_0| \le B(t), \forall t \in (0,1)\big) \ge 1-\delta.
\end{equation}
We consider a ``self-improved" version
\begin{equation}\label{eqn:improved_Bt}
    B^\star(t) = |\cR(t)| - \sup_{s\le t, s \in (0,1)}\{|\cR(s)| - B(s)\}_+ \le B (t).
\end{equation}
The following result shows that ${B^\star(t)}/{(1\vee |\cR(t)|})$ is also a valid upper bound on the FDP.
\begin{prop}[Self-refinement]\label{prop:tighten}
    If the number of false discoveries obeys $\PP(|\cR(t) \cap \cH_0)| \le B(t), \forall t \in (0,1)) \ge 1-\delta$, then
    \[
    \PP\bigg(\fdp(\cR(t)) \le \frac{B^\star(t)}{1\vee |\cR(t)|}, \forall t \in (0,1)\bigg) \ge 1-\delta,
    \]
    where $B^\star(t)$ is defined in \eqref{eqn:improved_Bt}.
    Moreover, $B^\star(t)$ can be computed efficiently as
    \begin{align}
        B^\star(t) = \min_{j: p_j\le t} \{B(p_j) + |\cR(p_{\le t})| - |\cR(p_j)|\},
    \end{align}
    where $p_{\le t}$ denotes the largest p-value not exceeding $t$.
\end{prop}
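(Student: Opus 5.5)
The plan is to work on the event
\[
E=\{|\cR(t)\cap\cH_0|\le B(t),\ \forall t\in(0,1)\},
\]
which has probability at least $1-\delta$ by assumption. On $E$ I will show deterministically that $|\cR(t)\cap\cH_0|\le B^\star(t)$ for every $t$. Dividing by $1\vee|\cR(t)|$ then gives the FDP statement on the same event, so the probability bound transfers immediately.

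The key step is a monotonicity fact: the number of non-null rejections can only grow with the threshold. Write $V(t)=|\cR(t)\cap\cH_0|$ and $S(t)=|\cR(t)|-V(t)$, the number of rejected non-nulls. Since $\cR(s)\subseteq\cR(t)$ for $s\le t$, $S$ is nondecreasing.

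On $E$, for any $s\le t$ we have
\[
S(s)=|\cR(s)|-V(s)\ge |\cR(s)|-B(s),
\]
and $S(s)\ge0$. Hence $S(t)\ge S(s)\ge\{|\cR(s)|-B(s)\}_+$. Taking the supremum over $s\le t$ gives $S(t)\ge\sup_{s\le t}\{|\cR(s)|-B(s)\}_+$. Therefore
\[
V(t)=|\cR(t)|-S(t)\le B^\star(t).
\]
The inequality $B^\star(t)\le B(t)$ is just the special case $s=t$, using that $\{x\}_+\ge x$.

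For the computational formula, I will rewrite
\[
B^\star(t)=\min\Big\{|\cR(t)|,\ \inf_{s\le t}\big[B(s)+|\cR(t)|-|\cR(s)|\big]\Big\},
\]
and then argue that the infimum over $s$ can be restricted to the observed p-values $p_j\le t$. The step function $|\cR(s)|$ is constant between consecutive order statistics and jumps up exactly at each $p_j$. If $B$ is nondecreasing, which holds for envelopes like $mG$ from Theorem~\ref{thm:monte-carlo-bounds}, then on each interval $[p_{(k)},p_{(k+1)})$ the quantity $B(s)-|\cR(s)|$ is minimized at the left endpoint $p_{(k)}$. Also $|\cR(t)|=|\cR(p_{\le t})|$.

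The main obstacle is the remaining pieces of this reduction: the interval before the smallest p-value, and the outer $0$ coming from the positive part. On $[0,p_{(1)})$ we have $|\cR(s)|=0$, so these $s$ contribute only $|\cR(t)|-\{-B(s)\}_+=|\cR(t)|$ when $B\ge0$. That matches the term $\{\cdot\}_+=0$, and it is dominated by the $j$ attaining $p_{(1)}$ provided $B(p_{(1)})\le|\cR(p_{(1)})|-1+\dots$; otherwise one keeps $|\cR(t)|$ in the min. I expect to state explicitly the mild conditions needed, namely $B\ge0$ and $B$ nondecreasing, and to note that the minimum also includes the trivial bound $|\cR(t)|$. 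Since any bound larger than $|\cR(t)|$ can be capped without loss, this yields the displayed formula.
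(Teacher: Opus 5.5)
Your proof is correct, and the main argument is the paper's own. Monotonicity of $S(t)=|\cR(t)|-V(t)$ is the same fact as the paper's increment inequality $V(t)\le V(s)+|\cR(t)|-|\cR(s)|$ for $s\le t$, and $S(t)\ge 0$ is its use of $V(t)\le|\cR(t)|$. Both arguments also work deterministically on the single event $E$.

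For the computable formula, your justification is more accurate than the paper's. The paper restricts the supremum to observed p-values by asserting that $B$ is ``typically'' a right-continuous step function that changes only at the $p_j$'s. That fails for the continuous envelopes $mG(t)$ and $\sup_{k\le\hat m_0}G_k(t)$ that are actually used. Your assumption that $B$ is nondecreasing is what makes the reduction valid. It also costs nothing: replacing $B$ by $\inf_{u\ge t}B(u)$ leaves $E$ unchanged because $V$ is nondecreasing, which matches the paper's remark that $G_r$ may be replaced by its monotone envelope.

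You are also right that the displayed formula equals $B^\star(t)$ only after taking the minimum with $|\cR(t)|$. Without that cap, the formula is an empty minimum for $t<p_{(1)}$, and it strictly exceeds $B^\star(t)$ whenever $B(p_j)>|\cR(p_j)|$ for all $p_j\le t$. It is always $\ge B^\star(t)$, so it remains a valid bound. The paper implicitly adds this cap in Theorem~\ref{thm:fdp-upper-bound-outlier-detection} through $B(t)\wedge|\cR(t)|$.

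The one thing to repair is your garbled condition ``$B(p_{(1)})\le|\cR(p_{(1)})|-1+\dots$''. There is no reason to single out $p_{(1)}$. The correct statement is that the $|\cR(t)|$ term is redundant if and only if some $j$ with $p_j\le t$ satisfies $B(p_j)\le|\cR(p_j)|$.
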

At a high level, the proposition leverages the fact that when enlarging the selection set, the increase in the number of false discoveries cannot exceed the increase in the number of discoveries. Throughout this section and the next, we will apply this refined bound $B^\star(t)$ when constructing FDP upper bounds.

\subsubsection{Upper bound on the number of inliers}

Our second strategy is based on upper bounding the number of inliers, that is, the number of true nulls, $m_0$. The bound in Theorem \ref{thm:fdp-outlier-detection-naive} may be conservative because we substitute the summation over $\cH_0$ with that over the entire set of hypotheses in \eqref{eqn:inequality-p-p*}, and the number of nulls $|\cH_0| = m_0$ with $m$. 
Should $m_0$ be known, one can construct the envelope function $G(t)$ based on $\{p_j^*\}_{j\in \cH_0}\sim \cP_{n,m_0}$ instead of $\cP_{n,m}$.
This observation suggests developing a more informative bound on $m_0$. 

We will employ a technique similar to that of \cite{gazin2024transductive} to establish such an upper bound. Consider the event 
$$
\cE = \Big\{ \textstyle{\sum_{j\in\cH_0}}\ind\set{p_j^* \le t} \le G_{m_0}(t), \forall ~ t\in[0,1] \Big\},
$$
where $G_{m_0}(\cdot)$ is a function which depends on the number of true nulls $m_0$.
On the event $\cE$, it holds that 
$$
\textstyle{\sum_{j=1}^{m}} \ind\set{p_j > t} \ge \textstyle{\sum_{j\in\cH_0}} \ind\set{p_j^* > t} \ge m_0 - G_{m_0}(t), \forall ~ t\in[0,1],
$$
which further implies 
$$
m_0 \le \max\set{r: \textstyle{\sum_{j=1}^{m}} \ind\set{p_j > t} \ge r - G_r(t),  \forall ~t\in[0,1]} =: \hat{m}_0.
$$

\subsubsection{Integrating the two strategies}

With the techniques discussed above, we can now construct a sharper simultaneous upper bound on the FDP. 
\begin{theorem}\label{thm:fdp-upper-bound-outlier-detection}
Suppose $G_k(\cdot)$ obeys
    $\PP(\sum_{j=1}^k \ind\set{p_j^*\le t} \le G_k(t), \forall t \in [0,1]) \ge 1-\delta $ for each $k \in \{1, \dots, m\}$. 
Define  
$$
\hat{m}_0 := \max \Big\{ r \in \{0,1,\dots,m\} \colon  \textstyle{\sum_{j=1}^m} \ind\{ p_j > t \} \ge r - G_r(t), \forall t \in [0,1] \Big\}.
$$
Let $B(t) := \sup_{k \le \hat{m}_0} G_k(t)$. Then with probability at least $1-\delta$, it holds that
$$
\textnormal{FDP}(t) \le \frac{ B^{}(t) \wedge |\mathcal{R}(t)| }{ 1 \vee |\mathcal{R}(t)| }, \quad \forall ~ t \in [0,1].
$$
Furthermore, the bound holds when replacing $B(t)$ with  $B^\star(t) = \min_{s \le t} \{ B(s) + |\mathcal{R}(t)| - |\mathcal{R}(s)| \}$.
\end{theorem}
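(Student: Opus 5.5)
The plan is to work on a single good event and read off all claims there deterministically. By Lemma~\ref{lem:coupling}, together with exchangeability, the null p-values $(p_j)_{j\in\cH_0}$, relabeled as $(p_1^*,\dots,p_{m_0}^*)$, have joint law $\cP_{n,m_0}$; this follows from Proposition~\ref{prop:distribution_pval_outlier}. Since $G_k$ is a simultaneous envelope for the first $k$ coordinates of a $\cP_{n,m}$-vector, and the marginal law of any $k$ coordinates of $\cP_{n,m}$ is $\cP_{n,k}$, the hypothesis applied with $k=m_0$ gives
\[
\PP(\cE)\ge 1-\delta,\qquad \cE=\Big\{\textstyle\sum_{j\in\cH_0}\ind\{p_j\le t\}\le G_{m_0}(t),\ \forall t\in[0,1]\Big\}.
\]
If $G_{m_0}$ is random (for instance from Algorithm~\ref{alg:monte-carlo}), we require, as in Theorem~\ref{thm:fdp-outlier-detection-naive}, that it be independent of the p-values. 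Then the probability statement holds after integrating over the Monte Carlo randomness. The case $m_0=0$ is trivial, because then FDP$\equiv 0$.

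On $\cE$ we first show $m_0\le\hat m_0$. For every $t$ we have
\[
\textstyle\sum_{j=1}^m\ind\{p_j>t\}\ \ge\ \sum_{j\in\cH_0}\ind\{p_j>t\}\ =\ m_0-\sum_{j\in\cH_0}\ind\{p_j\le t\}\ \ge\ m_0-G_{m_0}(t).
\]
So $r=m_0$ belongs to the set defining $\hat m_0$, and since $\hat m_0$ is the maximum of that set, $m_0\le\hat m_0$. Consequently, on $\cE$ and for all $t$,
\[
|\cR(t)\cap\cH_0|\le G_{m_0}(t)\le \sup_{k\le\hat m_0}G_k(t)=B(t).
\]
Trivially $|\cR(t)\cap\cH_0|\le|\cR(t)|$, so the numerator is at most $B(t)\wedge|\cR(t)|$. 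Dividing by $1\vee|\cR(t)|$ gives the first claim on $\cE$ simultaneously for all $t$.

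For the refinement we stay on the same event $\cE$ and argue deterministically, as in Proposition~\ref{prop:tighten}. For $s\le t$ we have $\cR(s)\subseteq\cR(t)$, so
\[
|\cR(t)\cap\cH_0|-|\cR(s)\cap\cH_0|=|(\cR(t)\setminus\cR(s))\cap\cH_0|\le|\cR(t)|-|\cR(s)|.
\]
Hence $|\cR(t)\cap\cH_0|\le B(s)+|\cR(t)|-|\cR(s)|$ for every $s\le t$. Taking the minimum over $s\le t$ gives the bound $B^\star(t)$. The choice $s=t$ shows $B^\star\le B$, and we can still intersect with $|\cR(t)|$.

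The main obstacle is the subtle point that $B(t)$ is built from the data-dependent $\hat m_0$. We must not apply a union bound over $k$. Instead we use the single envelope $G_{m_0}$ at the true (unknown) $m_0$, and the data-dependent $\hat m_0$ enters only through the deterministic inclusion $m_0\le\hat m_0$ on $\cE$. This requires $\cE$ to be the \emph{same} event for both the $\hat m_0$ step and the numerator bound, which the argument above ensures. A secondary technical point is checking that the hypothesis on $G_k$, stated for $\cP_{n,k}$-distributed vectors, transfers to the actual null p-values; Proposition~\ref{prop:distribution_pval_outlier} handles this.
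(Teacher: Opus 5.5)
Your proof is correct and follows essentially the same route as the paper. You work on the single event $\mathcal{E}$ at the true $m_0$, deduce $\hat m_0\ge m_0$ there so that $B\ge G_{m_0}$, and then run the deterministic self-refinement argument of Proposition~\ref{prop:tighten} on that same event.
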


The computational cost of the proposed procedure is modest. Once the summary
statistics have been computed, evaluating the envelope function $G_r(t)$ at a
fixed value of $t$ requires only $O(1)$ operations. Note that the envelope function $G_r(t)$ can be taken to be nondecreasing in $t$
(by replacing it with its monotone envelope if necessary). This property
simplifies the verification step below.

To compute the estimator
$\hat m_0$, we must verify the condition
\[
\sum_{j=1}^m \mathbf{1}\{p_j > t\} \ge r - G_r(t), \qquad \forall t \in [0,1].
\]
Since $\sum_{j=1}^m \mathbf{1}\{p_j > t\}$ is a step function with at most $m$
discontinuity points and $G_r(t)$ is nondecreasing, this verification reduces to checking these finitely many
points. For a fixed value of $r$,
this step requires $O(m)$ time. As a result, computing $\hat m_0$ requires
$O(m^2)$ operations overall.

Once the functions $\{G_k(t)\}_{k\le \hat m_0}$ are available, evaluating
$B(t)=\sup_{k\le \hat m_0} G_k(t)$ at a fixed value of $t$ costs $O(m)$ time. By Proposition~\ref{prop:tighten},
$B^\star(t)$ only needs to be evaluated at $m$ candidate p-values at most.
Consequently, computing $B^\star(t)$ requires $O(m^2)$ time in total.
Therefore, both $\hat m_0$ and $B^\star(t)$ can be computed in quadratic time
with respect to $m$.

\subsection{Experiments}
We now evaluate the empirical performance of our FDP upper bound on synthetic data. Following the setup of \cite{gazin2024transductive}, each data point $X_i \in \mathbb{R}^{50}$ is drawn from a Gaussian mixture distribution $\cP_0^a$ of the form
\[
X_i = \sqrt{1+a}\,V_i + W_i,
\]
where $V_i \sim N(0,I_{50})$ has independent standard Gaussian components, and $W_i$ is drawn uniformly from a discrete support set $W \subseteq [-3,3]^{50}$ of cardinality $|W|=50$. The vectors in $W$ are sampled once at the start of the experiment. The parameter $a \geq 0$ controls the signal strength:  the inliers follow the distribution $\cP_0^0$, and  a larger value of $a$ leads to outliers that are more separated from the inliers. 
We repeat each experiment $K=100$ times. In each replication, we train a one-class SVM on $n_{\mathrm{train}}=1000$ samples from $\cP_0^0$, use $n_{\mathrm{calib}}=1000$ inlier samples to construct conformal p-values, and generate $n_{\mathrm{test}}=1000$ test samples, of which $90\%$ are inliers from $\cP_0^0$ and $10\%$ are outliers from $\cP_0^a$.

\paragraph{Validity of FDP bounds}
Figure~\ref{fig:exp-fdp-diff} plots the difference between our bound (constructed with the truncated higher-criticism statistic, $[\ell,r]=[0.01,0.99]$, $\beta=1/2$) and the true FDP across thresholds $t$. The empirical coverage across 100 replications is $0.96$, exceeding the target $1-\delta=0.9$. The gap between the bound and the true FDP narrows as $t$ increases, reflecting the fact that larger rejection sets stabilize the denominator of the FDP and reduce variance.
\begin{figure}[t]
    \centering
    \includegraphics[width=0.8\linewidth]{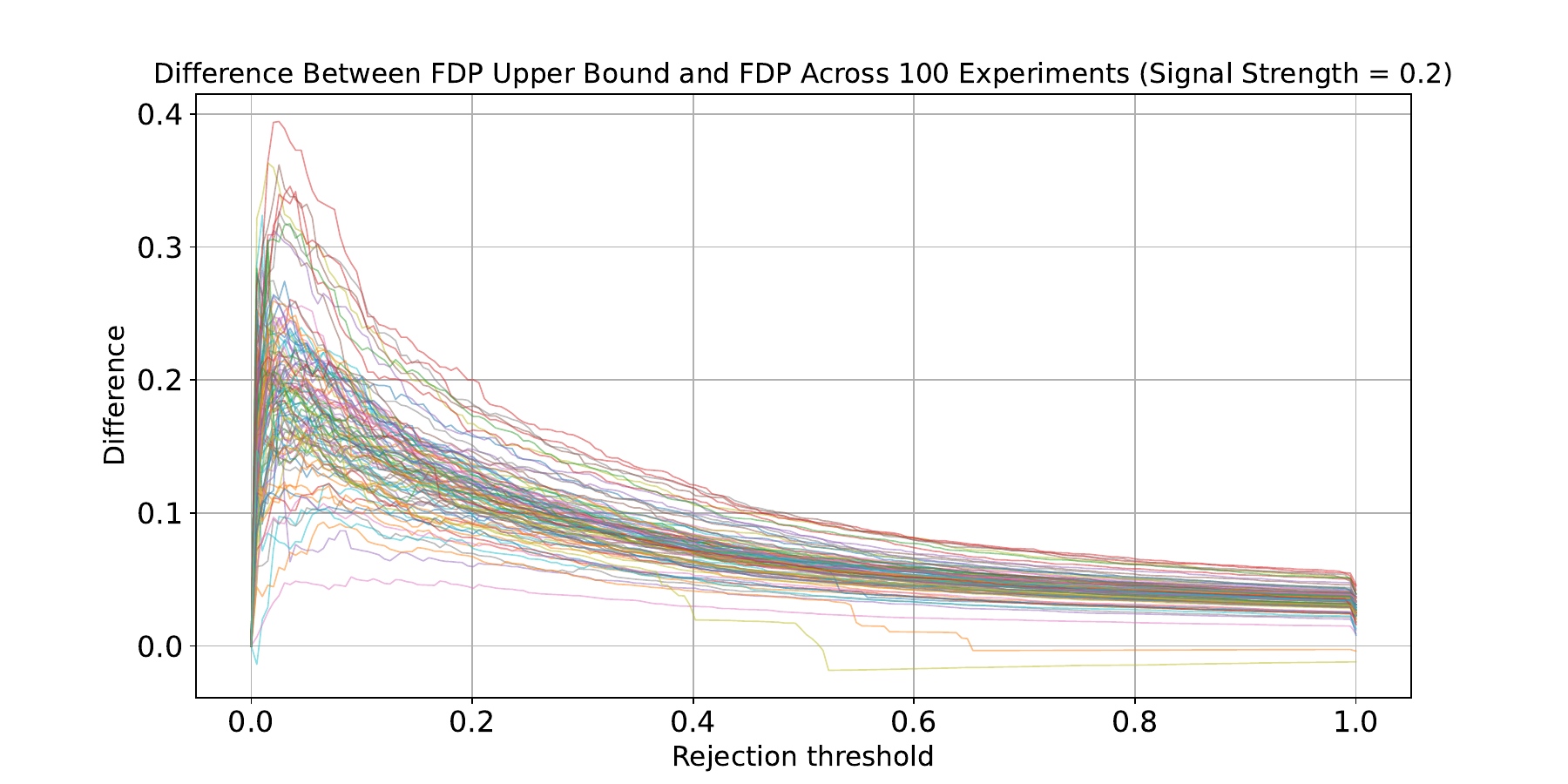}
    \caption{\textbf{Empirical coverage of the FDP bound.}
The plot displays the difference between our FDP upper bound (using MC-THC) and the true FDP across 100 replications of the outlier detection task ($n=m=1000$, signal strength $a=0.2$, target $1-\delta=0.9$). The curves remain above zero in $96\%$ of the trials, demonstrating validity. The variance of the bound decreases as the rejection threshold $t$ increases.
    }
    \label{fig:exp-fdp-diff}
\end{figure}

\begin{figure}[htbp]
    \centering
      \centering
      \includegraphics[width=\linewidth]{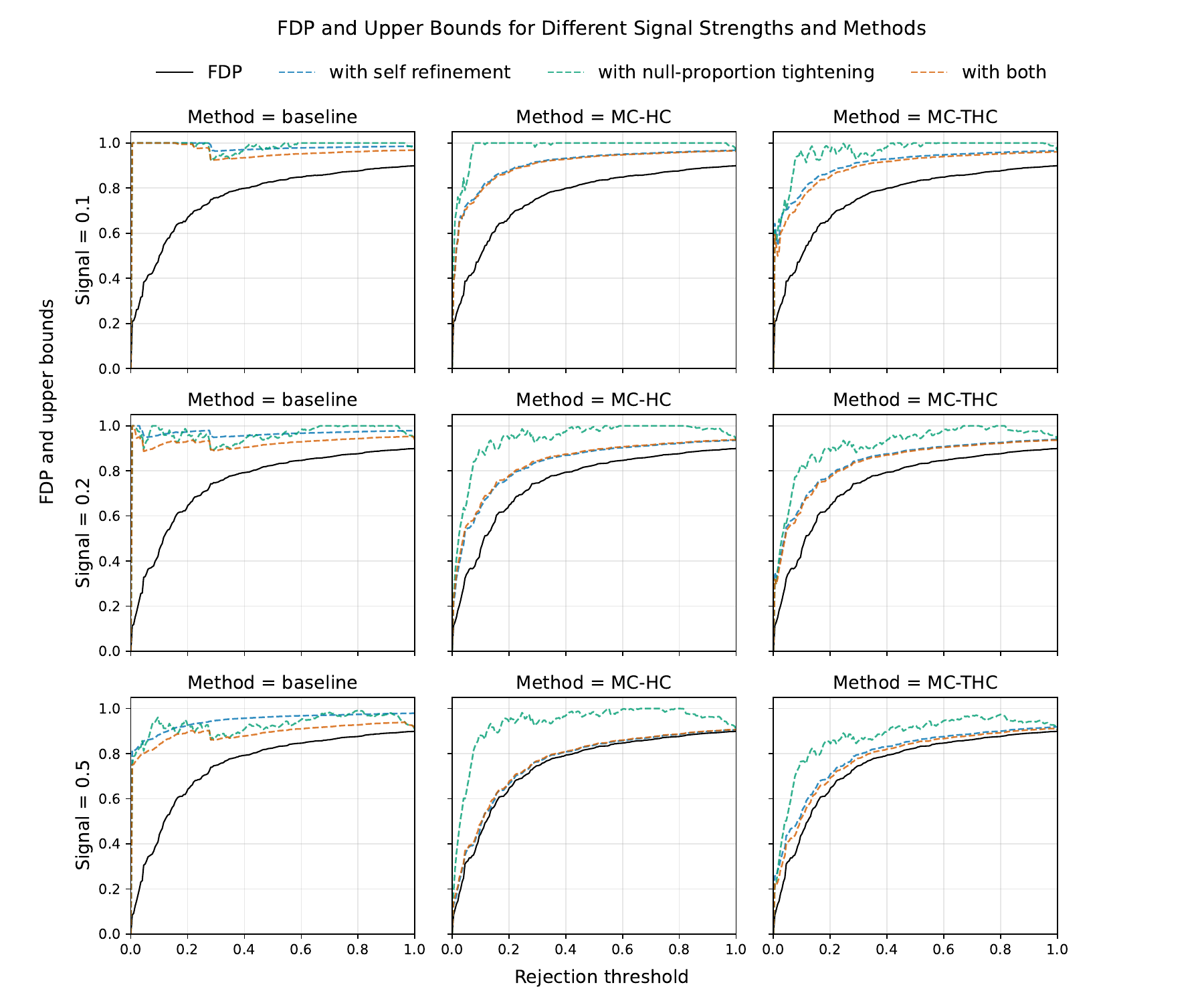}
      \caption{\textbf{Impact of refinement strategies across signal strengths.}
FDP upper bounds in outlier detection with fixed purity ($90\%$) and varying signal strength ($a \in \{0.1, 0.2, 0.5\}$). 
Columns correspond to different FDP envelopes. The curves compare three refinement levels: bounding the null count $\hat{m}_0$, the self-refinement step from Proposition \ref{prop:tighten}, and the combined strategy. The combined approach consistently yields the sharpest envelope.}
      \label{fig:outlier_fixed_purity}
\end{figure}

\begin{figure}[htbp]
    \centering
      \centering
      \includegraphics[width=\linewidth]{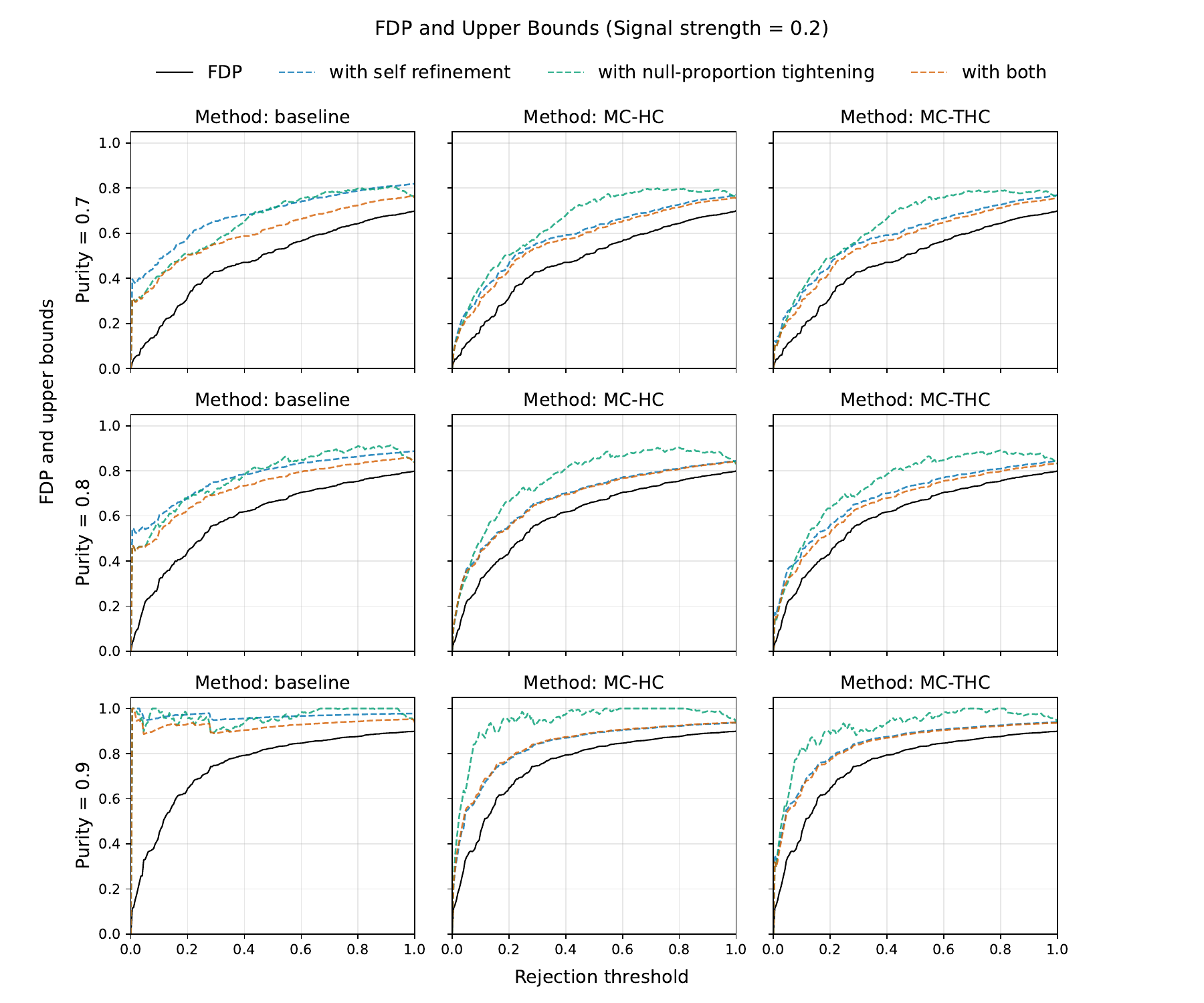}
      \caption{\textbf{Impact of refinement strategies across purity levels.}
FDP upper bounds with fixed signal strength ($a=0.2$) and varying inlier purity ($\{70\%, 80\%, 90\%\}$). As the proportion of outliers increases (lower purity), the benefit of the $\hat{m}_0$ tightening becomes more pronounced compared to the self-refinement step alone. }
      \label{fig:outlier_fixed_strength}
\end{figure}

\paragraph{Comparison of tightening strategies}

We then evaluate the effectiveness of the two strategies introduced in Theorem~\ref{thm:fdp-upper-bound-outlier-detection}: the self-improved version from Proposition~\ref{prop:tighten} and the sharper upper bound on the number of nulls $\hat m_0$. Figure~\ref{fig:outlier_fixed_purity} reports the FDP bounds under fixed purity (fraction of inliers in the test data) of $0.9$ across different signal strengths. Each panel shows three curves: using only $\hat m_0$, using only Proposition~\ref{prop:tighten}, and using both. Figure~\ref{fig:outlier_fixed_strength} presents the results with the same methods under a  signal strength at $a=0.2$ while varying the purity level.  

The results highlight several key patterns. First, as the signal strength increases, our proposed FDP upper bound becomes increasingly sharp, as the detection becomes easier with improved separability between inliers and outliers. Second, the null-proportion tightening strategy yields a larger efficiency gain than the self-improvement strategy alone, yet using both consistently gives the sharpest bound. The relative contribution of the two refinement strategies varies visibly with the purity level. In Figure~\ref{fig:outlier_fixed_strength}, when the purity is high, the self-improvement step dominates, while at lower purity levels, the additional benefit of tightening the upper bound on $\hat m_0$ becomes more visible. Finally, our combined strategy yields substantially tighter FDP bounds than the baseline upper bound of \citep{gazin2024transductive} across all regimes. As illustrated in Figure~\ref{fig:pvals_sim}, the linear envelope function fails to adapt to the heteroskedasticity of the empirical CDF, and is conservative particularly near the boundaries ($t \approx 0$). In contrast, our method leverages the variance-adaptive MC-THC statistic with additional tightening strategies to produce a sharper upper bound. 

\FloatBarrier
\section{Controlling FDP/precision in conformal selection}\label{sec:conformal-selection}

In this section, we demonstrate how our method can be used to obtain simultaneous FDP bounds for the selection sets generated by conformal selection \citep{jin2023aselection,jin2023bmodel}. 
In this problem, one has access to a calibration dataset $\cD_\calib = \{(X_i,Y_i)\}_{i\in[n]}$, where $X_i\in \cX$ are the features, and $Y_i\in \RR$ is the outcome of interest. 
We also have a test dataset $\cD_\test = \{X_{n+j}\}_{j\in [m]}$ with unknown outcomes $\{Y_{n+j}\}_{j\in[m]}$ and wish to identify test samples where the unknown outcomes are substantially large with $Y_{n+j}>c_{n+j}$; $c_{n+j}\in \RR$ are pre-specified and potentially random thresholds. 

\subsection{Setup}


The above problem can be viewed as a binary classification problem over the test samples. The positive class consists of samples whose outcomes exceed their corresponding thresholds, that is, samples with $Y_{n+j} > c_{n+j}$. A selection set $\mathcal R \subseteq [m]$ is therefore the set of test samples predicted to be positive. In this language, the precision of $\mathcal R$ is the fraction of selected samples that are truly positive,
\[
\mathrm{Precision}(\mathcal R)
=
\frac{\sum_{j=1}^m \ind\{j\in\mathcal R,\,Y_{n+j}>c_{n+j}\}}{1\vee |\mathcal R|}.
\]
Equivalently, the false discovery proportion is one minus precision:
\[
\mathrm{FDP}(\mathcal R)
=
\frac{\sum_{j=1}^m \ind\{j\in\mathcal R,\,Y_{n+j}\le c_{n+j}\}}{1\vee |\mathcal R|}.
\]
Thus, an upper bound on the FDP gives a lower bound on the precision of the selected set.



The conformal selection procedure of \cite{jin2023aselection} identifies a selection set $\cR$ such that the FDR, $\EE[\fdp(\cR)]$, is below a pre-specified level $\alpha \in (0,1)$. This method uses a calibration dataset $\{(X_i, Y_i, c_i)\}_{i \in [n]}$. We assume the following threshold observation and exchangeability condition.
\begin{assumption}
    The random triplets $\{(X_{i}, Y_{i},c_i)\}_{i \in [m+n]}$ are exchangeable across $i \in [n+m]$.  This condition is automatically satisfied when $\{(X_i,Y_i)\}_{i=1}^{n+m}$ are exchangeable and either $c_i=c$ for every $i\in[n+m]$ for some fixed constant $c$, or $c_i=g(X_i)$ for every $i\in[n+m]$ for some fixed function $g$.
\end{assumption}

To assess the confidence in rejecting $H_j$, conformal selection constructs conformal p-values based on any nonconformity score $V$ that satisfies the following monotone property.

\begin{definition}
    A nonconformity score $V(\cdot, \cdot,\cdot): \mathcal{X} \times \mathcal{Y}\times \mathbb{R} \mapsto \mathbb{R}$ is monotone if $V(x,y,c) \le V(x, y',c)$ for any $x\in \mathcal{X}, c\in\mathbb{R}$, and any $y,y'\in \mathcal{Y}$ with $y\le y'$.
\end{definition}
Generally, the nonconformity score should measure how extreme the value $y$ is compared to the typical behavior of the outcome for a given feature value and threshold $c$. For instance, one could define $V(x,y,c) = y - \hat\mu(x)$, where $\hat\mu(\cdot)$ is a prediction algorithm estimating the conditional mean of $Y$ given $X$.
 
Conformal selection constructs the p-values  
$$
p_j^\cs = \frac{\sum_{i=1}^n \ind \{V(X_i,Y_i, c_i) <  V(X_{n+j}, c_{n+j}, c_{n+j})\} +  U_j}{n+1},
$$
where $U_j \sim \mbox{Unif}(0,1)$ are i.i.d. random variables.  

As it is natural to select p-values of the smallest magnitudes, we  focus on selection sets of the form $\cR(t) = \set{j\in[m]:p_j^\cs \le t}$. The conformal selection set in~\cite{jin2023aselection} coincides with $\cR(t)$ when  $t\in \RR$ is the cutoff determined by running the BH procedure on $\{p_j^\cs\}$ at nominal level $\alpha$. 
In practice, a uniformly valid upper bound on $\fdp(\cR(t))$ across $t\in [0,1]$ provides useful information on the quality of a series of selection sets, allowing practitioners to flexibly balance power and selection error control.

To prepare, we define the ``oracle'' conformal uniform variables 
\@\label{eq:oracle_pval_cs}
p_j^{\cs,\oracle} = \frac{\sum_{i=1}^n \ind \{V(X_i,Y_i,c_i) < V(X_{n+j},Y_{n+j},c_{n+j})\} +  U_j}{n+1}.
\@
Here, the uniform random variables $U_j$'s are the same as those in $p_j^{\cs}$'s. It is straightforward to verify that the $p_j^\oracle$'s are exchangeable random variables and follow the distribution $\cP_{n,m}$ from Proposition \ref{prop:distribution-oracle-p-values}.
While they are unobservable due to the unknown labels $Y_{n+j}$, these oracle p-values are key to constructing our upper bounds.  

\subsection{Methodology}

A key result in conformal selection is that we can bound the number of false discoveries using ``oracle" conformal p-values $p_j^\oracle$, as shown in the following lemma.

\begin{lemma}
\label{lem:conformal-selection-fdp-inequality}
Consider the set of null hypotheses $\mathcal{H}_0 = \{j \in [m]: Y_{n+j} \le c_{n+j}\}$. For any monotone score function $V$ and the selection set $\cR(t) = \{j \in [m]: p_j^\cs \le t\}$, it holds for every $t \in [0,1]$ that 
\begin{equation}
\label{eq:upper_bound_cs_fd}
|\cR(t) \cap \cH_0| \le \textstyle{\sum_{j=1}^m} \ind \{p_j^{\cs,\oracle} \le t\}.
\end{equation}
Furthermore, the above inequality becomes an equality for $t < \frac{1}{n+1}\sum_{i=1}^n \ind\{Y_i \le c_i\}$ if for any $x\in \cX$, one has (1) $V(x,y,c) > V(x,c,c)$ for any $y > c$; and 
(2)   $V(x,y,c) = V(x, c, c)$ for  $y \le c$. 
\end{lemma}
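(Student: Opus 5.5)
The inequality is proved pointwise, term by term in $j$. For $j\in\cH_0$ we have $Y_{n+j}\le c_{n+j}$, so monotonicity of $V$ in its second argument gives
\[
V(X_{n+j},Y_{n+j},c_{n+j})\le V(X_{n+j},c_{n+j},c_{n+j}).
\]
Consequently, every calibration index $i$ with $V(X_i,Y_i,c_i)<V(X_{n+j},Y_{n+j},c_{n+j})$ also satisfies $V(X_i,Y_i,c_i)<V(X_{n+j},c_{n+j},c_{n+j})$. The numerator counts are therefore ordered. Since $p_j^\cs$ and $p_j^{\cs,\oracle}$ use the same $U_j$, we get $p_j^{\cs,\oracle}\le p_j^\cs$ for every $j\in\cH_0$. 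Hence
\[
\ind\{p_j^\cs\le t\}\le \ind\{p_j^{\cs,\oracle}\le t\}\qquad \text{for } j\in\cH_0 .
\]
Summing over $j\in\cH_0$ and then enlarging the sum to all $j\in[m]$, whose extra terms are nonnegative, yields \eqref{eq:upper_bound_cs_fd} simultaneously for all $t$. This is a deterministic statement, so no probabilistic argument is needed.

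For the equality claim, two slack sources must vanish: the null terms must coincide, and the non-null oracle terms must be zero. Take $t<\frac{1}{n+1}\sum_{i=1}^n\ind\{Y_i\le c_i\}$.

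\emph{Null terms.} For $j\in\cH_0$, condition (2) gives $V(X_{n+j},Y_{n+j},c_{n+j})=V(X_{n+j},c_{n+j},c_{n+j})$. The two scores are identical, so $p_j^\cs=p_j^{\cs,\oracle}$ exactly.

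\emph{Non-null terms.} Fix $j\notin\cH_0$, so $Y_{n+j}>c_{n+j}$, and let $i$ be a calibration point with $Y_i\le c_i$. I will show that $V(X_i,Y_i,c_i)<V(X_{n+j},Y_{n+j},c_{n+j})$, so that each such $i$ contributes to the count. Then $p_j^{\cs,\oracle}\ge \frac{1}{n+1}\sum_i\ind\{Y_i\le c_i\}>t$, and the non-null oracle indicators are all zero. Combining the two parts gives
\[
\sum_{j=1}^m\ind\{p_j^{\cs,\oracle}\le t\}=\sum_{j\in\cH_0}\ind\{p_j^\cs\le t\}=|\cR(t)\cap\cH_0| .
\]

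The main obstacle is the cross-unit score comparison in the non-null step. Conditions (1)--(2) only compare scores at the same $x$ and $c$, whereas here $i$ and $n+j$ have different features and thresholds. By (2), $V(X_i,Y_i,c_i)=V(X_i,c_i,c_i)$, and by (1), $V(X_{n+j},Y_{n+j},c_{n+j})>V(X_{n+j},c_{n+j},c_{n+j})$. What remains is to compare $V(X_i,c_i,c_i)$ with $V(X_{n+j},c_{n+j},c_{n+j})$, which the stated conditions do not do. I would first try to read conditions (1)--(2) as implicitly asserting that the null score value $V(x,c,c)$ is a common constant, as in the clipped score $V(x,y,c)=M\ind\{y>c\}-\hat\mu(x)$ with $M$ large from \cite{jin2023aselection}. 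If that reading is not available, I would strengthen the hypothesis explicitly to require that $V(x,y,c)>V(x',c',c')$ whenever $y>c$. Either way the remaining steps are routine, and the equality should be stated almost surely under Assumption~\ref{asmp:no-tie}.
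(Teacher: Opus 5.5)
Your proof follows the paper's route in both parts. Monotonicity gives $p_j^{\cs,\oracle}\le p_j^{\cs}$ for every null $j$. For the equality, the paper likewise argues that a non-null oracle p-value is at least $\frac{1}{n+1}\sum_{i}\ind\{Y_i\le c_i\}>t$.

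Your concern about that step is correct, and it is a defect of the statement, not of your argument. The paper simply asserts $\sum_i\ind\{V(X_i,Y_i,c_i)<V(X_{n+j},Y_{n+j},c_{n+j})\}\ge\sum_i\ind\{Y_i\le c_i\}$, and this does not follow from (1)--(2). For example, $V(x,y,c)=\ind\{y>c\}+10x$ on $\cX=[0,1]$ is monotone and satisfies (1)--(2) at every $x$. Take $n=m=1$, a null calibration point with $X_1>1/2$, and a non-null test point with $X_2<1/20$. This is a tie-free event of positive probability if, say, $X$ and $Y$ are independent $\mathrm{Unif}[0,1]$, the pairs are i.i.d., and $c\equiv 1/2$. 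On it, $p_1^{\cs,\oracle}=U_1/2$. So at $t=1/4<1/2$ the right-hand side of the inequality equals $1$ with probability $1/2$, while $|\cR(t)\cap\cH_0|=0$.

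The equality part therefore needs your separation hypothesis, $V(x,y,c)>V(x',c',c')$ whenever $y>c$. With it, your argument is complete and fully deterministic.

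Discard the first reading you proposed, though. Nothing in (1)--(2) makes $V(x,c,c)$ constant, and the clipped score does not have this property: $V(x,c,c)=c-\hat\mu(x)$ in the paper's version, and $-\hat\mu(x)$ in yours. What the clipped score does deliver, once $M$ is large relative to the $c_i$ and the range of $\hat\mu$ (or $M=+\infty$), is exactly the separation property.
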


Lemma \ref{lem:conformal-selection-fdp-inequality} shows that it suffices to construct a valid upper bound on the empirical CDF of $\{p_j^{\cs,\oracle}\}_{j=1}^m$, which can be done even though these oracle p-values are not observed, given that their distribution is known. 
The second part of Lemma~\ref{lem:conformal-selection-fdp-inequality} hints at the score choice that leads to sharp upper bounds.  
An example that satisfies the equality conditions in Lemma~\ref{lem:conformal-selection-fdp-inequality} is   the clipped score function 
$$
V(x,y,c) = M\ind\{y>c\} + c\ind\{y\leq c\} - \hat\mu(x),
$$ 
where $M > 2\sup_x |\hat\mu(x)|$ and $\hat{\mu}(x)$ is any predictor of $\EE[Y\given X=x]$. Such clipped scores are shown to be powerful in~\cite{jin2023aselection} and a standard choice in conformal selection methods~\cite{bai2024optimized,nair2025diversifying,gui2025acs}. In general, given any monotone score function $V'(x,y,c)$ specified by the user, one can always construct 
\$
V(x,y,c) = M\ind\{y>c\} + V'(x,y,c) \ind\{y\leq c\},
\$
where $M>0$ is a sufficiently large constant, or effectively, $M=+\infty$, that preserves the relative ranking of p-values based on $V'$ while achieving higher power and satisfying the conditions in Lemma~\ref{lem:conformal-selection-fdp-inequality}.


\begin{theorem}\label{thm:fdp-upper-bound-cs}
    Let $G(\cdot)$ be a random function satisfying $\PP (\frac{1}{m}\sum_{j=1}^m \ind\{p_j^{\cs,\oracle} \le t\} \le G(t), \forall t\in[0,1]) \ge 1-\delta$  using methods from Section \ref{sec:upper-bound-oracle}. For selection sets of the form $\cR(t) = \{j \in [m]: p_j^\cs \le t\}$, we have
    \begin{align*}
        \PP \bigg(\fdp(\cR(t)) \le \frac{mG(t)}{1 \vee |\cR(t)|}, ~\forall ~t\in[0,1]\bigg) \ge 1-\delta.
    \end{align*}
\end{theorem}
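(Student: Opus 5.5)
The plan is a deterministic inclusion of events: on the high-probability event supplied by the envelope $G$, the FDP bound follows pathwise from Lemma~\ref{lem:conformal-selection-fdp-inequality}.

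Define the event
\[
\cE := \Big\{ \tfrac{1}{m}\textstyle\sum_{j=1}^m \ind\{p_j^{\cs,\oracle}\le t\} \le G(t),\ \forall t\in[0,1]\Big\}.
\]
By hypothesis, $\PP(\cE)\ge 1-\delta$. The hypothesis is exactly what the construction of Section~\ref{sec:upper-bound-oracle} delivers. The oracle p-values in \eqref{eq:oracle_pval_cs} have the known law $\cP_{n,m}$, so Algorithm~\ref{alg:monte-carlo} and Theorem~\ref{thm:monte-carlo-bounds} can be applied by sampling from $\cP_{n,m}$. The Monte Carlo draws are independent of the data, which is what makes the probability statement hold jointly over the data and the sampling.

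The main step is to show that $\cE$ is contained in the target event. Fix any $t\in[0,1]$. Lemma~\ref{lem:conformal-selection-fdp-inequality} holds deterministically, for every realization and every $t$ simultaneously. It gives $|\cR(t)\cap\cH_0|\le \sum_{j=1}^m \ind\{p_j^{\cs,\oracle}\le t\}$. On $\cE$, the right-hand side is at most $mG(t)$. Dividing by $1\vee|\cR(t)|>0$ yields
\[
\fdp(\cR(t)) = \frac{|\cR(t)\cap\cH_0|}{1\vee|\cR(t)|}\le \frac{mG(t)}{1\vee|\cR(t)|}.
\]
When $\cR(t)=\emptyset$, the FDP is $0$ and the bound holds trivially because $G\ge 0$.

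Since this holds for all $t$ on $\cE$, the target event contains $\cE$, and its probability is therefore at least $1-\delta$.

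The only point needing care is that the inequality of Lemma~\ref{lem:conformal-selection-fdp-inequality} is pathwise and uniform in $t$. If it held only marginally for each $t$, one could not combine it with the simultaneous envelope. I would therefore re-check in the lemma's statement that it holds for every $t$ on every realization; the monotonicity of $V$ gives $p_j^\cs\ge p_j^{\cs,\oracle}$ for each null $j$ deterministically, which ensures this. Beyond that, the argument is a direct event inclusion and involves no real obstacle.
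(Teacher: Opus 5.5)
Your proposal is correct and is essentially the paper's own argument. Both define the envelope event $\cE$, apply the pathwise, all-$t$ inequality of Lemma~\ref{lem:conformal-selection-fdp-inequality} on $\cE$ to get $|\cR(t)\cap\cH_0|\le mG(t)$ simultaneously in $t$, and then divide by $1\vee|\cR(t)|$; your check that $p_j^{\cs,\oracle}\le p_j^{\cs}$ for every null $j$, by monotonicity of $V$, is exactly the content of that lemma's proof.
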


The upper bound in Theorem~\ref{thm:fdp-upper-bound-cs} is particularly tight if the inequality in~\eqref{eq:upper_bound_cs_fd} is an equality, which is the case for the clipped scores. 



\subsection{Application to drug discovery}
\label{sec:cs-experiment}
We now provide the full details for the drug-target interaction experiment previewed in Figure \ref{fig:intro}. We evaluate our method on the drug-target interaction (DTI) prediction task following \cite{jin2023aselection}. The goal is to identify promising drug-target pairs from a large pool of candidate pairs, which can help pharmaceutical companies prioritize their drug development efforts. We use the DAVIS dataset \citep{davis2011comprehensive}, which contains $N=30,060$ drug-target pairs (samples). The feature is the structure of the drug candidate and the disease target, and the response is their binding affinity measurement. {We first randomly sample 20\% of the data to train a small neural network over $10$ epochs.} 
The remaining data is randomly split into calibration (20\%) and test (60\%) sets. We use the clipped score function with threshold $c_j$ set to the $80$th percentile of the binding affinities in the training pairs with the same target as test sample $j$. 
Here, we choose the $80$th percentile threshold rather than a higher value to encourage more discoveries.

\paragraph{Validity}
We evaluate the simultaneous validity of our FDP upper bounds across repeated experimental runs. 
As illustrated in Figure~\ref{fig:intro}a, a single realization shows that the proposed upper bound lies above the true FDP across all thresholds. 
Figure~\ref{fig:intro}b reports the difference between the FDP upper bound and the true FDP over $100$ independent runs. 
In $92$ out of $100$ runs, the upper bound remains above the true FDP for all thresholds, corresponding to an empirical coverage rate of $92\%$. 
This is consistent with the nominal confidence level $1-\delta = 0.9$.



\begin{figure}[htbp]
    \centering
    \includegraphics[width=\linewidth]{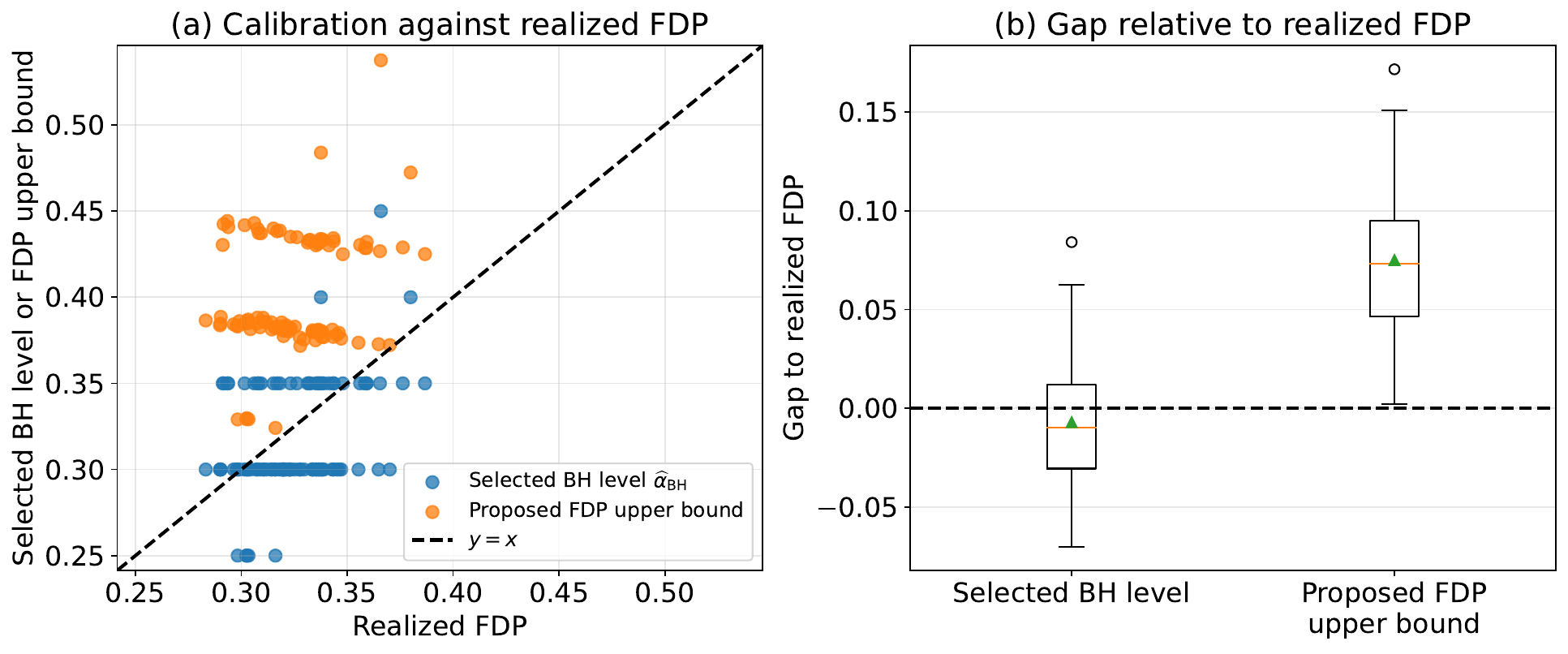}
    \caption{
    {
    Failure of post hoc BH levels as FDP certificates. Starting from $\alpha=0.05$, the analyst increases the BH nominal level in increments of $0.05$ until at least $5\%$ of the test samples are selected. Left: realized FDP versus two candidate certificates: the post-hoc BH level $\widehat\alpha_{\mathrm{BH}}$ and our proposed simultaneous FDP upper bound. Points below the diagonal fail to upper-bound the realized FDP. Right: distribution of certificate minus realized FDP across experiments. 
    }}
    \label{fig:drug_interaction_BH}
\end{figure}

\paragraph{FDP upper bound for an adaptive selection rule}
We mimic a realistic scenario in which an analyst wishes to select at least $5\%$ of the test samples. Starting from the BH procedure at level $\alpha=0.05$, the nominal level is increased by $0.05$ until the number of selections is larger than $5\%$ of the total number of test samples. This produces an adaptive selection rule that depends on the observed data.

Figure~\ref{fig:drug_interaction_BH} summarizes the behavior of the proposed FDP upper bound under this procedure. The left panel compares the selected nominal level and the proposed upper bound against the true FDP across experiments. The right panel displays the distribution of certificate minus realized FDP across experiments. Overall, the proposed upper bound consistently dominates the true FDP across experiments, whereas the post-hoc BH level underestimates the true FDP.

\section{Conclusion} \label{sec:conclusion}
In this paper, we introduce a unified framework for simultaneous inference in conformal prediction. Our method builds on constructing high-probability upper bounds for the empirical CDF of conformal uniform variables. We develop simultaneous coverage guarantees for two key applications: false coverage proportion (FCP)   in conformal prediction and false discovery proportion (FDP) in multiple testing problems with conformal p-values, such as outlier detection and conformal selection. Through extensive experiments on both synthetic and real datasets, we demonstrate that our method provides valid simultaneous coverage while remaining practically efficient.

Our framework is highly versatile, as it can be applied to many other problems  as long as we can simulate from the joint distribution of p-values. Our approach also allows one to choose different summary/test statistics to achieve various inferential goals, making it applicable beyond the specific examples presented in this work.

Our work has several limitations.
A key strength of our approach is that the proposed FDP upper bounds hold simultaneously over all thresholds; however, this generality can lead to conservative bounds when evaluated at a single threshold, such as the threshold selected by the Benjamini--Hochberg procedure (see Figure~\ref{fig:drug_interaction_BH}). Future work may develop tighter bounds for a specific data-adaptive thresholds, including the Benjamini--Hochberg threshold, and may extend the framework beyond exchangeable settings to handle covariate shift between calibration and test data.

\subsection*{Acknowledgments}

EC was supported by ONR (N00014-24-1-2305) and NIH (1R01AG08950901A1). 

\newpage
\bibliographystyle{alpha}
\bibliography{reference}

\newpage 
\appendix



\section{Adaptive issue in BH procedure}

Suppose a scientist faces a multiple testing problem with $100$ hypotheses and corresponding p-values. 
He first applies the Benjamini--Hochberg (BH) procedure at level $\alpha=0.05$, believing this to be a reasonable starting point. 
The BH procedure rejects all hypotheses with ordered p-values $p_{(i)}$ satisfying 
\[
p_{(i)} \leq \frac{i}{m}\alpha, \quad i=1,\dots,m,
\]
where $m$ is the total number of tests. 
Geometrically, this corresponds to comparing the sorted p-values against the \emph{rejection line} of slope $\alpha/m$. 

As shown in Figure \ref{fig:adaptive_example}, which depicts one realization of $100$ sorted p-values, only two points fall below the rejection line for $\alpha=0.05$, leading to just two rejections. 
Finding this too conservative, the scientist then increases the threshold to $\alpha=0.1$, which yields more rejections. 
While this post-hoc adjustment is natural in practice, it invalidates the formal FDR guarantee of the BH procedure, which requires the significance level to be fixed \emph{a priori}. 
This motivates our simultaneous FDP bounds, which remain valid uniformly across all thresholds.

\begin{figure}[H]
    \centering
    \includegraphics[width=0.6\linewidth]{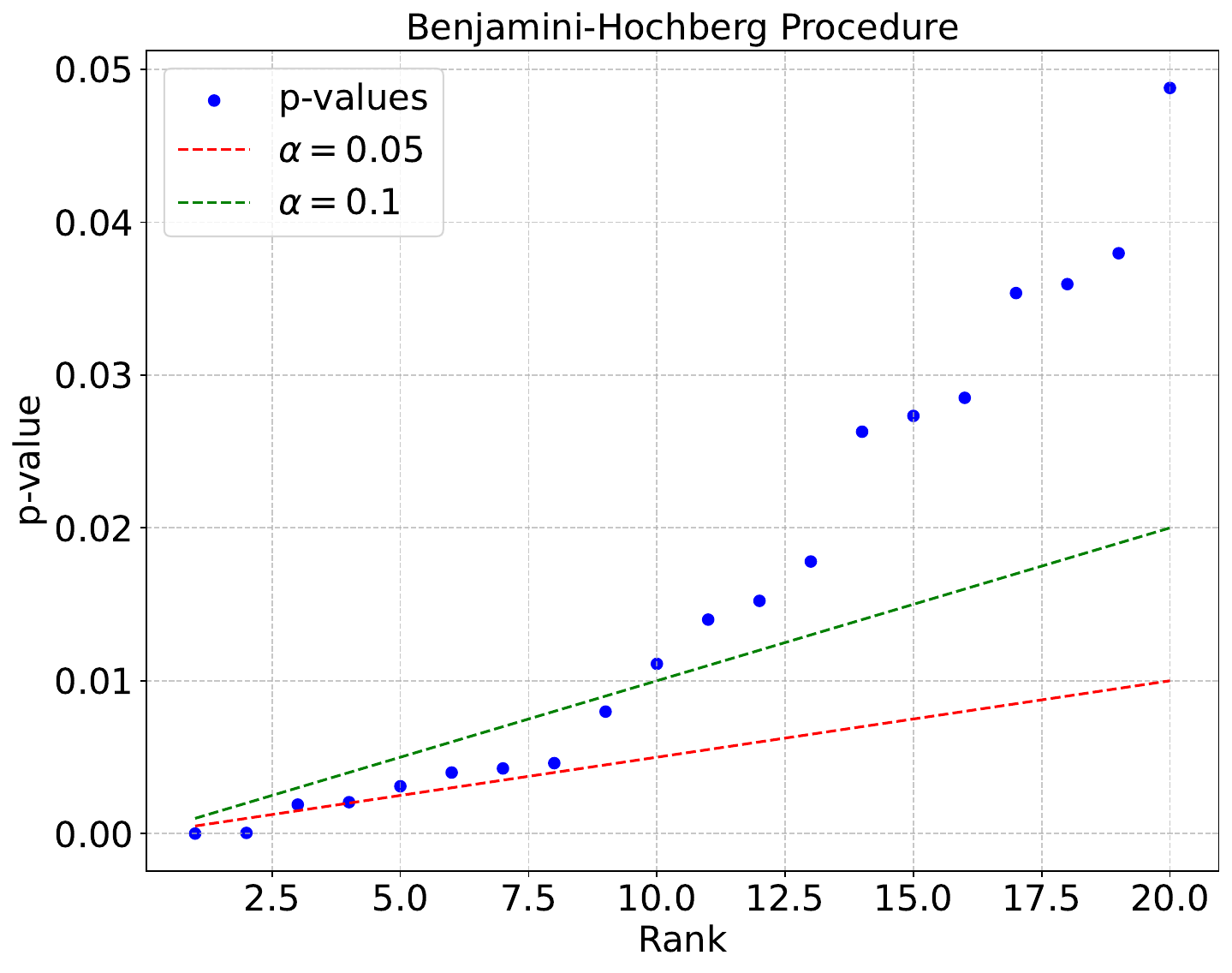}
    \caption{The risk of post hoc parameter selection.
A realization of $m=100$ p-values. The red dashed line represents the Benjamini-Hochberg rejection threshold at $\alpha=0.05$ (2 rejections). If the analyst adaptively relaxes the parameter to $\alpha=0.1$ (green dashed line) after observing the scant yield, the rigorous FDR guarantee is voided. }
    \label{fig:adaptive_example}
\end{figure}

\section{Role of the shape-parameter $\beta$}\label{app:beta}
We study the effect of the shape parameter $\beta$ in the Higher-Criticism statistic \eqref{eq:stat_thc} by constructing envelope functions for different values of $\beta$, with $(\ell,r)=(0,1)$. Figure~\ref{fig:beta} illustrates how $\beta$ influences the shape of the envelope.

For smaller values of $\beta$, the resulting envelope is approximately linear and remains closer to the ECDF in the central region of the domain (i.e., for $t$ near $0.5$). As $\beta$ increases, the envelope becomes increasingly nonlinear: it lies closer to the ECDF for small values of $t$, while being more conservative in the mid-range of $t$.

This behavior highlights a trade-off in how the envelope distributes conservativeness across the domain. If inference accuracy is primarily desired for small $t$, larger values of $\beta$ are preferable. Conversely, smaller values of $\beta$ provide a more uniform proximity to the ECDF across $t$. In practice, we find that $\beta=0.5$ offers a reasonable compromise between these two regimes.

\begin{figure}[htbp]
    \centering
    \includegraphics[width = 0.5\linewidth]{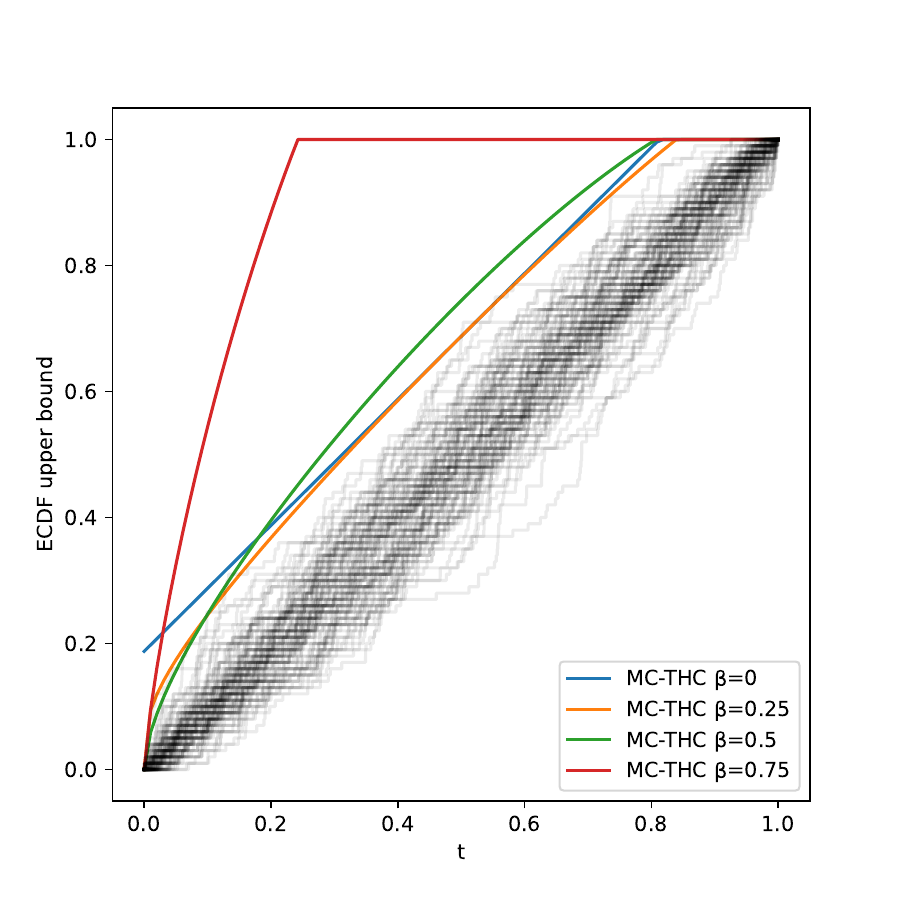}
    \caption{Upper bounds on $\hat{F}_{n,m}(t)$ ($n=m=100,$ $\delta=0.1$) constructed via Algorithm \ref{alg:monte-carlo} with $B=100$ with different shape parameter $\beta$. The gray curves represent 100 independent realizations of $\hat{F}_{n,m}$.}
    \label{fig:beta}
\end{figure}

\section{Applications to i.i.d.\ p-values}\label{app:iid-pvalues}

\subsection{Conformal vs.\ i.i.d.\ p-values: what changes and why}\label{app:comparison-iid-conformal}
Our envelope-based technique from Section~\ref{sec:upper-bound-oracle} also applies to i.i.d.\ p-values
$U_1,\dots,U_m \sim \mathrm{Unif}[0,1]$. This setting can be viewed as the ``infinite calibration'' limit of conformal
p-values: as the calibration size $n \to \infty$, the conformal p-values approach independent uniforms.
Let 
\[
\hat F_m(t) = \frac{1}{m}\sum_{j=1}^m \ind\{U_j \le t\}, \quad t \in [0,1],
\]
denote the empirical CDF of i.i.d.\ p-values. Figure~\ref{fig:iid-vs-conformal} compares $\hat F_{n,m}(t)$ from conformal
p-values with $\hat F_m(t)$ from i.i.d.\ uniforms. The contrast is clear:  
\begin{itemize}
    \item \textbf{i.i.d.\ case:} $\hat F_m(t)$ concentrates around $t$ at rate $O_\mathbb{P}(m^{-1/2})$ by central limit theorem, so the curve
    tightens to $y=x$ as $m$ grows.  
    \item \textbf{Conformal case:} with fixed $n$, calibration randomness persists regardless of $m$, so
    $\hat F_{n,m}(t)$ does not collapse to $y=x$.  
\end{itemize}

\begin{figure}[htbp]
    \centering
    \begin{subfigure}[b]{0.37\textwidth}
        \includegraphics[width=\textwidth]{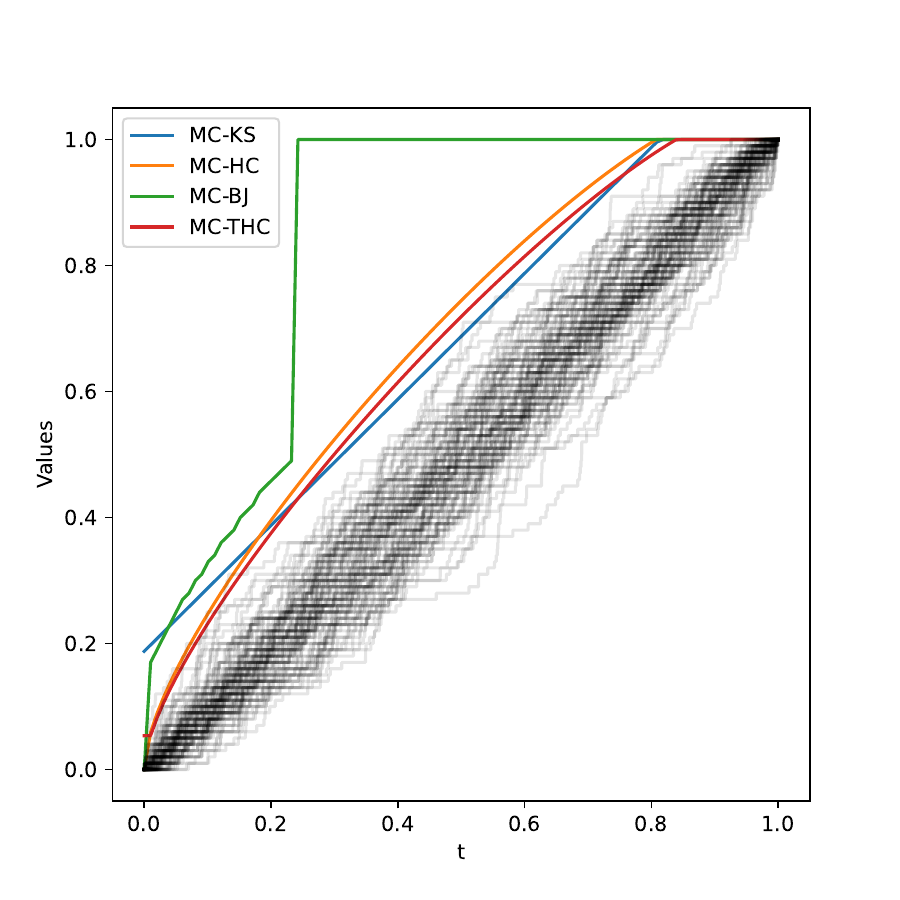}
        \caption{Conformal, $n=m=100$}
    \end{subfigure}
    \begin{subfigure}[b]{0.37\textwidth}
        \includegraphics[width=\textwidth]{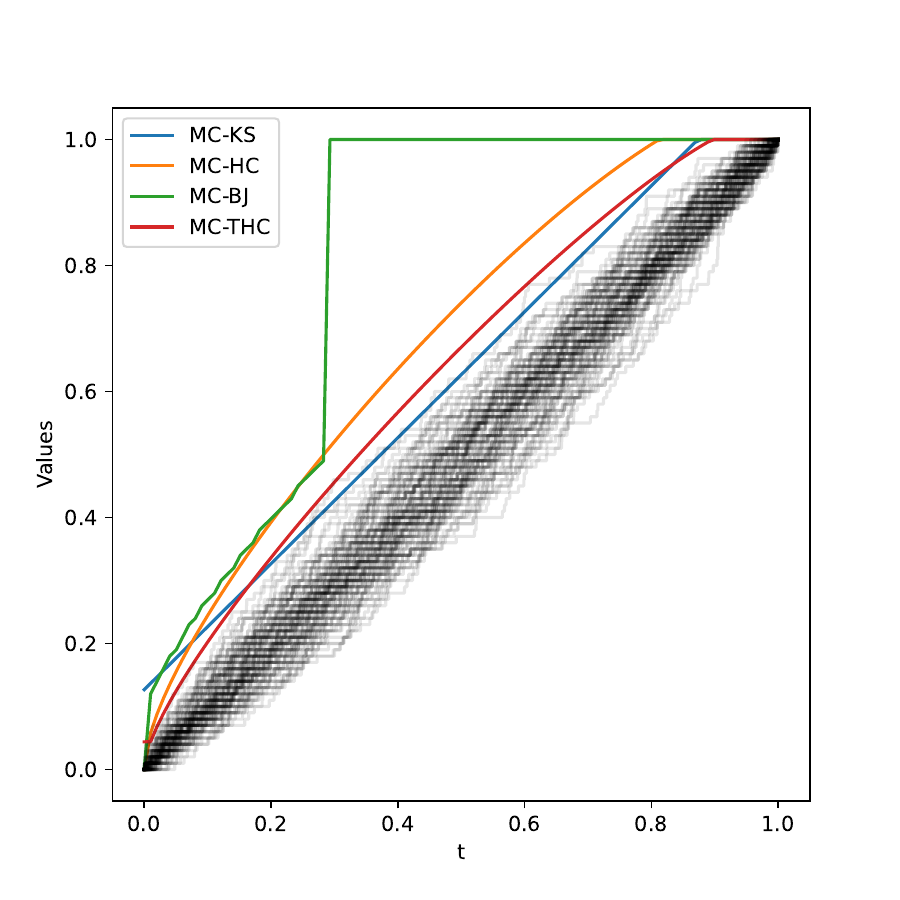}
        \caption{i.i.d., $m=100$}
    \end{subfigure}

    \begin{subfigure}[b]{0.37\textwidth}
        \includegraphics[width=\textwidth]{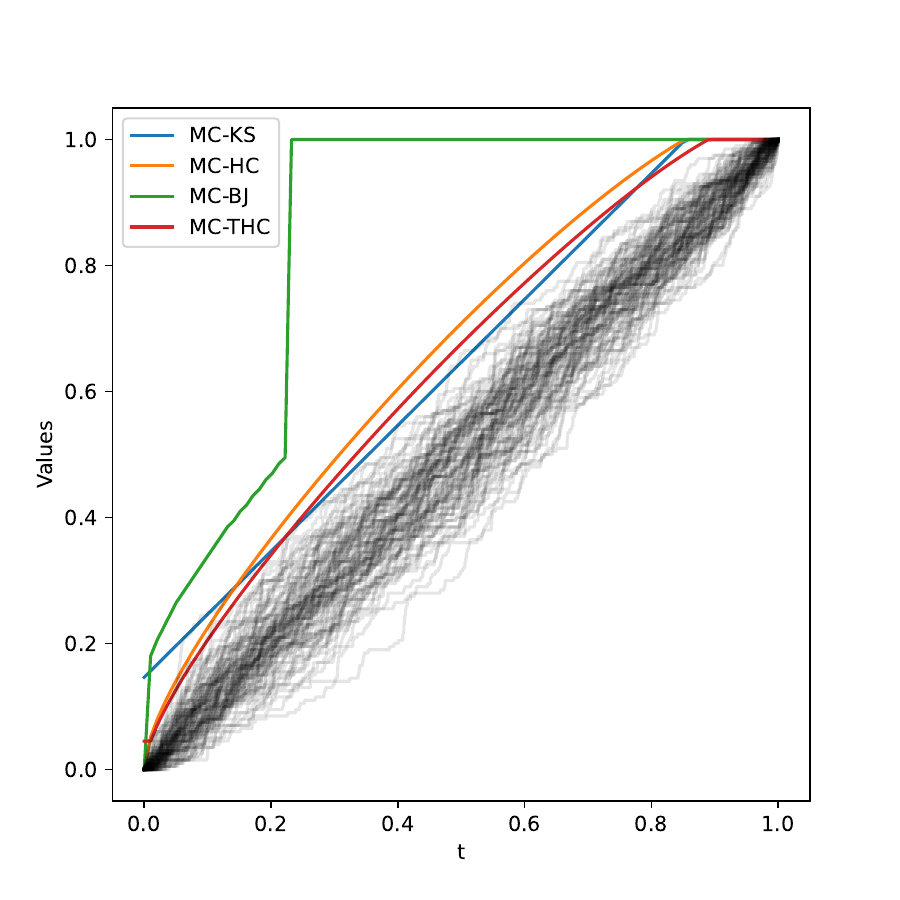}
        \caption{Conformal, $n=100,m=200$}
    \end{subfigure}
    \begin{subfigure}[b]{0.37\textwidth}
        \includegraphics[width=\textwidth]{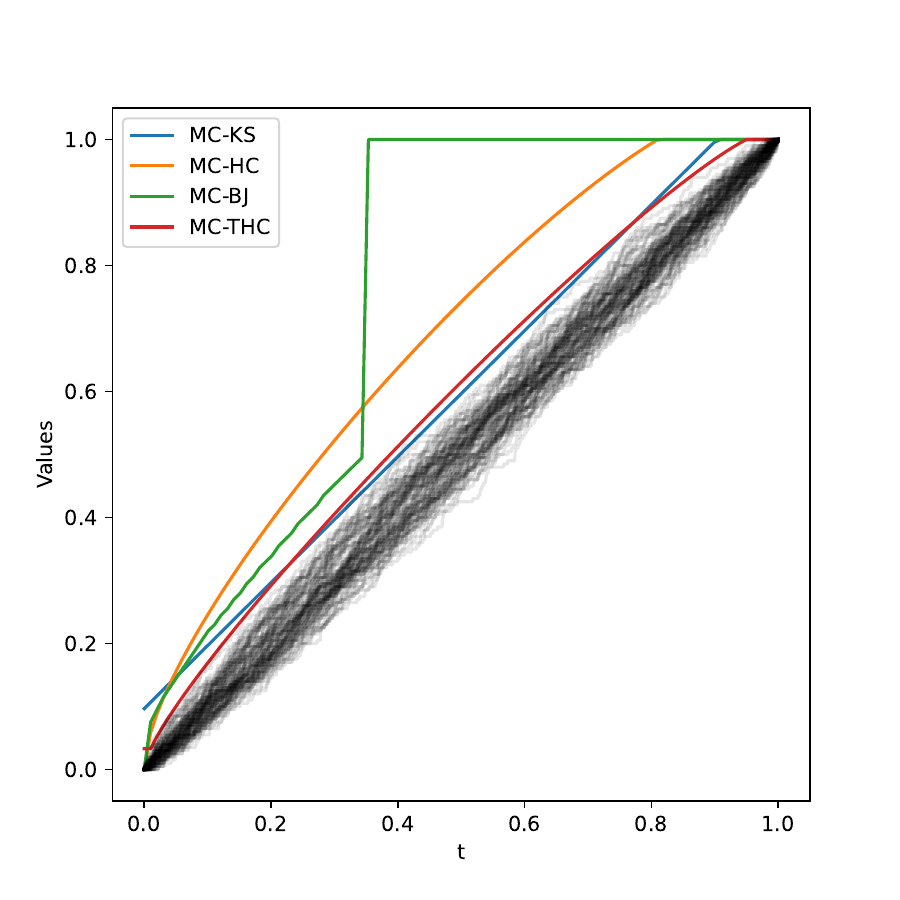}
        \caption{i.i.d., $m=200$}
    \end{subfigure}

    \begin{subfigure}[b]{0.37\textwidth}
        \includegraphics[width=\textwidth]{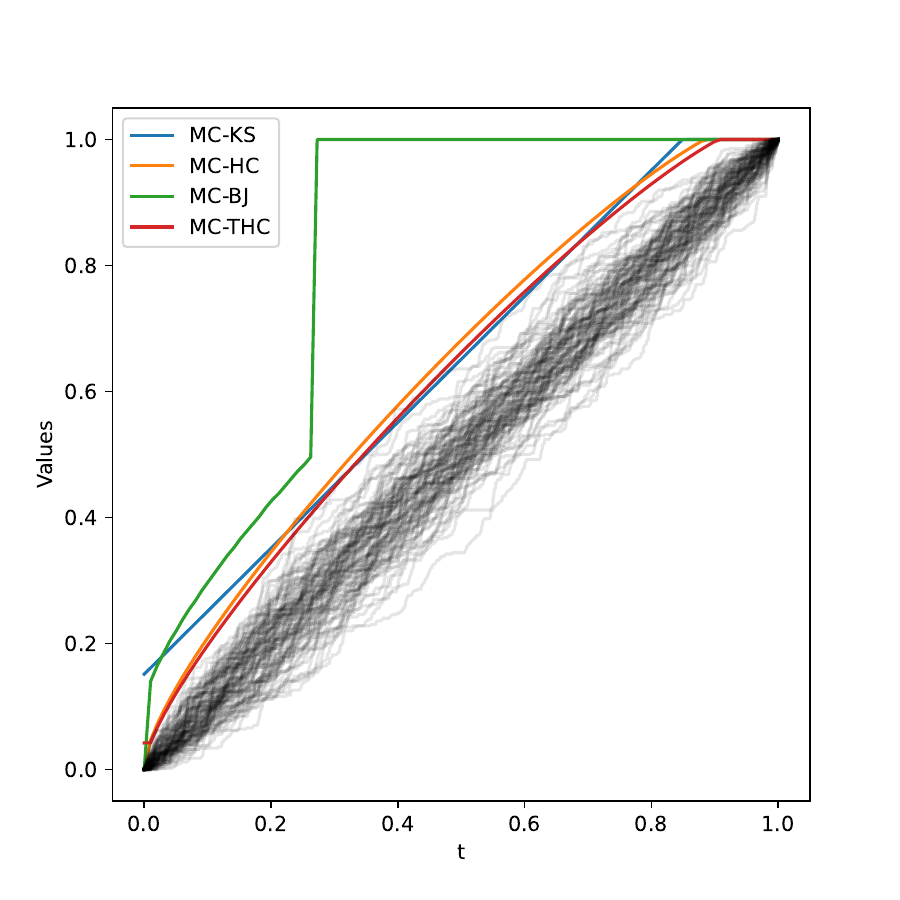}
        \caption{Conformal, $n=100,m=500$}
    \end{subfigure}
    \begin{subfigure}[b]{0.37\textwidth}
        \includegraphics[width=\textwidth]{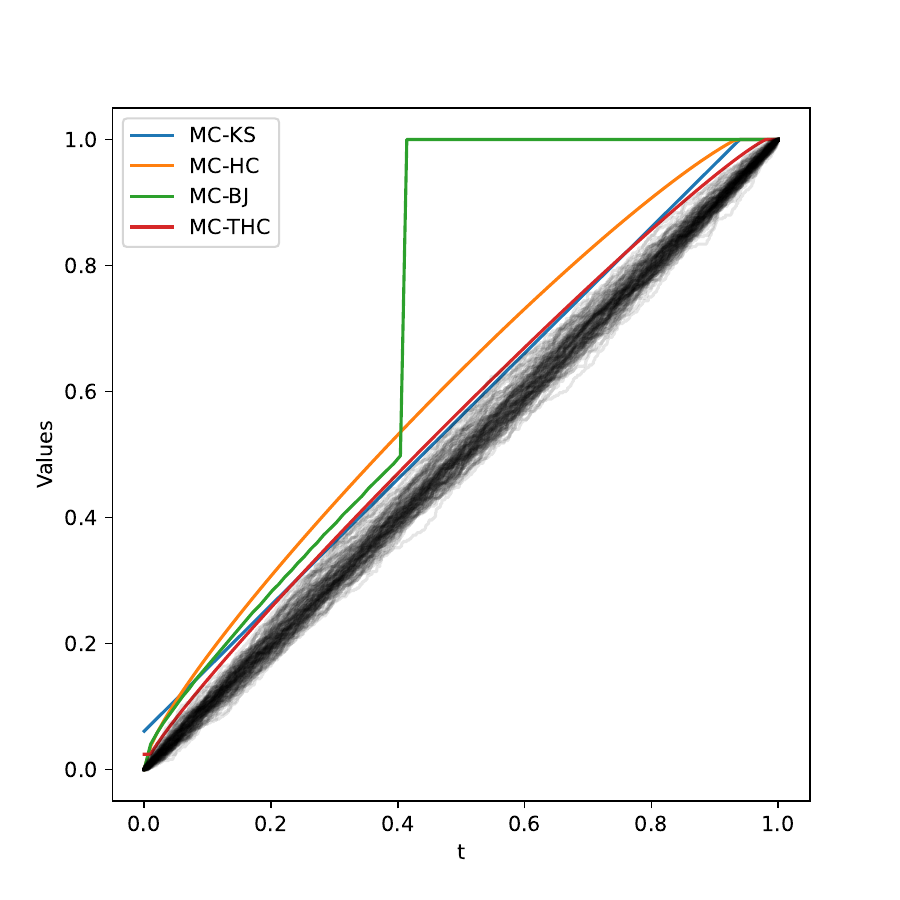}
        \caption{i.i.d., $m=500$}
    \end{subfigure}
    \caption{\textbf{ECDF envelopes: Conformal vs. i.i.d. p-values.}
Left Column: Empirical CDFs of conformal p-values with fixed calibration size $n=100$. Note that as $m$ increases, the variance does not vanish due to the persistent randomness of the finite calibration set.
Right Column: Empirical CDFs of i.i.d. uniform p-values. The distribution concentrates tightly around the diagonal $y=x$ as $m \to \infty$. This contrast highlights why specialized bounds are necessary for finite-sample conformal inference.}
    \label{fig:iid-vs-conformal}
\end{figure}

\subsection{Variance of the empirical CDF of conformal uniform variables}
\label{app:variance-ecdf}

The conformal uniform variables are generally dependent because they share the
same calibration sample. However, the variance of their empirical CDF has
the same $t(1-t)$ shape, up to a multiplicative factor.

\begin{prop}
\label{prop:var-ecdf-conformal-uniform}
Let $q_1,\ldots,q_m$ follow the conformal-uniform distribution $P_{n,m}$ in
Proposition~\ref{prop:distribution-oracle-p-values}. Define
\[
\widehat F_{n,m}(t)
=
\frac{1}{m}\sum_{j=1}^m \mathbf 1\{q_j\le t\},
\qquad t\in[0,1].
\]
Then
\[
\operatorname{Var}\{\widehat F_{n,m}(t)\}
=
c_{n,m}(t)t(1-t),
\]
where, for $t\in(0,1)$,
\[
c_{n,m}(t)
=
\frac1m+
\left(1-\frac1m\right)\rho_n(t),
\]
with
\[
\rho_n(t)
=
\frac{(n+1)(k+\gamma^2)-(k+\gamma)^2}
{(n+2)(k+\gamma)(n+1-k-\gamma)}.
\]
Here
\[
k=\lfloor (n+1)t\rfloor,\qquad
\gamma=(n+1)t-k.
\]
At the endpoints $t=0$ and $t=1$, the variance is zero.
\end{prop}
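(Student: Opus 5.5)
The plan is to condition on the calibration sample, under which the $q_j$'s are conditionally i.i.d., and then use the law of total covariance. Write $A_j=\mathbf 1\{q_j\le t\}$, so that
\[
\operatorname{Var}\{\widehat F_{n,m}(t)\}=\frac1m\operatorname{Var}(A_1)+\Bigl(1-\frac1m\Bigr)\operatorname{Cov}(A_1,A_2).
\]
This decomposition uses the exchangeability of $(q_1,\dots,q_m)$, which is clear from the representation in Proposition~\ref{prop:distribution-oracle-p-values}. Each $q_j$ is marginally uniform, so $\operatorname{Var}(A_1)=t(1-t)$. It therefore suffices to show
\[
\operatorname{Cov}(A_1,A_2)=\rho_n(t)\,t(1-t).
\]
The endpoint cases are trivial because $A_j$ is almost surely constant there.

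\textbf{Step 1: conditional probability.} Let $\mathcal T=(T_1,\dots,T_n)$ and $\pi(t)=\mathbb P(q_1\le t\mid\mathcal T)$. Given $\mathcal T$, the pairs $(T_{n+j},U_j)$ are i.i.d., so $A_1$ and $A_2$ are conditionally independent. Hence
\[
\operatorname{Cov}(A_1,A_2)=\operatorname{Var}(\pi(t))=\mathbb E[\pi(t)^2]-t^2.
\]
Let $R=\sum_{i\le n}\mathbf 1\{T_i<T_{n+1}\}$. With $(n+1)t=k+\gamma$, the event $\{q_1\le t\}$ equals $\{R\le k-1\}\cup\{R=k,\ U_1\le\gamma\}$. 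Write $T_{(1)}<\dots<T_{(n)}$ for the calibration order statistics, with conventions $T_{(0)}=0$ and $T_{(n+1)}=1$. Then $\{R\le k-1\}=\{T_{n+1}<T_{(k)}\}$, and $U_1$ is independent of everything else. This gives
\[
\pi(t)=T_{(k)}+\gamma\bigl(T_{(k+1)}-T_{(k)}\bigr)=(1-\gamma)T_{(k)}+\gamma T_{(k+1)}.
\]
We may take $0\le k\le n$ for $t\in(0,1)$.

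\textbf{Step 2: Beta moments.} Use the standard moments of uniform order statistics:
\[
\mathbb E T_{(k)}=\frac{k}{n+1},\qquad \mathbb E T_{(k)}^2=\frac{k(k+1)}{(n+1)(n+2)},\qquad \mathbb E T_{(k)}T_{(k+1)}=\frac{k(k+2)}{(n+1)(n+2)}.
\]
The analogue for $T_{(k+1)}^2$ follows in the same way. These also hold at the boundary indices $0$ and $n+1$. As a check, $\mathbb E\pi=(k+\gamma)/(n+1)=t$. Expanding then gives
\[
(n+1)(n+2)\,\mathbb E\pi^2=(1-\gamma)^2k(k+1)+2\gamma(1-\gamma)k(k+2)+\gamma^2(k+1)(k+2),
\]
which simplifies to $(k+\gamma)^2+(k+\gamma^2)$. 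Therefore
\[
\operatorname{Var}\pi=\frac{(k+\gamma)^2+k+\gamma^2}{(n+1)(n+2)}-\frac{(k+\gamma)^2}{(n+1)^2}=\frac{(n+1)(k+\gamma^2)-(k+\gamma)^2}{(n+1)^2(n+2)}.
\]

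\textbf{Step 3: matching the claimed form.} Divide by
\[
t(1-t)=\frac{(k+\gamma)(n+1-k-\gamma)}{(n+1)^2},
\]
which yields exactly $\rho_n(t)$. A sanity check at $\gamma=0$ gives $\rho_n=1/(n+2)$, which matches $\operatorname{Var}T_{(k)}$.

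The only real obstacle is Step 1, namely correctly identifying $\pi(t)$ as the linear interpolation between adjacent calibration order statistics, including the tie-breaking by $U_1$ and the boundary indices. The remainder is routine algebra, which I will verify by the simplification $(1-\gamma)^2k(k+1)+2\gamma(1-\gamma)k(k+2)+\gamma^2(k+1)(k+2)=(k+\gamma)^2+k+\gamma^2$.
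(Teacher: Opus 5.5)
Your proof is correct and follows essentially the same route as the paper. You condition on the calibration sample, and your exchangeable covariance decomposition is equivalent to the paper's law of total variance; your $\pi(t)=(1-\gamma)T_{(k)}+\gamma T_{(k+1)}$ is exactly the paper's $H_t=\sum_{\ell<k}D_\ell+\gamma D_k$. The only differences are that you compute $\operatorname{Var}\pi$ from second moments of adjacent uniform order statistics rather than from the Dirichlet covariance formula for the spacings, and that you spell out the derivation of $\pi(t)$ (including the $U_1$ tie-breaking and the boundary indices), which the paper leaves implicit.
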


\begin{proof}
By Proposition~\ref{prop:distribution-oracle-p-values}, it suffices to work with
\[
q_j
=
\frac{\sum_{i=1}^n \mathbf 1\{T_i<T_{n+j}\}+U_j}{n+1},
\qquad j=1,\ldots,m,
\]
where $T_1,\ldots,T_{n+m}$ are i.i.d.\ $\mathrm{Unif}[0,1]$ and
$U_1,\ldots,U_m$ are i.i.d.\ $\mathrm{Unif}[0,1]$, independent of the
$T_i$'s.

Fix $t\in(0,1)$ and write
\[
a=(n+1)t,\qquad k=\lfloor a\rfloor,\qquad \gamma=a-k.
\]
Let $T_{(1)}<\cdots<T_{(n)}$ be the order statistics of
$T_1,\ldots,T_n$, and define the spacings
\[
D_0=T_{(1)},\qquad
D_\ell=T_{(\ell+1)}-T_{(\ell)},\ \ell=1,\ldots,n-1,\qquad
D_n=1-T_{(n)}.
\]
Conditional on the calibration variables $T_1,\ldots,T_n$, the indicators
\[
I_j(t)=\mathbf 1\{q_j\le t\},\qquad j=1,\ldots,m,
\]
are i.i.d.\ Bernoulli random variables with success probability
\[
H_t=\mathbb P(q_j\le t\mid T_1,\ldots,T_n).
\]
Hence
\[
\mathbb E[\widehat F_{n,m}(t)\mid T_1,\ldots,T_n]=H_t
\]
and
\[
\operatorname{Var}(\widehat F_{n,m}(t)\mid T_1,\ldots,T_n)
=
\operatorname{Var}\left(
\frac1m\sum_{j=1}^m I_j(t)
\,\middle|\, T_1,\ldots,T_n
\right)
=
\frac1m H_t(1-H_t).
\]
By the law of total variance,
\[
\operatorname{Var}\{\widehat F_{n,m}(t)\}
=
\operatorname{Var}\!\left(
\mathbb E[\widehat F_{n,m}(t)\mid T_1,\ldots,T_n]
\right)
+
\mathbb E\!\left[
\operatorname{Var}(\widehat F_{n,m}(t)\mid T_1,\ldots,T_n)
\right].
\]
Therefore,
\[
\operatorname{Var}\{\widehat F_{n,m}(t)\}
=
\operatorname{Var}(H_t)
+
\frac1m\mathbb E[H_t(1-H_t)].
\]

Since $\mathbb E[H_t]=t$, this can be rewritten as
\[
\operatorname{Var}\{\widehat F_{n,m}(t)\}
=
\frac{t(1-t)}{m}
+
\left(1-\frac1m\right)\operatorname{Var}(H_t).
\]

It remains to compute $\operatorname{Var}(H_t)$. The spacings
$(D_0,\ldots,D_n)$ follow a Dirichlet distribution with all parameters equal
to one. Using the standard covariance formula for the Dirichlet distribution
\cite[Chapter~49]{kotz2000continuous}, for any weights $w_0,\ldots,w_n$,
\[
\operatorname{Var}\left(\sum_{\ell=0}^n w_\ell D_\ell\right)
=
\frac{(n+1)\sum_{\ell=0}^n w_\ell^2
-\left(\sum_{\ell=0}^n w_\ell\right)^2}
{(n+1)^2(n+2)}.
\]
Applying this identity with
\[
w_\ell=1\quad(\ell<k),\qquad
w_k=\gamma,\qquad
w_\ell=0\quad(\ell>k),
\]
gives
\[
\operatorname{Var}(H_t)
=
\frac{(n+1)(k+\gamma^2)-(k+\gamma)^2}
{(n+1)^2(n+2)}.
\]
Since $k+\gamma=(n+1)t$, we may write
\[
\operatorname{Var}(H_t)
=
\rho_n(t)t(1-t),
\]
where
\[
\rho_n(t)
=
\frac{(n+1)(k+\gamma^2)-(k+\gamma)^2}
{(n+2)(k+\gamma)(n+1-k-\gamma)}.
\]
Substituting this expression into the total-variance decomposition yields
\[
\operatorname{Var}\{\widehat F_{n,m}(t)\}
=
\left[
\frac1m+
\left(1-\frac1m\right)\rho_n(t)
\right]t(1-t).
\]
The endpoint cases $t=0$ and $t=1$ are immediate since
$\widehat F_{n,m}(0)=0$ and $\widehat F_{n,m}(1)=1$ almost surely.
\end{proof}
\paragraph{Sensitivity of the multiplicative factor.}
The multiplicative factor $c_{n,m}(t)$ may vary with $t$, especially when the
calibration size $n$ is small. However, this variation becomes less pronounced
as $n$ grows. Indeed, Figure~\ref{fig:rho_n} plots $\rho_n(t)$ for several
values of $n$ and shows that, for moderately large calibration sizes, the
factor is nearly flat over most of the interval. This supports the use of
$(t(1-t))^\beta$ as a simple variance-adaptive normalization in the
Higher-Criticism statistic: the factor $t(1-t)$ captures the dominant
heteroskedastic shape, while the remaining dependence structure is handled by
the Monte Carlo calibration in Algorithm~\ref{alg:monte-carlo}.
\begin{figure}
    \centering
    \includegraphics[width=0.5\linewidth]{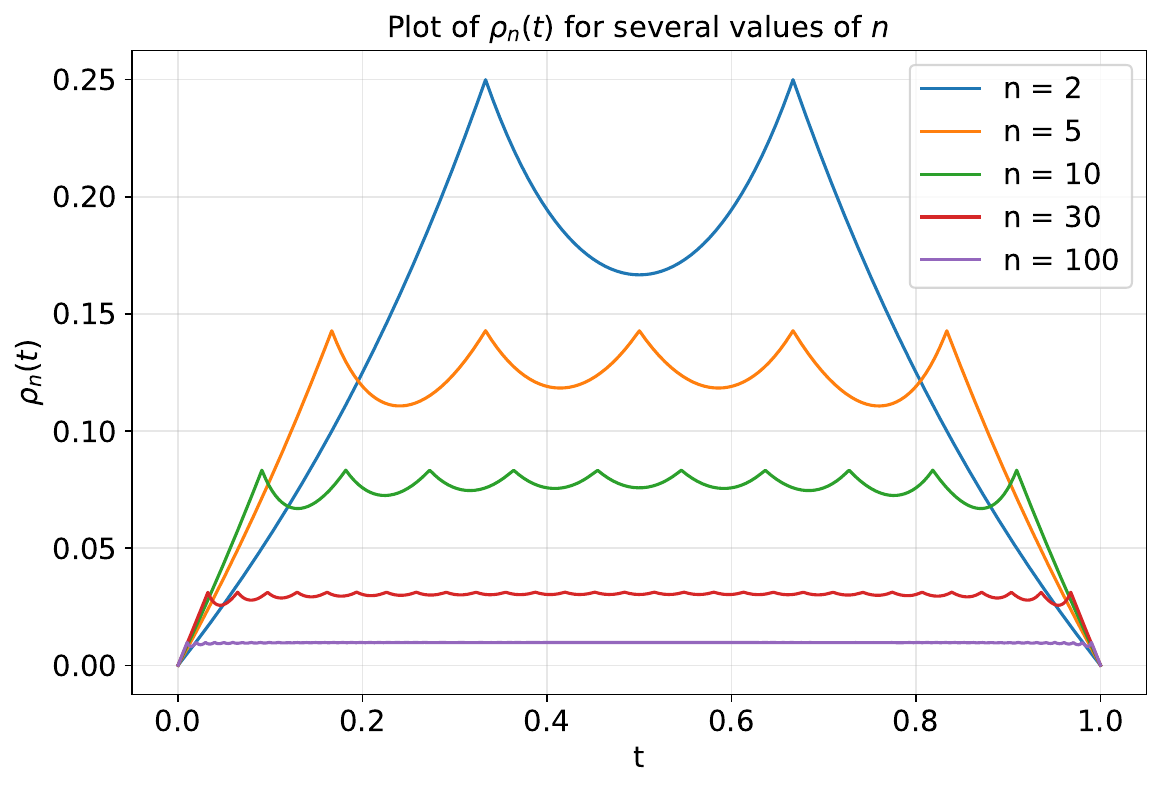}
    \caption{Plots of $\rho_n(t)$ for several values of $n$.
Since $c_{n,m}(t)=m^{-1}+(1-m^{-1})\rho_n(t)$, this also illustrates the
$t$-dependence of $c_{n,m}(t)$. The dependence is substantial only for very small
$n$ but becomes much less pronounced as $n$ grows.}
    \label{fig:rho_n}
\end{figure}
\FloatBarrier
\subsection{Constructing calibration-conditional valid p-values}\label{app:constructing-ccv-p-values}
We revisit the calibration-conditional p-values of \cite{bates2023testing} and show how our envelope construction
provides a simple and tight way to obtain them with finite-sample guarantees.

\medskip
\noindent\textbf{Calibration-conditional valid p-values}
Classical conformal (marginal) p-values are valid on average across repeated draws of the calibration data,
but may be anti-conservative when conditioning on a specific calibration set. This is unsatisfactory for
a practitioner who only has one calibration sample in hand. Calibration-Conditional Valid (CCV) p-values
remedy this issue: with probability at least $1-\delta$ over the randomness of the calibration data, each CCV
p-value is valid conditional on that calibration set. Formally, for any $t \in (0,1)$ and inlier test sample
$X_{n+1} \sim P_X$,
\[
\mathbb{P}\!\left( \mathbb{P}\!\left( \hat u^{(\mathrm{ccv})}(X_{n+1}) \le t \,\middle|\, D \right) \le t 
\ \text{for all $t$} \right) \ge 1-\delta,
\]
where $D$ denotes the calibration sample. Thus, CCV p-values provide guarantees that are specific to the user’s
data, not just “on average” across hypothetical users.

The key idea of \cite{bates2023testing} is to construct
a sequence of thresholds $(b_1,\dots,b_n)$ that uniformly lower-bound the order statistics of i.i.d.\ uniforms. 
These thresholds are then used to transform marginal p-values into calibration-conditional valid (CCV) p-values. 
The following theorem provides the precise adjustment.

\begin{theorem}[Conditional p-value adjustment]\label{thm:ccv-adjustment}
Let $U_1,\ldots,U_n \stackrel{\text{i.i.d.}}{\sim} \mathrm{Unif}[0,1]$ with order statistics
$U_{(1)} \le \cdots \le U_{(n)}$, and fix $\delta \in (0,1)$. Suppose $0 \le b_1 \le \cdots \le b_n \le 1$ satisfy
\begin{equation}\label{thm:ccv-p-values}
\mathbb{P}\big[U_{(1)} \le b_1,\ldots, U_{(n)} \le b_n\big] \ge 1-\delta.
\end{equation}
Let $b_0=0$, $b_{n+1}=1$, and define $h:[0,1]\to[0,1]$ by $h(t)=b_{\lceil (n+1)t\rceil}$. Then
$\hat u^{(\mathrm{ccv})}=h\circ \hat u^{(\mathrm{marg})}$ is calibration-conditional valid as long as $\hat u^{(\mathrm{marg})}(X_{n+1})$ is a valid p-value.
\end{theorem}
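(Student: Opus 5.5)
Write $\hat u^{(\mathrm{marg})}(x)=\bigl(1+\sum_{i=1}^n\ind\{s(X_i)<s(x)\}\bigr)/(n+1)$. This is the standard marginal conformal p-value, which takes values in $\{1/(n+1),\dots,1\}$. The plan is to reduce the calibration-conditional statement to the order-statistic event in \eqref{thm:ccv-p-values}, and then show that on that event the conditional CDF of $h(\hat u^{(\mathrm{marg})}(X_{n+1}))$ is dominated by the identity.

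\textbf{Step 1: reduction to uniforms.} Under Assumption~\ref{asmp:no-tie}, let $F$ be the continuous CDF of $s(X)$ for an inlier. Set $V_i=F(s(X_i))$, which are i.i.d.\ $\mathrm{Unif}[0,1]$ with order statistics $V_{(1)}<\dots<V_{(n)}$. For a fresh inlier $X_{n+1}$, independent of $D$, the event $\{\hat u^{(\mathrm{marg})}(X_{n+1})\le k/(n+1)\}$ is the event that at most $k-1$ calibration scores lie below $s(X_{n+1})$. Equivalently, $s(X_{n+1})<s_{(k)}$ (with $s_{(n+1)}=+\infty$). Hence, conditional on $D$,
\[
\PP\Bigl(\hat u^{(\mathrm{marg})}(X_{n+1})\le \tfrac{k}{n+1}\,\Big|\,D\Bigr)=V_{(k)},\qquad k=1,\dots,n+1,
\]
with $V_{(n+1)}:=1$. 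I would first check this carefully for ties of measure zero.

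\textbf{Step 2: validity on the good event.} Let $\Omega=\{V_{(k)}\le b_k,\ \forall k\le n\}$. By \eqref{thm:ccv-p-values}, $\PP(\Omega)\ge1-\delta$. Fix $t\in(0,1)$ and work on $\Omega$. Because $h$ is nondecreasing (the $b_k$ are increasing), $\{h(\hat u^{(\mathrm{marg})})\le t\}=\{\hat u^{(\mathrm{marg})}\le k^*/(n+1)\}$, where $k^*$ is the largest $k\in\{0,\dots,n+1\}$ with $b_k\le t$; we use $b_0=0$ and $h(k/(n+1))=b_k$. By Step~1 the conditional probability equals $V_{(k^*)}$, with $V_{(0)}=0$. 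If $k^*\le n$, then on $\Omega$ we get $V_{(k^*)}\le b_{k^*}\le t$. If $k^*=n+1$, then $b_{n+1}=1\le t$ forces $t=1$, and the bound is trivial. Since $\Omega$ does not depend on $t$, the bound holds for all $t$ simultaneously, which gives the display in the definition of CCV validity.

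\textbf{Main obstacle.} The delicate part is Step~1's identity and the indexing convention in $h(t)=b_{\lceil (n+1)t\rceil}$. We need $h(k/(n+1))=b_k$ exactly, so that the preimage of $\{h\le t\}$ is a lower set of grid values. We must also handle $b$'s with repeated values, which monotonicity takes care of. The hypothesis only says that $\hat u^{(\mathrm{marg})}$ is a ``valid p-value''. I would therefore state the proof for the rank-based conformal p-value above, where the conditional CDF is exactly $V_{(k)}$. The other reading would be a general conditional CDF dominated by a uniform order-statistic process, and I expect that reading would require an extra stochastic-dominance argument.
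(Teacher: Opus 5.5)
Your proof is correct. The paper gives no proof of this theorem (it restates the adjustment result of \cite{bates2023testing}), and your route is the standard argument behind that result: show $\mathbb{P}(\hat u^{(\mathrm{marg})}\le k/(n+1)\mid D)=F(s_{(k)})=V_{(k)}$, observe that $\{h\le t\}$ pulls back to a lower set of grid points because the $b_k$ are monotone, and conclude on the $t$-free event $\{V_{(k)}\le b_k\ \forall k\}$.

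One correction to your closing caveat. Reading the hypothesis as the rank-based conformal p-value is necessary, not merely convenient: for an arbitrary marginally valid p-value the claim is false, not just harder to prove. For example, take $n=1$, $\delta=0.9$, $b_1=0.1$, and $\hat u^{(\mathrm{marg})}\sim\mathrm{Unif}[0,1]$ independent of $D$; then $\mathbb{P}(h(\hat u^{(\mathrm{marg})})\le 0.1\mid D)=1/2$ for every $D$. Your argument also covers the paper's randomized $u_j$, since $\lceil (n+1)u_j\rceil$ equals the deterministic rank almost surely.
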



Condition \eqref{thm:ccv-p-values} is equivalent to requiring a \emph{uniform lower bound} on the empirical CDF
$\hat F_n(t)=n^{-1}\sum_{i=1}^n \ind(U_i\le t)$:
\[
\mathbb{P}\!\left(\hat F_n(b_i)\ge \frac{i}{n}\ \text{ for all } i\in[n]\right) \ge 1-\delta.
\]
Hence, constructing $b_1,\dots,b_n$ reduces to computing a lower envelope for $\hat F_n$.

\medskip
\noindent\textbf{Envelope-to-threshold construction.}
Let $L_n:[0,1]\to[0,1]$ be any nondecreasing function such that
\[
\mathbb{P}\big(\hat F_n(t)\ \ge\ L_n(t)\ \text{ for all } t\in[0,1]\big) \ge 1-\delta.
\]
Then
\[
b_i := \inf\{\, t \in [0,1]: L_n(t) \ge i/n \,\},\qquad i=1,\dots,n,
\]
satisfies \eqref{thm:ccv-p-values}. In practice, we obtain $L_n$ via Algorithm~\ref{alg:monte-carlo}, using i.i.d.\
uniforms and a \emph{lower-tail} summary statistic (e.g., a flipped KS statistic or truncated lower-tail
higher-criticism). This yields:
\begin{itemize}
\item \emph{Finite-sample} $(1-\delta)$ guarantees;
\item \emph{Adaptivity} to the shape of $\hat F_n(t)$, often yielding tighter bounds near small $t$;
\item \emph{Flexibility} to tune the choice of summary statistic for the application at hand.
\end{itemize}

Overall, this provides a straightforward, plug-and-play method to construct CCV p-values using the same
Monte Carlo algorithm developed in the main text.

We compare several methods for constructing CCV $p$-values, including the procedures introduced in \cite{bates2023testing} as well as our proposed approach. The figure plots the $\{b_i\}$ as a function of the normalized index $i/n$. Smaller values of $b_i$ correspond to less conservative thresholds and therefore greater statistical power of the resulting CCV $p$-values.

As shown in the figure, the proposed MC-THC method produces a boundary that lies closer to the empirical order statistics than classical alternatives such as DKW-type and Simes bounds, while remaining comparable to asymptotic and hybrid Monte Carlo methods. Importantly, MC-THC achieves this improved proximity while retaining an explicit finite-sample validity guarantee, leading to a favorable balance between power and theoretical rigor.

\begin{figure}
    \centering
    \includegraphics[width=0.5\linewidth]{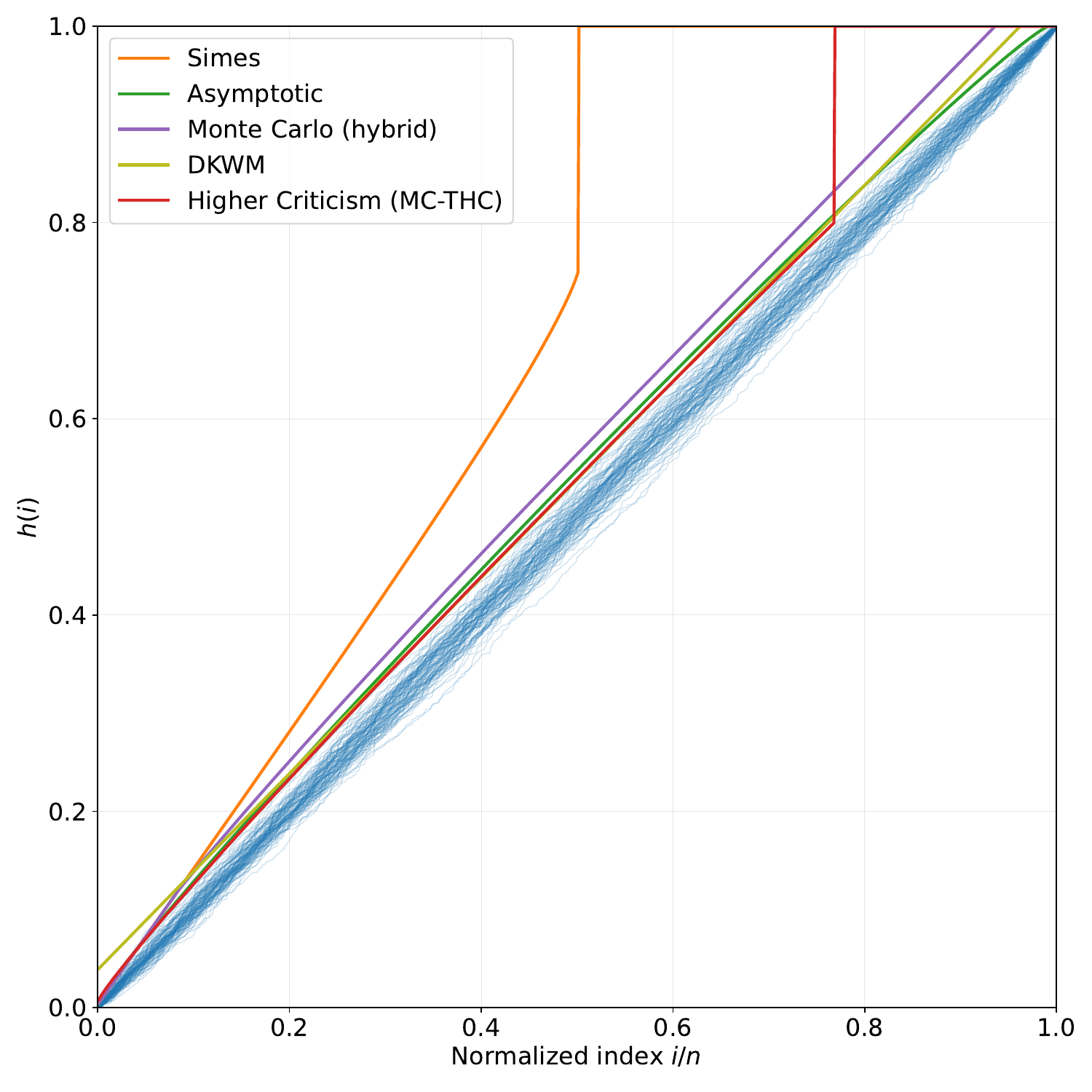}
    \caption{\textbf{Different CCV p-values adjustments.} 
The blue curves display 100 independent realizations of sorted i.i.d. uniform p-values (order statistics) with sample size $n=1000$, plotted against their normalized rank $i/n$. This setup replicates the validation framework of Figure 3 in \cite{bates2023testing}. The proposed envelope based on the Truncated Higher Criticism statistic (MC-THC, purple) is compared against other bounds in their paper. The MC-THC bound demonstrates superior tightness across a broad range of the normalized index.}
    \label{fig:order_stats}
\end{figure}
\begin{figure}
    \centering
    \includegraphics[width=0.5\linewidth]{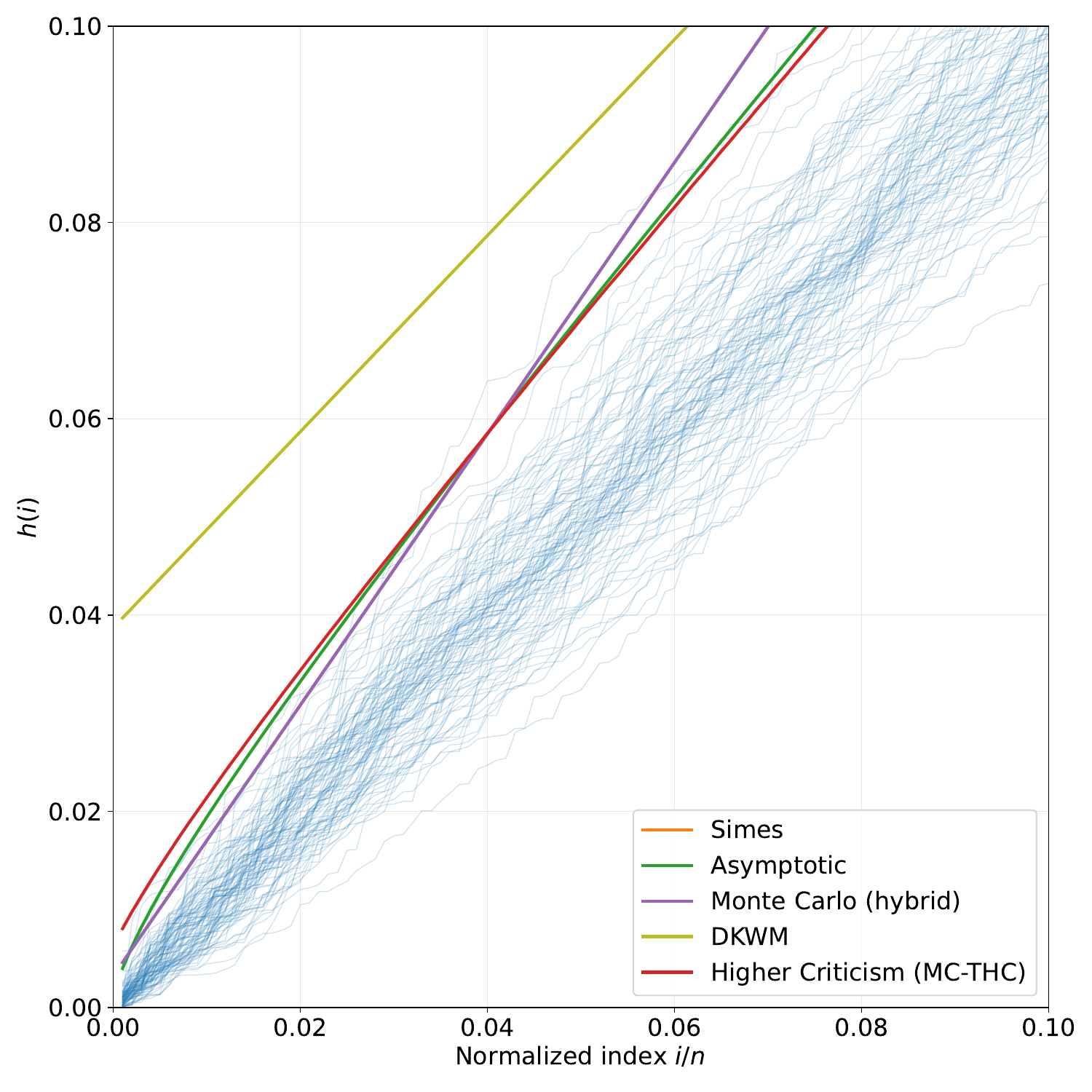}
    \caption{\textbf{Detailed view of CCV p-value adjustments in the lower tail.} 
A zoomed-in perspective of Figure \ref{fig:order_stats} focusing on the normalized index range $i/n \in [0, 0.15]$. }
    \label{fig:order_stats_zoom}
\end{figure}

\section{Proofs for Section \ref{sec:upper-bound-oracle}}\label{app:upper-bound-oracle}
First, we give the proof of Theorem \ref{thm:monte-carlo-bounds}, which is a simple consequence of exchangeability.
\begin{proof}[Proof of Theorem \ref{thm:monte-carlo-bounds}]
Let $F(t)$ be the random function specified in Algorithm \ref{alg:monte-carlo}. Let $F^{(b)}, b = 1, \cdots, B$ be $B$ independent copies of random function following the same distribution with $F$, then we have 
$$
\PP\paren{T(F) \le \quant_{1-\delta}\paren{\sum_{b=1}^B \frac{1}{B+1}\delta_{T(F^{(b)})} + \frac{1}{B+1}\delta_{T(F)}}} \ge 1-\delta.
$$
This implies with probability at least $1-\delta$,
\begin{align*}
T(F) \le &~ \quant_{1-\delta}\paren{\sum_{b=1}^B \frac{1}{B+1}\delta_{T(F^{(b)})} + \frac{1}{B+1}\delta_{T(F)}} \\
\le &~ \quant_{1-\delta}\paren{\sum_{b=1}^B \frac{1}{B+1}\delta_{T(F^{(b)})} + \frac{1}{B+1}\delta_\infty} \\
 = &~\hat{T}.
\end{align*}
Thus we have 
$$
\PP(T(F) \le \hat{T}) \ge 1-\delta.
$$
The second part of the theorem follows directly from the definition of the summary statistics in \eqref{eqn:summary-statistics}.
\end{proof}

\begin{proof}[Proof of Fact~\ref{fact:monotone}]
The monotonicity can be verified by proving that $f'(t)\leq 0$. Note that 
\begin{align*}
    f'(t) &= \frac{-(t(1-t))^\beta - \beta (x - t) (t(1-t))^{\beta-1}(1-2t)}{((t(1-t))^\beta)^2}\\
    &= -\frac{t(1-t) + \beta (x - t)(1 - 2t)}{(t(1-t))^{\beta + 1}}.
\end{align*}
The numerator $t(1-t) + \beta (x - t)(1 - 2t)$ is linear in both $\beta$ and $x$. Therefore, to prove $f'(t) \leq 0$, it suffices to verify the inequality at the extreme points $(\beta,x) \in \{(0,0), (0,1), (1,0), (1,1)\}$, which is straightforward.
\end{proof}

\section{Proofs for Section \ref{sec:outlier-detection}}\label{app:outlier-detection}
\subsection{Proofs of Lemma \ref{lem:coupling} and Theorem \ref{thm:fdp-outlier-detection-naive}}
\begin{proof}[Proof of Lemma \ref{lem:coupling}]
Let $m_0 = |\cH_0|$ be the number of nulls (inliers). By Proposition \ref{prop:distribution_pval_outlier}, the vector of null conformal
$p$-values $(p_1,\dots,p_{m_0})$ has the same distribution as the first $m_0$ coordinates of the oracle
conformal $p$-values under $\cP_{n,m_0}$ (the law in Proposition \ref{prop:distribution-oracle-p-values}).
The family $\{\cP_{n,m}\}_{m\ge 1}$ is consistent in $m$ because, under Proposition \ref{prop:distribution-oracle-p-values},
each $p_j$ depends only on $(T_1,\dots,T_n,T_{n+j},U_j)$ and is thus unaffected by the presence or
absence of other test samples. Hence, the marginal law of the first $m_0$ coordinates of $\cP_{n,m}$
equals $\cP_{n,m_0}$.
Fix $m$ and consider $\cP_{n,m}$. Let $\mathcal K(\,\cdot\,|\,\cdot\,)$ be the conditional
distribution of the remaining coordinates $((p_{m_0+1},\dots,p_m))$ given the first $m_0$ coordinates
under $\cP_{n,m}$. On the probability space carrying the given $(p_1,\dots,p_{m_0})$, define
$(p^*_{m_0+1},\dots,p^*_m)\sim \mathcal K(\,\cdot\,|\,p_1,\dots,p_{m_0})$, independently of all else.
By construction, the concatenated vector
\[
   (p_1,\dots,p_{m_0},p^*_{m_0+1},\dots,p^*_m)
\]
has joint law $\cP_{n,m}$, proving the claim.
\end{proof}

\begin{proof}[Proof of Theorem \ref{thm:fdp-outlier-detection-naive}]
Let $R(t)=\{j\in[m]:\,p_j\le t\}$ and write $V(t)=|R(t)\cap \cH_0|=\sum_{j\in \cH_0}\ind\{p_j\le t\}$.
By Lemma \ref{lem:coupling}, there exist $(p_1^*,\dots,p_m^*)\sim\cP_{n,m}$ such that
\[
   V(t)=\sum_{j\in \cH_0}\ind\{p_j\le t\}
       \le \sum_{j=1}^m \ind\{p_j^*\le t\}.
\]
On the event
\(
  \frac{1}{m}\sum_{j=1}^m \ind\{p_j^*\le t\}\le G(t)\ \text{for all }t\in[0,1],
\)
we have $\sum_{j=1}^m \ind\{p_j^*\le t\}\le m\,G(t)$ for all $t$, hence
\[
   \mathrm{FDP}(t)
   = \frac{V(t)}{1\vee |R(t)|}
   \le \frac{m\,G(t)}{1\vee \sum_{j=1}^m \ind\{p_j\le t\}}
   \qquad\text{for all }t\in[0,1].
\]
\end{proof}

\subsection{Proof of Proposition \ref{prop:tighten} and Theorem \ref{thm:fdp-upper-bound-outlier-detection}}
\begin{proof}[Proof of Proposition \ref{prop:tighten}]
Let $V(t)=|R(t)\cap \cN|$ and suppose $\PP\big(V(s)\le B(s)\ \forall s\in(0,1)\big)\ge 1-\delta$.
For any $s\le t$,
\[
   V(t) \le V(s) + \big(|R(t)| - |R(s)|\big)
   \le B(s) + |R(t)| - |R(s)|,
\]
hence
\[
   V(t) \le |R(t)| - \big(|R(s)|-B(s)\big).
\]
Taking the supremum over $s\le t$ and using $V(t)\le R(t)$ yields
\[
   V(t) \le |R(t)| - \sup_{s\le t}\big(|R(s)|-B(s)\big)_{+} \;=: \; B^\star(t).
\]
Therefore $\PP\big(V(t)\le B^\star(t)\ \forall t\big)\ge 1-\delta$, and dividing by $1\vee |R(t)|$ gives the
stated FDP bound.
For the computable form, note that both $|R(s)|$ and \textbf{(typically)} $B(s)$ are right–continuous step functions
that only change when $s$ passes an observed $p$-value. Thus the supremum over $s\le t$ is attained at
some $s=p_j\le t$, and a short rearrangement yields
\[
   B^\star(t)=\min_{j:\,p_j\le t}\Big\{ B(p_j) + |R(p_{\le t})| - |R(p_j)| \Big\}.
\]
\end{proof}

\begin{proof}[Proof of Theorem \ref{thm:fdp-upper-bound-outlier-detection}]

(i) Let $r=|\cH_0|$. By the assumed guarantee for $G_r(\cdot)$ and Lemma \ref{lem:coupling}, with probability at least $1-\delta$,
\[
   \sum_{j\in \cH_0}\ind\{p_j\le t\}
   \le \sum_{j=1}^{r}\ind\{p_j^*\le t\}
   \le G_r(t)\qquad\forall t.
\]
Equivalently,
\(
   \sum_{j\in \cH_0}\ind\{p_j> t\} \ge r - G_r(t)
\),
and hence
\[
   \sum_{j=1}^m \ind\{p_j> t\}
   \;\ge\; r - G_r(t)\qquad\forall t.
\]
By the definition of $\widehat m_0$ as the largest $r$ satisfying the same system of inequalities, it follows
that $\widehat m_0\ge |\cH_0|$ with probability at least $1-\delta$.

(ii) Let $r=|H_0|$ and define the event
\[
\mathcal E_r \;=\; \Bigl\{\sum_{j=1}^{r}\ind\{p_j^\ast \le t\}\le G_r(t)\ \text{for all } t\in[0,1]\Bigr\}.
\]
By the theorem's assumption applied at $r=|H_0|$, we have $\mathbb P(\mathcal E_r)\ge 1-\delta$.
On $\mathcal E_r$, Lemma~\ref{lem:coupling}  yields the coordinate-wise coupling
\[
\sum_{j\in H_0}\ind\{p_j\le t\}
\;\le\;
\sum_{j=1}^{r}\ind\{p_j^\ast\le t\}
\;\le\;
G_r(t)\qquad\forall t\in[0,1].
\]
Still on $\mathcal E_r$, part (i) shows $\widehat m_0\ge r$, hence
\[
B(t)\;=\;\sup_{k\le \widehat m_0}G_k(t)\;\ge\;G_r(t)\quad\text{for all }t.
\]
Combining the previous displays, on $\mathcal E_r$ we have $|R(t)\cap H_0|\le B(t)$ for all $t$, so
\[
\mathbb P\bigl(|R(t)\cap H_0|\le B(t)\ \forall t\bigr)\ \ge\ 1-\delta.
\]
Finally, applying Proposition~\ref{prop:tighten} with this simultaneous upper bound $B(t)$ gives
\[
\mathbb P\Bigl(\mathrm{FDP}(R(t))\le \frac{B^\ast(t)}{1\vee |R(t)|}\ \text{for all }t\Bigr)\ \ge\ 1-\delta,
\]
where $B^\star(t)=\min_{j:p_j\le t}\{B(p_j)+|R(p_{\le t})|-|R(p_j)|\}$.
\end{proof}

\section{Proofs for Section \ref{sec:conformal-selection}\label{app:conformal-selection}
}
\begin{proof}[Proof of Lemma \ref{lem:conformal-selection-fdp-inequality}]
From the definition of the selection set $\cR(t)$, we have
\begin{equation*}
   \begin{aligned}
    |\cR(t) \cap \cH_0| &= \sum_{j=1}^m \ind\{p_j \le t, Y_{n+j}
    \le c_{n+j}\}\\
    &\overset{}{\le} \sum_{j=1}^m \ind\{p_j \le t, V(X_{n+j}, Y_{n+j}, c_{n+j})\le V(X_{n+j},c_{n+j}, c_{n+j})\}\\
    &= \sum_{j=1}^m \ind\{p_j \le t, p_j^* \le p_j\}\\
    &\le \sum_{j=1}^m \ind\{p_j^* \le t\}.
\end{aligned}
\end{equation*}
Here, the first inequality uses the fact that the nonconformity score $V$ is monotone. 

On the other hand, if the score function satisfies
the conditions in the lemma, then if $p_j^* \le t$, we must have $Y_{n+j} \le c_{n+j}$ because otherwise $p_j \ge \frac{1}{n+1}\sum_{i=1}^n \ind\set{V(X_i,Y_i,c_i) < V(X_{n+j},Y_{n+j},c_{n+j})} \ge \frac{1}{n+1}\sum_{i=1}^n \ind\set{Y_i\le c_i} = t $. So we have $$\ind\set{p_j^* \le t} = \ind\set{p_j^* \le t, Y_{n+j} \le c_{n+j}} = \ind\set{p_j^* \le t, p_j^* = p_j} = \ind\{p_j\le t, Y_{n+j}\le c_{n+j}\}.$$
\end{proof}

\begin{proof}[Proof of Theorem \ref{thm:fdp-upper-bound-cs}]
Let
\[
\mathcal E\;=\;\Bigl\{\frac{1}{m}\sum_{j=1}^{m}\ind\{p_j^\ast\le t\}\le G(t)\ \text{for all }t\in[0,1]\Bigr\},
\]
so $\mathbb P(\mathcal E)\ge 1-\delta$ by hypothesis. On $\mathcal E$, we have
\[
\sum_{j=1}^{m}\ind\{p_j^\ast\le t\}\ \le\ m\,G(t)\qquad\forall t.
\]
For the conformal selection set $R(t)=\{j: p^{\mathrm{CS}}_j\le t\}$, Lemma~4.3 gives
\[
|R(t)\cap H_0|\ \le\ \sum_{j=1}^{m}\ind\{p_j^\ast\le t\}\ \le\ m\,G(t)\qquad\forall t.
\]
Dividing by $1\vee |R(t)|$ yields
\[
\mathrm{FDP}(R(t))\ \le\ \frac{m\,G(t)}{1\vee |R(t)|}\qquad\forall t
\]
on $\mathcal E$, and therefore
\[
\mathbb P\Bigl(\mathrm{FDP}(R(t))\le \frac{m\,G(t)}{1\vee |R(t)|}\ \text{for all }t\Bigr)\ \ge\ 1-\delta.\qedhere
\]
\end{proof}

\end{document}